\documentclass[letterpaper, 10 pt, conference]{ieeeconf}
\IEEEoverridecommandlockouts                              
\usepackage{times}
\usepackage{multicol}
\usepackage{cite}
\usepackage{amsmath,amssymb,amsfonts}
\usepackage{bm}
\usepackage{graphicx}
\usepackage{tikz}
\usetikzlibrary{arrows.meta,positioning,matrix}
\usepackage{pgfplots}
\pgfplotsset{compat=1.16}
\let\labelindent\relax
\usepackage{enumitem}
\usepackage{algorithm}
\usepackage{algpseudocode}

\newtheorem{definition}{Definition}
\newtheorem{lemma}{Lemma}
\newtheorem{proposition}{Proposition}
\newtheorem{theorem}{Theorem}
\newtheorem{problem}{Problem}
\newtheorem{remark}{Remark}

\newtheorem{assumption}{Assumption}

\usepackage[export]{adjustbox}
\usepackage{float}
\usepackage{xcolor}
\usepackage{url}

\DeclareMathOperator*{\argmin}{arg\,min}
\DeclareMathOperator*{\argmax}{arg\,max}

\newcommand{\reals}{\mathbb{R}}
\newcommand{\nnreals}{\mathbb{R}_{\geq 0}}
\newcommand{\naturals}{\mathbb{N}}

\newcommand{\rulebook}{\mathcal{R}}
\newcommand{\ruleset}{R}
\newcommand{\rbrule}{r}
\newcommand{\realizations}{\Xi}
\newcommand{\preorder}{\lesssim}

\newcommand{\outcomes}{\Omega}
\newcommand{\algebra}{\Sigma}
\newcommand{\probability}{\mathit{Pr}}

\newcommand{\CVaR}{\text{CVaR}}

\newcommand{\risk}{\rho}

\newcommand{\threshold}{\gamma}

\newcommand{\rrrule}{e}

\newcommand{\policies}{\Pi}
\newcommand{\policy}{\pi}

\newcommand{\rbleq}{\lesssim_{\rulebook}}

\newcommand{\rblt}{<_{\rulebook}}

\newcommand{\radius}{\operatorname{rad}}

\title{
  \LARGE \bf
  Risk-Aware Optimal Control with Rulebooks
}

\author{
  Tichakorn Wongpiromsarn$^{1}$
  \thanks{*This work was supported in part by NSF under Grant
    CNS-2141153.
    The author gratefully acknowledges Konstantin Slutsky for valuable discussions and
    guidance that strengthened the mathematical rigor of this work.}
  \thanks{$^{1}$Tichakorn Wongpiromsarn is with the Department of Computer Science,
    Iowa State University, Ames, IA 50010
    {\tt\small nok@iastate.edu}
  }%
}

\begin{document}

\maketitle
\thispagestyle{empty}
\pagestyle{empty}

\begin{abstract}
  We consider safety-critical control problems involving multiple requirements with different
  priorities and uncertainty in their evaluation.
  We represent these requirements using risk-aware rulebooks,
  where each requirement is assigned a risk measure and an acceptable threshold,
  and a priority relation is defined among the requirements.
  Each requirement induces a risk-evaluation function that maps a policy to the risk associated with
  its violation.
  We formulate risk-aware optimal control with rulebooks as a lexicographic optimization problem
  over excess risks and develop an anytime filtering and branch-and-bound algorithm that
  progressively tightens the certified optimality gap while characterizing the
  corresponding set of policies at each priority level.
  The algorithm returns a policy together with these gaps, which bound its suboptimality.
  We prove that these gaps are valid for any finite computational budget and,
  under additional assumptions, converge to zero as the computational budget increases.
  We evaluate the algorithm on a synthetic benchmark with a known optimum and a
  realistic highway-merging simulation with CVaR-based collision, rear-braking, headway, and comfort
  rules.
\end{abstract}

\section{Introduction}
Control of safety-critical autonomous systems typically requires balancing several objectives,
including safety, regulatory requirements, progress, and other performance objectives,
which may have different priorities.
For example, in a highway merging maneuver, avoiding collision should have higher priority than obeying a
three-second headway rule, while headway in front and behind the ego vehicle may have the same
priority.
If all objectives cannot be simultaneously satisfied,
the controller should violate lower-priority objectives before higher-priority ones.
This creates a multi-objective optimization problem with a priority structure, not just a scalar cost
minimization problem.

Uncertain environments (e.g., interactions with other road users) further complicate
this problem because the same policy can produce different outcomes,
making its performance uncertain.
A standard formulation in stochastic control and reinforcement learning is to optimize the expected
cumulative cost or reward associated with a sequence of decisions
\cite{Bertsekas:2019:Reinforcement,Sutton:2018:Reinforcement},
but expectation can obscure the risk posed by rare outcomes with severe consequences.
Other risk measures capture different aspects of uncertainty.
Worst-case risk considers the most adverse possible outcome,
Value at Risk (VaR) specifies a quantile of the distribution of a quantity of interest, and
Conditional Value at Risk (CVaR) measures the expected risk in the tail beyond the VaR threshold
\cite{Majumdar:2020:How}.
CVaR is a standard example of a coherent tail-risk measure and admits a useful optimization
representation
\cite{Artzner:1999:Coherent,Rockafellar:2000:Optimization,Pflug:2000:Some}.
Risk-sensitive and risk-constrained formulations have been studied in MDPs
\cite{Chow:2015:RiskSensitive},
stochastic shortest-path planning \cite{Ahmadi:2021:RiskAverse},
reinforcement learning \cite{Chow:2018:RiskConstrained},
stochastic system verification \cite{Lindemann:2023:Risk},
and risk-aware motion planning and control
\cite{Hakobyan:2019:Risk,Yang:2026:Risk,Nyberg:2021:Risk}.
These works evaluate, optimize, or constrain risk within a specified model, but they do not directly
address policy selection under multiple prioritized risk criteria.

Rulebooks provide a formal way to describe priorities among requirements and objectives,
with each represented by a rule
\cite{Censi:2019:Liability,Wongpiromsarn:2026:Formal}.
Risk-aware rulebooks extend this formalism to uncertain environments by assigning a risk measure and a
threshold to each rule \cite{Wongpiromsarn:2026:Risk}.
Different rules can use different risk measures, including expected value,
worst-case risk, VaR, CVaR, or other measures.
Policies with risk below the threshold of a rule are considered equally acceptable for that
rule.
Thus, once a high-priority rule is certified to be below its threshold,
further reducing its risk provides no additional benefit,
and the algorithm can proceed to the next rule level.

This threshold-based priority structure leads to a different computational problem from standard
risk-sensitive or risk-constrained optimization.
A particularly related MDP planning method first optimizes CVaR and then, among all CVaR-optimal
policies, optimizes expected value \cite{Rigter:2022:Planning}.
Our formulation extends this idea by
allowing more than two rule levels,
allowing each level to use its own risk measure and threshold,
and applying when the risk-evaluation functions are black boxes rather than being derived from a fixed
MDP model.
Existing lexicographic optimization methods
\cite{Isermann:1982:Linear,Miettinen:1999:Nonlinear,Ehrgott:2005:Multicriteria},
lexicographic MDPs \cite{Wray:2015:Multi-Objective}, and lexicographic MPC methods
\cite{Rasekhipour:2018:Lexicographic} handle prioritized objectives and constraints.
However, they rely on problem-specific structure, such as explicit objective and constraint
functions, transition and reward models, or system dynamics and stage costs.
References \cite{Sui:2015:Safe,Turchetta:2019:Safe} also consider optimization and exploration
of unknown functions under Lipschitz assumptions,
but focus on safe exploration rather than prioritized optimization with rule-specific risk measures
and thresholds.

This paper develops an algorithm for computing an optimal policy under risk-aware rulebooks when the
risk-evaluation functions are black boxes.
We assume that each risk-evaluation function is Lipschitz continuous but
do not require gradients, convexity, or an analytic expression.
Our algorithm builds on the branch-and-bound principle of maintaining lower and upper
bounds over regions of the search space
\cite{Piyavskii:1972:Algorithm,Shubert:1972:Sequential,Horst:1996:Global}.
Unlike classical branch-and-bound methods,
which use these bounds to guide refinement toward a global minimizer of a single objective,
our algorithm maintains bounds at each rule level to eliminate regions
that cannot contain an optimizer with respect to the rule levels processed so far.
Thus, our refinement strategy 
focuses on characterizing the entire set of policies that remain potentially optimal and
providing optimality-gap guarantees under a finite computational budget,
rather than identifying a single global optimizer as in classical branch-and-bound.


The contributions of this paper are as follows.
First, we formulate risk-aware optimal control with rulebooks as a lexicographic optimization
problem over excess risks.
This extends existing risk-aware control formulations by allowing each rule to have its own risk
measure and threshold together with explicit priorities among rules.
Second, we introduce an anytime filtering and branch-and-bound algorithm for black-box
risk-evaluation functions.
Unlike conventional sequential approaches to lexicographic optimization,
our algorithm does not require each rule level to be solved exactly before proceeding to the
next.
Instead, it uses certified bounds obtained within a finite computational budget to
propagate certified policy sets across the rule levels.
Third, we prove that the certified optimality gap returned by the algorithm is valid for any finite
computational budget and, under additional assumptions, converges to zero as the computational budget
increases.
Fourth, we demonstrate the method on a synthetic benchmark with a known optimum and
on a realistic highway merging simulation using CVaR as the risk measure for collision,
rear-braking, headway, and comfort rules.

\section{Preliminaries}
\label{sec:prelim}

Let $\realizations$ denote the set of objects to be evaluated, e.g.,
trajectories that describe the states of all relevant agents in the world.
A rulebook compares these objects by quantifying their violations of each rule
and ordering the rules by priority \cite{Censi:2019:Liability,Wongpiromsarn:2026:Formal}.
Throughout the paper, $\reals$ and $\nnreals$ denote the set of real and non-negative real
numbers, respectively and $[x]_+ = \max\{x,0\}$ for any $x \in \reals$.

\begin{definition}
  \label{def:rule}
  A \emph{rule} is a function $\rbrule : \realizations \to \mathbb{R}_{\geq 0}$ that measures the
  degree of violation of its argument.
\end{definition}

$\rbrule(x) < \rbrule(y)$ means that  $x$ violates $\rbrule$ less than $y$ does.
In particular, $\rbrule(x) = 0$ indicates that  $x$ satisfies the rule.
Although we call $\rbrule$ a ``rule'', it can encode not only a legal or hard constraint
but also a performance criterion, such as comfort, efficiency, or user preference.

\begin{definition}
  \label{def:rulebook}
  A rulebook is a tuple $\rulebook = \langle \ruleset, \preorder \rangle$,
  where $\ruleset$ is a set of rules and $\preorder$ is a preorder on $\ruleset$ specifying their
  relative priority.
\end{definition}

A preorder generalizes a partial order by allowing distinct elements $s_{1}$ and $s_{2}$ to
satisfy both $s_1 \preorder s_2$ and $s_2 \preorder s_1$.
Throughout the paper, we assume that $\ruleset$ is finite.
For $\rbrule,\rbrule' \in \ruleset$, we write $\rbrule > \rbrule'$
to denote that $\rbrule$ has strictly higher priority than $\rbrule'$, i.e.,
$\rbrule \not\preorder \rbrule'$ and $\rbrule' \preorder \rbrule$.
The preorder allows three cases:
(1) strict priority ($\rbrule > \rbrule'$)
(2) incomparability ($\rbrule \not\preorder \rbrule'$ and $\rbrule' \not\preorder \rbrule$),
and
(3) equal rank ($\rbrule \preorder \rbrule'$ and $\rbrule' \preorder \rbrule$).
This structure generalizes standard weighted formulations,
which correspond to placing all rules at equal rank, and
classical lexicographic formulations, which correspond to a total priority ordering.
As shown in \cite{Wongpiromsarn:2026:Formal}, the classical rulebook formalism
generalizes temporal-logic-based planning and optimization-based control in the deterministic
setting.

It can be proved \cite{Censi:2019:Liability} that a rulebook $\rulebook$ induces a preorder
$\rbleq$ on $\realizations$.
The interpretation of $x \rbleq y$ is that $x$ is at least as good as $y$.
Formally, $x \rbleq y$ if, for every rule $\rbrule \in \ruleset$ for which
$\rbrule(x) > \rbrule(y)$, there exists a higher-priority rule $\rbrule' > \rbrule$ such that
$\rbrule'(x) < \rbrule'(y)$.
We say that $x$ is ``strictly better than'' $y$, denoted by $x \rblt y$,
if $x \rbleq y$ but $y \not\rbleq x$.


\section{Problem Formulation}
Consider a system operating under uncertainty.
Let $(\outcomes,\algebra,\probability)$ be a probability space that models the uncertainty, which
may arise from system dynamics, sensing, state-estimation and prediction errors, the behavior of
external agents, or other factors.
Let $\policies \subset \reals^{d}$ be a set of policies available to the system.

Consider a rulebook $\rulebook = \langle \ruleset, \preorder \rangle$,
where each rule $\rbrule_{i} \in \ruleset$ is associated with a risk measure
$\risk_{i}$ and threshold $\threshold_{i}$.
For each policy $\policy \in \policies$, let
$\rbrule_{i}^{\policy} : \outcomes \to \nnreals$ be a measurable nonnegative random variable
representing the violation of $\rbrule_{i}$ when $\policy$ is executed.
Its distribution captures the uncertainty in
evaluating $\rbrule_{i}$ under policy $\policy$ and is induced by the probability measure
$\probability$ on $(\outcomes,\algebra,\probability)$.
Define the \emph{risk-evaluation function}
$q_{i} : \policies \to \nnreals$ associated with $\rbrule_{i}$ by
$q_{i}(\policy) = \risk_{i}(\rbrule_{i}^{\policy})$.
The value $q_{i}(\policy)$ represents the risk incurred by policy $\policy$ with respect to rule
$\rbrule_{i}$ as quantified by the risk measure $\risk_{i}$.
Note that $\risk_{i}$ may differ across rules and can be any risk measure,
including expected value, worst-case, VaR, or CVaR.

The threshold $\threshold_{i}$ specifies the largest acceptable risk for rule $\rbrule_{i}$.
Risks at or below this threshold are treated as equally acceptable for $\rbrule_{i}$,
allowing lower-priority rules to influence the decision once an acceptable level of performance
with respect to $\rbrule_{i}$ has been achieved.
We say that $\policy \in \policies$ \emph{satisfies} rule $\rbrule_{i}$ if
$q_{i}(\policy) \leq \threshold_{i}$.
A rule $\rbrule_{i}$ is \emph{satisfiable on a set} $\policies' \subseteq \policies$ if some
$\policy \in \policies'$ satisfies it.
For convenience, define the \emph{excess risk function}
$\rrrule_{i} : \policies \to \nnreals$ by
$\rrrule_{i}(\policy) = [q_{i}(\policy) - \threshold_{i}]_+$.
This value is zero exactly when $\policy$ satisfies $\rbrule_{i}$, and otherwise equals the amount
by which its risk exceeds $\threshold_{i}$.
Policies are compared according to the rulebook preorder.

\begin{definition}
  A policy $\policy \in \policies$ is \emph{at least as good as}
  $\policy' \in \policies$, denoted by
  $\policy \rbleq \policy'$,
  if for every rule $\rbrule_{i} \in \ruleset$ such that
  $\rrrule_{i}(\policy) > \rrrule_{i}(\policy')$,
  there exists a higher-priority rule $\rbrule_{j} > \rbrule_{i}$ such that
  $\rrrule_{j}(\policy) < \rrrule_{j}(\policy')$.
\end{definition}

\begin{definition}
  A policy $\policy \in \policies$ is \emph{strictly better than}
  $\policy' \in \policies$, denoted by
  $\policy \rblt \policy'$,
  if $\policy \rbleq \policy'$ but
  $\policy' \not\rbleq \policy$.
\end{definition}

\begin{problem}[Risk-aware optimal control with rulebooks]
  \label{prob:risk-aware}
  Compute an optimal policy $\policy^{\star} \in \policies$ such that
  $\policy \not\rblt \policy^{\star}$ for every $\policy \in \policies$.
\end{problem}

We make the following assumptions to ensure that Problem~\ref{prob:risk-aware}
is well posed (see Proposition~\ref{prop:well-posedness}).

\begin{assumption}
  \label{ass:compact}
  The policy space $\policies \subset \reals^{d}$ is compact and nonempty.
\end{assumption}

\begin{assumption}
  \label{ass:lipschitz}
  For every rule $\rbrule_{i} \in \ruleset$, the risk-evaluation function
  $q_{i} : \policies \to \nnreals$ is Lipschitz continuous with known constant $L_{i}$, i.e.,
  $|q_{i}(\policy) - q_{i}(\policy')| \leq L_{i} \|\policy - \policy'\|_{2}$
  for all $\policy, \policy' \in \policies$.
\end{assumption}


\begin{remark}
  As proved in \cite{Wongpiromsarn:2026:Risk},
  the relation $\rbleq$ on
  $\policies$ is a preorder.
  Hence the induced strict relation $\rblt$ is acyclic, which prevents preference cycles
  (e.g., policy $\policy_{1}$ is strictly better than $\policy_{2}$, $\policy_{2}$ is strictly
  better than $\policy_{3}$ but $\policy_{3}$ is strictly better than $\policy_{1}$)
  and gives a well-defined notion of optimality.
\end{remark}

\section{Anytime Lexicographic Optimization via Filtering and Branch-and-Bound}
As shown in \cite{Wongpiromsarn:2026:Formal}, any rulebook $\rulebook$ can be refined into a rulebook
$\rulebook_T=\langle \ruleset_T,\preorder_T\rangle$,
whose priority relation $\preorder_T$ is a total order.
The refined rulebook is consistent with the original in the sense
that no preference from $\rulebook$ is contradicted, i.e.,
for any two policies $\policy, \policy' \in \policies$,
$\policy \rbleq \policy' \Longrightarrow \policy \preorder_{\rulebook_{T}} \policy'$ and
$\policy \rblt \policy' \Longrightarrow \policy <_{\rulebook_{T}} \policy'$.
Although policies that are incomparable under $\rulebook$ may become comparable
under $\rulebook_{T}$, an optimal policy for $\rulebook_T$ is also optimal for $\rulebook$.

In the remainder of the paper, we assume without loss of generality that the rulebook
has been refined into a total order with
$\ruleset_{T} = \{\rbrule_1,\ldots,\rbrule_m\}$ and
$\rbrule_{i} >_{T} \rbrule_{i+1}$ for all $i$.
Problem~\ref{prob:risk-aware} can then be formulated as a
lexicographic optimization problem over the excess-risk vector
$\rrrule(\policy)=
\bigl(\rrrule_1(\policy),\ldots,\rrrule_m(\policy)\bigr)$.
Since each $q_i$ is Lipschitz continuous, the corresponding excess risk function
$\rrrule_i$ is continuous.

Given any rule level $i \in \{1, \ldots, m\}$ and compact subset $\policies' \subseteq \policies$,
let $\alpha_i(\policies')=\min_{\policy\in \policies'}\rrrule_i(\policy)$
denote the smallest excess risk with respect to rule $\rbrule_{i}$ achievable over the policy set
$\policies'$.
Since $\policies'$ is compact and $\rrrule_{i}$ is continuous,
the extreme value theorem ensures that the minimum is
attained and hence $\alpha_i(\policies')$ is well-defined \cite{Royden:2023:Real}.
The lexicographic structure naturally induces a sequence of nested sets of
candidate policies.
Let $\policies^{\star}_{0} = \policies$, and for each $i \in \{1, \ldots, m\}$, define
$\policies^{\star}_{i} = \{\policy \in \policies^{\star}_{i-1} \mid
\rrrule_{i}(\policy) = \alpha_{i}(\policies^{\star}_{i-1})\}$.
In other words, $\policies^{\star}_{i}$ is the set of policies that are lexicographically optimal
with respect to the first $i$ rules and
the rulebook-optimal policy set is $\policies^{\star}_m$.
Starting from the nonempty compact set $\policies$,
the extreme value theorem and continuity of $\rrrule_i$
ensure that each $\policies^{\star}_{i}$ is compact and nonempty,
so the construction is well defined.
Together with the consistency of the refinement,
this establishes existence of an optimal policy for Problem~\ref{prob:risk-aware}.

\begin{proposition}
  \label{prop:well-posedness}
  Problem~\ref{prob:risk-aware} admits an optimal policy.
\end{proposition}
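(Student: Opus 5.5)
I would prove the proposition by showing that every element of $\policies^{\star}_m$ solves Problem~\ref{prob:risk-aware}, working first with the refined total-order rulebook $\rulebook_T$ and then transferring back to $\rulebook$ via consistency of the refinement.

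\textbf{Step 1 (nonemptiness and compactness).} I would show by induction on $i$ that each $\policies^{\star}_i$ is nonempty and compact. The base case is $\policies^{\star}_0=\policies$, which is nonempty and compact by Assumption~\ref{ass:compact}. Each $\rrrule_i=[q_i-\threshold_i]_+$ is continuous as a composition of the Lipschitz function $q_i$ (Assumption~\ref{ass:lipschitz}) with the continuous map $[\cdot]_+$. By the extreme value theorem, $\alpha_i(\policies^{\star}_{i-1})$ is therefore attained. The set $\policies^{\star}_i=\policies^{\star}_{i-1}\cap\rrrule_i^{-1}(\{\alpha_i(\policies^{\star}_{i-1})\})$ is then nonempty, and it is closed in a compact set, hence compact. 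Fix any $\policy^\star\in\policies^{\star}_m$.

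\textbf{Step 2 (optimality under $\rulebook_T$).} Suppose some $\policy$ satisfies $\policy<_{\rulebook_T}\policy^\star$. Under the total order, $\policy\preorder_{\rulebook_T}\policy^\star$ and $\policy^\star\not\preorder_{\rulebook_T}\policy$. The second condition means there is a rule $\rbrule_i$ with $\rrrule_i(\policy^\star)>\rrrule_i(\policy)$ such that no higher rule $j<i$ has $\rrrule_j(\policy^\star)<\rrrule_j(\policy)$.

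Take the smallest index $k$ at which $\rrrule_k(\policy)\neq\rrrule_k(\policy^\star)$; such a $k$ exists because of $i$. There are two cases:
\begin{itemize}
\item If $\rrrule_k(\policy)>\rrrule_k(\policy^\star)$, then $\policy\preorder_{\rulebook_T}\policy^\star$ requires a higher rule $j<k$ where $\policy$ is strictly better. This contradicts the minimality of $k$.
\item Hence $\rrrule_k(\policy)<\rrrule_k(\policy^\star)$ and $\rrrule_j(\policy)=\rrrule_j(\policy^\star)$ for all $j<k$.
\end{itemize}
By induction on $j$, equality on the first $k-1$ levels gives $\policy\in\policies^{\star}_{k-1}$: if $\policy\in\policies^{\star}_{j-1}$, then $\rrrule_j(\policy)=\rrrule_j(\policy^\star)=\alpha_j(\policies^{\star}_{j-1})$, so $\policy\in\policies^{\star}_j$. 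Consequently $\rrrule_k(\policy)<\alpha_k(\policies^{\star}_{k-1})$, contradicting the definition of $\alpha_k$ as a minimum over $\policies^{\star}_{k-1}$. So no policy is strictly better than $\policy^\star$ under $\rulebook_T$.

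\textbf{Step 3 (transfer to $\rulebook$).} If some $\policy<_{\rulebook}\policy^\star$, the stated consistency of the refinement gives $\policy<_{\rulebook_T}\policy^\star$, which Step~2 rules out. Hence $\policy^\star$ is optimal for Problem~\ref{prob:risk-aware}.

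The main obstacle is Step~2: the rulebook relation is defined through the "for every worse rule there is a higher better rule" clause rather than directly as lexicographic order, so it takes care to identify it with lexicographic comparison of $\rrrule(\cdot)$ under a total order. Picking the first index of disagreement is the device that resolves this.
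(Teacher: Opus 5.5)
Your proof is correct and follows essentially the same route as the paper: you build the nested lexicographic argmin sets $\policies^{\star}_i$, get their nonemptiness and compactness from continuity of $\rrrule_i$ and the extreme value theorem, and transfer optimality from $\rulebook_T$ to $\rulebook$ via consistency of the refinement. Your Step~2 uses the first index of disagreement to show that no policy is strictly better than an element of $\policies^{\star}_m$ under $\rulebook_T$; this makes explicit what the paper only asserts, namely that $\policies^{\star}_m$ is the rulebook-optimal set for the totally ordered rulebook.
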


In many applications, high-priority rules correspond to safety requirements
and are associated with thresholds that define acceptable levels of risk.
Once the risk falls below the prescribed threshold,
further reducing it provides no additional preference under that rule.
Lower-priority rules then determine the preference among policies
satisfying the higher-priority rule.
Optimization is required only for rules whose thresholds cannot be achieved.
Formally, let
$\policies_i=\{\policy\in\policies \mid q_i(\policy)\leq \threshold_i\}$
denote the set of policies that satisfy rule $\rbrule_i$.
If $\rbrule_i$ is satisfiable on $\policies^{\star}_{i-1}$,
then $\alpha_i(\policies^{\star}_{i-1})=0$ and
$\policies^{\star}_i=\policies^{\star}_{i-1}\cap \policies_i$.
Thus, all policies in $\policies^{\star}_{i-1}\cap \policies_i$
are considered equally acceptable with respect to $\rbrule_{i}$
and the $i$th step is reduced to a feasibility-filtering step.
On the other hand, if $\rbrule_i$ is not satisfiable on $\policies^{\star}_{i-1}$, then
$\rrrule_i(\policy)=q_i(\policy)-\threshold_i$ on $\policies^{\star}_{i-1}$, so minimizing
$\rrrule_i$ is equivalent to minimizing $q_i$.

This observation motivates an algorithm that combines viability-style feasibility filtering with
Lipschitz branch-and-bound optimization to progressively restrict the set of candidate policies as
it processes the rule hierarchy.
The main design principle is to keep two kinds of information:
a box collection, which provides a certified lower bound,
and certified ``incumbents'', i.e., concrete feasible policies already found by the algorithm
(using branch-and-bound terminology),
which can be returned by the algorithm and provide a certified upper bound.

\subsection{Admissible Domains, Box Bounds and Certificates}
Under a finite computational budget, computing the exact sets
$\policies^{\star}_i$ is generally infeasible,
since it requires solving a global optimization or global level-set characterization problem for a
black-box risk-evaluation function.
We therefore compute a sequence of certified admissible domains that
contain the exact lexicographic solution sets.

\begin{definition}
  The \emph{certified admissible domains} are defined recursively by
  $D_{0}=\policies$ and
  \begin{equation}
    D_i
    =
    \{\policy \in D_{i-1} \mid \rrrule_i(\policy) \le \overline{\alpha}_i\},
    \label{eq:certified-domain}
  \end{equation}
  where $\overline\alpha_i \in \nnreals$ satisfies $\overline\alpha_i\ge\alpha_i(D_{i-1})$.
\end{definition}

By construction, each certified admissible domain contains the corresponding exact
lexicographic solution set, i.e.,
$\policies^{\star}_i\subseteq D_i$.
To quantify this relaxation, define
$\varepsilon_i=\overline\alpha_i-\alpha_i(D_{i-1})$.
Then, $D_{i}$ is precisely the
$\varepsilon_i$-optimal set for rule $\rbrule_{i}$ over the certified admissible domain $D_{i-1}$.
Because $\alpha_i(D_{i-1})$ is generally unknown,
the algorithm computes a certified upper bound on $\varepsilon_i$.
If $\varepsilon_i=0$ for every rule level,
then each certified admissible domain coincides with the exact lexicographic solution set, i.e.,
$D_{i} = \policies^{\star}_{i}$ for all rule levels $i$..

To compute $D_{i}$,
the algorithm represents subsets of $\reals^{d}$ by finite collections of closed axis-aligned boxes.
For a box $B\subseteq\reals^d$, let $c(B)$ be its center and define its covering radius as
$\radius(B)=\frac{\sqrt d}{2}l(B)$, where
$l(B) = \max_{j} l_{j}(B)$ is the length of the longest side of $B$ and $l_j(B)$ is the length
of its $j$th side.
Every point in $B$ lies within Euclidean distance $\radius(B)$ from $c(B)$.

For every rule $\rbrule_i$ and box $B$, define the box bounds
\begin{equation}
  \begin{split}
    \underline q_i(B) &= q_i(c(B))-L_i\radius(B),\\
    \overline q_i(B) &= q_i(c(B))+L_i\radius(B),\\
    \underline{\rrrule}_i(B) &= [\underline q_i(B)-\threshold_i]_{+}, \qquad
                               \overline{\rrrule}_i(B) = [\overline q_i(B)-\threshold_i]_{+},\\
    w_i(B) &= \overline q_i(B)-\underline q_i(B)=2L_i\radius(B)=L_i\sqrt d\,l(B).
  \end{split}
  \label{eq:LBUB}
\end{equation}
This construction assumes that $q_i(\policy)$ can be queried at policy points $c(B)$ for each
box $B$.
As shown in the following lemma,
$\underline q_i(B)$ and $\overline q_i(B)$ are certified lower and upper bounds on the
risk-evaluation function $q_i$ over $B\cap\policies$,
while $\underline{\rrrule}_i(B)$ and $\overline{\rrrule}_i(B)$ are
the corresponding bounds on $\rrrule_{i}$ and
certify how much a policy in $B\cap\policies$ can violate $\rbrule_i$ beyond the acceptable level
$\threshold_i$.
Finally, $w_i(B)$ quantifies the uncertainty width induced by Lipschitz continuity of $q_i$ over
$B$.

\begin{lemma}
  \label{lem:lipschitz-bounds}
  Suppose $c(B)\in\policies$ and
  $\underline q_i(B)$, $\overline q_i(B)$, $\underline{\rrrule}_i(B)$, $\overline{\rrrule}_i(B)$
  and $w_i(B)$ are defined by~\eqref{eq:LBUB}.
  Then, for every policy $\policy\in B\cap\policies$,
  \begin{eqnarray}
    \label{eq:q-bounds}
    &\underline q_i(B)
      \leq q_i(\policy)
      \leq \overline q_i(B),\\
    \label{eq:v-bounds}
    &\underline{\rrrule}_i(B)
      \leq \rrrule_i(\policy)
      \leq \overline{\rrrule}_i(B),\\
    \label{eq:v-width}
    &\overline{\rrrule}_i(B)-\underline{\rrrule}_i(B)\leq w_i(B).
  \end{eqnarray}
\end{lemma}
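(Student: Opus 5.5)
The argument is direct and has three steps: first the bounds on $q_i$ in~\eqref{eq:q-bounds}, then push them through the map $x\mapsto[x-\threshold_i]_+$ to get~\eqref{eq:v-bounds}, and finally use that this map is $1$-Lipschitz to get~\eqref{eq:v-width}.

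\textbf{Step 1: bounds on $q_i$.} Fix $\policy\in B\cap\policies$. Because $c(B)\in\policies$ by hypothesis, Assumption~\ref{ass:lipschitz} applies to the pair $(\policy,c(B))$ and gives
\[
|q_i(\policy)-q_i(c(B))|\le L_i\|\policy-c(B)\|_2 .
\]
Every point of $B$ is within Euclidean distance $\radius(B)$ of $c(B)$. This follows since each coordinate differs from the center by at most $l_j(B)/2\le l(B)/2$, so the distance is at most $\sqrt d\,l(B)/2=\radius(B)$. Hence $|q_i(\policy)-q_i(c(B))|\le L_i\radius(B)$. Unpacking the absolute value gives exactly~\eqref{eq:q-bounds}. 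The hypothesis $c(B)\in\policies$ is what makes the Lipschitz inequality, which is only stated on $\policies$, applicable here.

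\textbf{Step 2: bounds on $\rrrule_i$.} Let $\phi(x)=[x-\threshold_i]_+$. Since $\phi$ is nondecreasing on $\reals$, applying it to the three terms of~\eqref{eq:q-bounds} preserves the inequalities. This gives $\underline{\rrrule}_i(B)\le\rrrule_i(\policy)\le\overline{\rrrule}_i(B)$, which is~\eqref{eq:v-bounds}.

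\textbf{Step 3: width bound.} The map $\phi$ is $1$-Lipschitz, because $x\mapsto\max\{x,0\}$ is, as a short case check on the signs shows. Therefore
\[
\overline{\rrrule}_i(B)-\underline{\rrrule}_i(B)
\le \overline q_i(B)-\underline q_i(B)=2L_i\radius(B)=w_i(B),
\]
which is~\eqref{eq:v-width}. The left-hand side is nonnegative by monotonicity, so no absolute values are needed.

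No step is a real obstacle. The only points requiring care are the covering-radius claim in Step 1 (the coordinatewise estimate) and noting that $c(B)$ must lie in $\policies$ for Assumption~\ref{ass:lipschitz} to apply, which the lemma assumes.
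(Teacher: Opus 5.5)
Your proof is correct and follows the paper's argument essentially step for step. Lipschitz continuity on the pair $(\policy,c(B))$ plus the covering radius gives \eqref{eq:q-bounds}, monotonicity of $x\mapsto[x-\threshold_i]_+$ gives \eqref{eq:v-bounds}, and $[b]_+-[a]_+\le b-a$ for $a\le b$ gives \eqref{eq:v-width}; your coordinatewise check that every point of $B$ lies within $\radius(B)$ of $c(B)$ is a small addition, since the paper only asserts this when defining the covering radius.
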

\begin{proof}
  Since $c(B)\in\policies$ and $\policy\in\policies$,
  Assumption~\ref{ass:lipschitz} gives
  $|q_i(\policy)-q_i(c(B))|\leq L_i\|\policy-c(B)\|_2$.
  Since $\policy\in B$, $\|\policy-c(B)\|_2\leq\radius(B)$.
  Rearranging yields \eqref{eq:q-bounds}.
  Equation~\eqref{eq:v-bounds} follows from \eqref{eq:q-bounds} because
  $x\mapsto\max\{x-\threshold_i,0\}$ is monotone.
  Finally, by \eqref{eq:LBUB},
  $\overline q_i(B)-\underline q_i(B)=w_i(B)$.
  Since
  $[b]_{+}-[a]_{+}\le b-a$ for all $a\le b$,
  we obtain
  $\overline{\rrrule}_i(B)-\underline{\rrrule}_i(B)
  \le
  \overline q_i(B)-\underline q_i(B)
  =
  w_i(B)$.
\end{proof}

The algorithm does not depend on the particular construction in~\eqref{eq:LBUB},
which uses the box center $c(B)$ to compute the bounds.
Any construction is valid if the bounds satisfy
\eqref{eq:q-bounds}--\eqref{eq:v-width} for every policy $\policy\in B\cap\policies$ as
all correctness results rely only on these inequalities,
not how the bounds are computed.

The algorithm processes the rules sequentially.
As will be explained in Section~\ref{sec:alg},
when processing rule level $i$, it starts with a box collection $\mathcal{O}$
satisfying $D_{i-1}\subseteq \bigcup_{B\in\mathcal O}B$
together with a nonempty finite certified incumbent set $\mathcal C\subseteq D_{i-1}$.
It then iteratively prunes and refines $\mathcal{O}$.
Pruning removes boxes in $\mathcal O$ that cannot contain a minimizer of $\rrrule_{i}$ over
$D_{i-1}$, whereas refinement improves the resolution of $\mathcal O$.
Throughout this process, the algorithm ensures that $\mathcal O$ always covers all minimizers of
$\rrrule_i$ over $D_{i-1}$, i.e.,
$\argmin_{\policy\in D_{i-1}}\rrrule_i(\policy)\subseteq \bigcup_{B\in\mathcal O} B$.

To support this pruning and refinement process,
the algorithm maintains a lower certificate $\underline\alpha_i$,
an upper certificate $\overline\alpha_i$, and a certified optimality gap $\varepsilon_i$.
At the beginning of rule level $i$, these certificates are initialized as
\begin{equation}
  \underline\alpha_i=\min_{B\in\mathcal O}\underline{\rrrule}_i(B), \
  \overline\alpha_i = \min_{\policy\in\mathcal C}\rrrule_i(\policy), \
  \varepsilon_i = \overline\alpha_i-\underline\alpha_i.
  \label{eq:anytime-gap}
\end{equation}
As the box collection $\mathcal O$ is pruned and refined,
the current box-based lower bound $\min_{B\in\mathcal O}\underline{\rrrule}_i(B)$
may increase or decrease.
To ensure that $\underline\alpha_{i}$ is monotone nondecreasing,
the algorithm maintains $\underline\alpha_{i}$ as a running lower certificate
by updating it only when the newly computed box-based lower bound is larger than its current value:
\begin{equation}
  \underline\alpha_i \leftarrow
  \max\{\underline\alpha_i, \min_{B\in\mathcal O}\underline{\rrrule}_i(B)\}.
  \label{eq:cert-update}
\end{equation}
Since every box collection $\mathcal O$ encountered during
the pruning and refinement process continues to cover all minimizers
of $\rrrule_i$ over $D_{i-1}$,
each newly computed value of $\min_{B\in\mathcal O}\underline{\rrrule}_i(B)$
is a valid lower bound on $\alpha_{i}(D_{i-1})$
and so is their maximum, $\underline\alpha_i$.
In contrast, the certified incumbent set $\mathcal C$ only grows over time,
so simply recomputing $\overline \alpha_i$ and $\varepsilon_i$ using \eqref{eq:anytime-gap}
at each update is sufficient to ensure that $\overline\alpha_i$
and $\varepsilon_i$ are monotone nonincreasing,
without requiring the analogous running update used for $\underline\alpha_i$.

\begin{lemma}
  \label{lem:gap-certificate}
  Suppose that every initialization and update of
  $\underline\alpha_i$ and $\overline\alpha_i$ through
  \eqref{eq:anytime-gap} and \eqref{eq:cert-update}
  uses sets satisfying
  $\emptyset\neq\mathcal C\subseteq D_{i-1}$ and
  $\argmin_{\policy\in D_{i-1}}\rrrule_i(\policy) \subseteq \bigcup_{B\in\mathcal O} B$.
  Then,
  \[
    \underline\alpha_i \leq \alpha_i(D_{i-1}) \leq \overline\alpha_i.
  \]
\end{lemma}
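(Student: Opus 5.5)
The two inequalities will be proved separately, and each will first be established for a single initialization or update. The claim for the running values then follows by induction over the updates.

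\emph{Upper bound.} Every value of $\overline\alpha_i$ is computed via \eqref{eq:anytime-gap} as $\min_{\policy\in\mathcal C}\rrrule_i(\policy)$ for some finite nonempty $\mathcal C\subseteq D_{i-1}$. Since $\mathcal C$ is finite and nonempty, this minimum is attained at some $\hat\policy\in\mathcal C$. Because $\hat\policy\in D_{i-1}$, the definition $\alpha_i(D_{i-1})=\min_{\policy\in D_{i-1}}\rrrule_i(\policy)$ gives $\alpha_i(D_{i-1})\le\rrrule_i(\hat\policy)=\overline\alpha_i$. Here $D_{i-1}$ is compact, being obtained from the compact set $\policies$ by intersecting with sublevel sets of continuous functions. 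Hence $\alpha_i(D_{i-1})$ is well defined and attained, as noted earlier in the paper.

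\emph{Lower bound for one box collection.} Let $\mathcal O$ be any collection used in \eqref{eq:anytime-gap} or \eqref{eq:cert-update}. Since $D_{i-1}$ is compact and nonempty (it contains $\mathcal C\neq\emptyset$) and $\rrrule_i$ is continuous, the set $\argmin_{\policy\in D_{i-1}}\rrrule_i(\policy)$ is nonempty. Pick a minimizer $\policy^\ast$. By hypothesis, $\policy^\ast\in B$ for some $B\in\mathcal O$, and $\policy^\ast\in\policies$, so $\policy^\ast\in B\cap\policies$. Applying \eqref{eq:v-bounds} of Lemma~\ref{lem:lipschitz-bounds} gives $\underline{\rrrule}_i(B)\le\rrrule_i(\policy^\ast)=\alpha_i(D_{i-1})$. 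Therefore $\min_{B'\in\mathcal O}\underline{\rrrule}_i(B')\le\alpha_i(D_{i-1})$.

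Lemma~\ref{lem:lipschitz-bounds} assumes $c(B)\in\policies$. I plan to handle this with the remark after that lemma: the correctness results rely only on the inequalities \eqref{eq:q-bounds}--\eqref{eq:v-width} holding on $B\cap\policies$. So I will state that the box bounds in use are assumed valid in this sense, for instance because $c(B)\in\policies$ or because some other valid construction is used. I expect this to be the only delicate point, and it is a matter of careful bookkeeping rather than mathematics.

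\emph{Induction for the running lower certificate.} At initialization, $\underline\alpha_i$ equals such a minimum and is therefore at most $\alpha_i(D_{i-1})$. At each update, \eqref{eq:cert-update} replaces $\underline\alpha_i$ by the maximum of the previous value and a new box minimum. The previous value is at most $\alpha_i(D_{i-1})$ by the induction hypothesis, and the new box minimum is at most $\alpha_i(D_{i-1})$ by the single-collection argument. Since $D_{i-1}$, and hence $\alpha_i(D_{i-1})$, does not change during processing of level $i$, the maximum of two numbers each at most $\alpha_i(D_{i-1})$ is again at most $\alpha_i(D_{i-1})$. Combining this with the upper bound yields $\underline\alpha_i\le\alpha_i(D_{i-1})\le\overline\alpha_i$ at every stage.
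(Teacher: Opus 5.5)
Your proof is correct and follows essentially the same route as the paper. A minimizer $\policy^\star$ covered by some box bounds each box-based minimum $\min_{B\in\mathcal O}\underline{\rrrule}_i(B)$ via \eqref{eq:v-bounds}, the running maximum inherits that bound (you argue by induction, the paper writes it as $\max_k\beta(k)$), and $\mathcal C\subseteq D_{i-1}$ gives the upper bound. Your remarks on nonemptiness of the minimizer set and on the validity of the box bounds only make explicit what the paper's proof leaves implicit.
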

\begin{proof}
  Consider an arbitrary $\policy^{\star} \in \argmin_{\policy\in D_{i-1}}\rrrule_i(\policy)$, i.e.,
  $\rrrule_{i}(\policy^{\star}) = \alpha_{i}(D_{i-1})$.
  For each certificate update $k$ during rule level $i$, let $\mathcal O(k)$ and $\mathcal C(k)$
  denote the corresponding current sets, and define
  $\beta(k) = \min_{B\in\mathcal O(k)}\underline{\rrrule}_i(B)$.
  By assumption,
  $\argmin_{\policy\in D_{i-1}}\rrrule_i(\policy)\subseteq \bigcup_{B\in\mathcal O(k)} B$,
  so there exists $B^{\star}\in\mathcal O(k)$ with $\policy^\star\in B^{\star}$.
  Hence
  \[
    \beta(k)
    = \min_{B\in\mathcal O(k)}\underline{\rrrule}_i(B)
    \leq \underline{\rrrule}_i(B^{\star})
    \leq \rrrule_i(\policy^\star)
    =\alpha_i(D_{i-1}).
  \]
  Since $\underline\alpha_i = \max_{k} \beta(k)$,
  with $\beta(k) \leq \alpha_i(D_{i-1})$ for all $k$,
  it follows that
  $\underline\alpha_i\leq \alpha_i(D_{i-1})$.
  Finally, since $\mathcal C\subseteq D_{i-1}$,
  it follows that
  $\overline\alpha_i
  =
  \min_{\policy\in\mathcal C}\rrrule_i(\policy)
  \ge
  \min_{\policy\in D_{i-1}}\rrrule_i(\policy)
  =
  \alpha_i(D_{i-1})$,
  which completes the proof.
\end{proof}


Based on \eqref{eq:v-bounds} and Lemma \ref{lem:gap-certificate},
$\underline{\rrrule}_{i}(B)$ provides a certified lower bound on
$\rrrule_{i}(\policy)$ for any $\policy \in B\cap\policies$,
while $\overline\alpha_i$ is a certified upper bound on the optimal
value of $\rrrule_{i}(\policy)$ for any $\policy \in D_{i-1}$.
As a result, if $\underline{\rrrule}_i(B) > \overline\alpha_{i}$,
then $B$ cannot contain a minimizer of $\rrrule_{i}$ over $D_{i-1}$.
This motivates the following notion of a competitive box and the associated sound pruning result.

\begin{definition}
  A box $B\in\mathcal O$ is \emph{competitive} for rule $\rbrule_i$ if
  $\underline{\rrrule}_i(B)\leq\overline\alpha_i$.
\end{definition}

\begin{lemma}
  \label{lem:sound-pruning}
  Suppose $\emptyset\neq\mathcal C\subseteq D_{i-1}$.
  Then, any box $B$ with $\underline{\rrrule}_i(B)>\overline\alpha_i$ satisfies
  $B\cap\argmin_{\policy\in D_{i-1}}\rrrule_i(\policy) = \emptyset$.
\end{lemma}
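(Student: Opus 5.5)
Argue by contradiction. Suppose some $\policy^{\star}\in B\cap\argmin_{\policy\in D_{i-1}}\rrrule_i(\policy)$ exists. Then $\policy^{\star}\in D_{i-1}\subseteq\policies$, so $\policy^{\star}\in B\cap\policies$, and $\rrrule_i(\policy^{\star})=\alpha_i(D_{i-1})$. We then bound $\rrrule_i(\policy^{\star})$ from below and from above.

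For the lower bound, apply Lemma~\ref{lem:lipschitz-bounds}, specifically \eqref{eq:v-bounds}, to the box $B$. This gives $\underline{\rrrule}_i(B)\le\rrrule_i(\policy^{\star})$. If the construction \eqref{eq:LBUB} is used, this requires $c(B)\in\policies$. Otherwise we rely on the paper's remark that the correctness results use only the inequalities \eqref{eq:q-bounds}--\eqref{eq:v-width} for boxes in the algorithm, so we take \eqref{eq:v-bounds} as available for $B$.

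For the upper bound, $\overline\alpha_i$ is defined by \eqref{eq:anytime-gap} as $\min_{\policy\in\mathcal C}\rrrule_i(\policy)$. This minimum is well defined because $\mathcal C$ is finite and nonempty. Since $\mathcal C\subseteq D_{i-1}$, the upper-bound half of the argument in Lemma~\ref{lem:gap-certificate} gives
\[
\overline\alpha_i\ge\min_{\policy\in D_{i-1}}\rrrule_i(\policy)=\alpha_i(D_{i-1}).
\]
This half uses only $\emptyset\neq\mathcal C\subseteq D_{i-1}$ and not the covering hypothesis. It is therefore available under exactly the hypothesis stated here.

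Chaining the two bounds gives
\[
\underline{\rrrule}_i(B)\le\rrrule_i(\policy^{\star})=\alpha_i(D_{i-1})\le\overline\alpha_i,
\]
which contradicts $\underline{\rrrule}_i(B)>\overline\alpha_i$. Hence $B$ contains no minimizer.

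The only delicate point is the applicability of \eqref{eq:v-bounds} to $B$, i.e., whether the box bounds are certified on $B\cap\policies$. I would state explicitly that $B$ is a box whose bounds satisfy Lemma~\ref{lem:lipschitz-bounds}, for instance with $c(B)\in\policies$. Everything else is a direct chain of inequalities.
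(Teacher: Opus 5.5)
Your proof is correct and is essentially the paper's argument. The paper argues directly that every $\policy\in B\cap\policies$ satisfies $\rrrule_i(\policy)\ge\underline{\rrrule}_i(B)>\overline\alpha_i=\rrrule_i(\policy^{\rm inc})\ge\alpha_i(D_{i-1})$ for an incumbent $\policy^{\rm inc}\in\mathcal C\subseteq D_{i-1}$, which is your chain of inequalities phrased without contradiction. Your caveat that \eqref{eq:v-bounds} must actually hold on $B$ (e.g., $c(B)\in\policies$ under the center construction) is an assumption the paper also uses, but only implicitly.
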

\begin{proof}
  For every $\policy\in B\cap\policies$, \eqref{eq:v-bounds} gives
  $\rrrule_i(\policy)\geq\underline{\rrrule}_i(B)>\overline\alpha_i$.
  By definition of $\overline\alpha_i$ in \eqref{eq:anytime-gap}
  and the assumption that $\emptyset\neq\mathcal C \subseteq D_{i-1}$, there exists an incumbent
  $\policy^{\rm inc}\in D_{i-1}$ with $\rrrule_i(\policy^{\rm inc})=\overline\alpha_i$.
  Therefore, for every $\policy\in B\cap\policies$, $\rrrule_i(\policy) > \rrrule_i(\policy^{\rm inc})
  \geq \min_{\policy\in D_{i-1}}\rrrule_i(\policy)$.
  Therefore,
  $B\cap\argmin_{\policy\in D_{i-1}}\rrrule_i(\policy) =\emptyset$.
\end{proof}

For a satisfiable rule, $\alpha_i(D_{i-1})=0$.
Boxes with $\overline q_i(B)\leq\threshold_i$ are certified feasible, boxes with
$\underline q_i(B)>\threshold_i$ are certified infeasible, and the remaining boxes are ambiguous.
Once $\overline\alpha_i=0$ has been certified, the pruning test removes the certified infeasible boxes
and leaves only certified feasible or ambiguous boxes.
For an unsatisfiable rule, the same pruning test is branch-and-bound optimization on
$\rrrule_i$.
The following lemma provides the stopping criterion for a satisfiable rule.
Once every box in $\mathcal O$ satisfies $\overline q_i(B)\leq\threshold_i$,
the set $D_{i-1}\cap\policies_i$ has been characterized exactly,
so no further subdivision is required at rule level $i$.

\begin{lemma}
  \label{lem:threshold-stopping}
  Suppose $D_{i-1}\subseteq\bigcup_{B\in\mathcal O}B$ and let
  $\mathcal O^+=\{B\in\mathcal O\mid \underline q_i(B)\leq\threshold_i\}$.
  Then, $D_{i-1}\cap\policies_i \subseteq D_{i-1}\cap\bigcup_{B\in\mathcal O^+}B$.
  If, in addition, $\overline q_i(B)\leq\threshold_i$ for every $B\in\mathcal O^+$,
  then
  $D_{i-1}\cap\policies_{i} =D_{i-1}\cap\bigcup_{B\in\mathcal O^+}B$.
\end{lemma}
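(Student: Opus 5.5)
The proof is a direct set-inclusion argument built on the box bounds \eqref{eq:q-bounds} of Lemma~\ref{lem:lipschitz-bounds}. I will prove the two inclusions separately and then combine them to get the equality under the extra hypothesis.

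\emph{First inclusion.} Take any $\policy\in D_{i-1}\cap\policies_i$, so $\policy\in D_{i-1}\subseteq\policies$ and $q_i(\policy)\le\threshold_i$.
\begin{itemize}
\item Since $D_{i-1}\subseteq\bigcup_{B\in\mathcal O}B$, there is a box $B\in\mathcal O$ with $\policy\in B$.
\item Hence $\policy\in B\cap\policies$, and \eqref{eq:q-bounds} gives $\underline q_i(B)\le q_i(\policy)\le\threshold_i$.
\item Therefore $B\in\mathcal O^+$, and $\policy\in D_{i-1}\cap\bigcup_{B\in\mathcal O^+}B$.
\end{itemize}

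\emph{Reverse inclusion, under the extra hypothesis.} Take $\policy\in D_{i-1}\cap B$ for some $B\in\mathcal O^+$.
\begin{itemize}
\item Again $\policy\in B\cap\policies$, so the upper bound in \eqref{eq:q-bounds} applies.
\item It gives $q_i(\policy)\le\overline q_i(B)\le\threshold_i$, so $\policy\in\policies_i$.
\item Hence $\policy\in D_{i-1}\cap\policies_i$.
\end{itemize}
Combining this with the first inclusion yields the equality.

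The only delicate point is the standing hypothesis of Lemma~\ref{lem:lipschitz-bounds}, namely $c(B)\in\policies$, which is needed to invoke \eqref{eq:q-bounds}. Box centers can lie outside $\policies$, so I will not rely on that lemma's proof directly. Instead I will appeal to the paper's remark that all correctness results use only the inequalities \eqref{eq:q-bounds}--\eqref{eq:v-width} for $\policy\in B\cap\policies$. So I will state explicitly that the box bounds are assumed valid on $B\cap\policies$ for every box in $\mathcal O$, for example because the center construction is applied only when $c(B)\in\policies$ or a valid alternative bound is used. With that assumption made explicit, both inclusions are one-line consequences of \eqref{eq:q-bounds}.
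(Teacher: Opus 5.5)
Your proof is correct and matches the paper's argument essentially line for line: both inclusions follow from applying the lower and upper bounds in \eqref{eq:q-bounds} to a box of $\mathcal O$ (respectively $\mathcal O^+$) that contains the point. Your explicit caveat that \eqref{eq:q-bounds} must hold on $B\cap\policies$ is consistent with the paper. The paper requires \textsc{Bounds} to return certificates satisfying \eqref{eq:q-bounds}--\eqref{eq:v-width}, and under Assumption~\ref{ass:exact} it guarantees $c(B)\in\policies$ via Lemma~\ref{lem:box-containment}.
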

\begin{proof}
  We first show that
  $D_{i-1}\cap\policies_i \subseteq D_{i-1}\cap\bigcup_{B\in\mathcal O^+}B$.
  Consider an arbitrary $\policy\in D_{i-1}\cap\policies_i$.
  Since $\mathcal O$ covers $D_{i-1}$, there exists a box
  $B\in\mathcal O$ such that $\policy\in B$.
  By \eqref{eq:q-bounds},
  $\underline q_i(B)\le q_i(\policy)$.
  Since $\policy \in \policies_{i}$, by definition,
  $q_i(\policy)\le\threshold_i$.
  Combining these inequalities, we get
  $\underline q_i(B)\le q_i(\policy)\le\threshold_i$.
  Hence, by definition, $B\in\mathcal O^+$ and therefore
  $\policy\in\bigcup_{B\in\mathcal O^+}B$.
  This proves the first inclusion.

  Now suppose that every box $B\in\mathcal O^+$ satisfies
  $\overline q_i(B)\le\threshold_i$.
  We show the reverse inclusion.
  Consider an arbitrary $\policy\in D_{i-1}\cap\bigcup_{B\in\mathcal O^+}B$.
  Then $\policy\in D_{i-1}$ and $\policy\in B$
  for some $B\in\mathcal O^+$.
  Applying \eqref{eq:q-bounds} again gives
  $q_i(\policy)\le\overline q_i(B)\le\threshold_i$,
  so $\policy\in\policies_i$.
  Therefore,
  $D_{i-1}\cap\bigcup_{B\in\mathcal O^+}B \subseteq D_{i-1}\cap\policies_i$.
  The two inclusions together imply
  $D_{i-1}\cap\policies_{i} =D_{i-1}\cap\bigcup_{B\in\mathcal O^+}B$.
\end{proof}

\subsection{Certified Anytime Algorithm}
\label{sec:alg}
Algorithm~\ref{alg:anytime-rulebook} implements the recursion in \eqref{eq:certified-domain}.
Its state at the beginning of rule level $i$ is a certified admissible domain
$D_{i-1}$, a box collection $\mathcal O$, and a finite certified incumbent set $\mathcal C$.
As proved later in Lemma~\ref{lem:level-invariants}, these sets satisfy
$D_{i-1}\subseteq\bigcup_{B\in\mathcal O}B$ and
$\emptyset \not= \mathcal C\subseteq D_{i-1}$.
Lines~\ref{alg:init-domain}--\ref{alg:init-incumbents} initialize
these sets with $D_0=\policies$.
The initial box collection $\mathcal O$ may be any finite collection of
closed boxes whose union contains $\policies$, e.g., a bounding box of $\policies$.
Such a collection exists because by Assumption~\ref{ass:compact},
the set $\policies$ is compact.
The initial certified incumbent set $\mathcal C$ may be any nonempty finite subset of $\policies$,
e.g., a sampled or user-provided policy in $\policies$.
The nonemptiness of $\mathcal C$ is needed only to define a finite upper certificate
$\overline\alpha_i$.

The helper \textsc{Bounds} in lines
\ref{alg:compute-bounds} and \ref{alg:child-bounds} computes and stores
$\underline q_i(B)$, $\overline q_i(B)$, $\underline{\rrrule}_i(B)$,
$\overline{\rrrule}_i(B)$, and $w_i(B)$ for each box $B$ in the first argument.
It may use any certificate satisfying~\eqref{eq:q-bounds}--\eqref{eq:v-width},
e.g., the center construction~\eqref{eq:LBUB} when applicable.
The helper \textsc{InitCert} in
line~\ref{alg:init-cert} initializes the certificates
$\overline\alpha_i$, $\underline\alpha_i$,
and $\varepsilon_i$ according to~\eqref{eq:anytime-gap}.
The helper \textsc{UpdateCert} in line~\ref{alg:update-cert} updates
$\underline\alpha_i$ according to
\eqref{eq:cert-update}, while recomputing $\overline\alpha_i$ and $\varepsilon_i$ according
to~\eqref{eq:anytime-gap}.

The algorithm employs three routines that prune box collection $\mathcal O$.
When $\overline\alpha_i>0$,
line~\ref{alg:positive-prune} calls competitive pruning, \textsc{CPrune}$(\mathcal O,\overline\alpha_i,i)$,
which returns the set of competitive boxes
$\{B\in\mathcal O\mid \underline{\rrrule}_i(B)\leq\overline\alpha_i\}$, as justified by
Lemma~\ref{lem:sound-pruning}.
When $\overline\alpha_i=0$, then Lemma~\ref{lem:gap-certificate} and nonnegativity of $\rrrule_i$
imply $\alpha_i(D_{i-1})=0$.
In this case, line~\ref{alg:threshold-prune} calls threshold pruning,
\textsc{TPrune}$(\mathcal O,i)$,
which returns
$\{B\in\mathcal O\mid \underline q_i(B)\leq\threshold_i\}$, as justified by
Lemma~\ref{lem:threshold-stopping}.
Finally, domain pruning, \textsc{DPrune}$(\mathcal O,i)$ in
lines~\ref{alg:domain-prune-init} and \ref{alg:domain-prune-update}
removes boxes that are not competitive for some previous rule level $j < i$.
Specifically, for $i=1$,
\textsc{DPrune}$(\mathcal O, i)$ simply returns $\mathcal O$.
For $i>1$, it returns
$\{B\in\mathcal O\mid\underline{\rrrule}_j(B)\leq\overline\alpha_j,\ \forall j<i\}$.
Although \textsc{CPrune} and \textsc{TPrune} remove boxes that are not competitive for the current
rule level $i$, bisection (lines~\ref{alg:select-box}--\ref{alg:update-cover})
may create child boxes that are no longer competitive for an earlier rule
level $j<i$, even when their parent box was.
Thus, \textsc{DPrune} is invoked before \textsc{InitCert} and \textsc{UpdateCert} to remove
such boxes and ensure that  $\underline\alpha_{i}$ is computed only from boxes that have not already
been certified to lie outside $D_{i-1}$.

The set $\mathcal R$ of refinable boxes is constructed in line~\ref{alg:set-refine-positive}
when $\overline\alpha_{i} > 0$ and line~\ref{alg:threshold-refine} when $\overline\alpha_{i} = 0$.
When $\overline\alpha_{i} > 0$, $\mathcal R$ is simply the box collection
remaining after \textsc{CPrune}.
When $\overline\alpha_{i} = 0$, line~\ref{alg:threshold-refine} calls
\textsc{TRefine}$(\mathcal O,i)$,
which returns the set of ambiguous boxes
$\{B\in\mathcal O\mid \overline q_i(B)>\threshold_i\}$.
If no ambiguous box remains, line~\ref{alg:threshold-stop} terminates the current level using
Lemma~\ref{lem:threshold-stopping}.
Otherwise, lines~\ref{alg:select-box}--\ref{alg:update-cover}
bisect the widest refinable box along the longest dimension.
Specifically, suppose $B^{\star}=\prod_{k=1}^{d}[a_k,b_k]$.
Choose the longest side $j\in\argmax_k(b_k-a_k)$.
Split at the middle $m_j=(a_j+b_j)/2$ and
generate
$B^- = \bigl(\prod_{k<j}[a_k,b_k]\bigr)\times[a_j,m_j]\times
\bigl(\prod_{k>j}[a_k,b_k]\bigr)$
and
$B^+ = \bigl(\prod_{k<j}[a_k,b_k]\bigr)\times[m_j,b_j]\times
\bigl(\prod_{k>j}[a_k,b_k]\bigr)$.

The helper \textsc{CSamples}$(\{B^{+}, B^{-}\}, D_{i-1})$
in line~\ref{alg:sample-incumbents} returns a finite set of sampled
policies from $B^{+}$ and $B^{-}$ that are certified to lie in $D_{i-1}$.
The sampling procedure may select any policies from $B^{+}$ and $B^{-}$, e.g., their centers, and
returns only those certified to lie in $D_{i-1}$.
For $i=1$, certification simply requires checking that a sampled policy $\policy$
lies in $\policies$.
For $i>1$, this can be done by checking that $\rrrule_j(\policy)\leq\overline\alpha_j$ for all $j<i$.
For efficiency, \textsc{CSamples} may certify an entire box by checking that
$\overline{\rrrule}_j(B)\leq\overline\alpha_j$ for all $j<i$,
which guarantees that every policy in $B$ lies in $D_{i-1}$.

After processing rule level $i$,
line~\ref{alg:set-domain} calls \textsc{Propagate}, which conceptually returns the set in
\eqref{eq:certified-domain} and line~\ref{alg:filter-incumbents} filters the incumbents to
$\mathcal C\cap D_i$ before the next level.
In practice, however, the algorithm does not have to construct $D_{i}$ explicitly.
Since $D_{i}$ is only used for membership tests
(in \textsc{CSamples} and line~\ref{alg:filter-incumbents}),
it suffices to store only the threshold $\overline\alpha_i$,
from which membership in $D_{i}$ can be determined.
Also, note that the filtered set $\mathcal C\cap D_i$
in line~\ref{alg:filter-incumbents} is nonempty because
$\argmin_{\policy\in\mathcal C}\rrrule_i(\policy)\subseteq \mathcal C\cap D_i$.
Finally, \textsc{Best}$(\mathcal C,m)$
in line~\ref{alg:return} returns any element of $\argmin_{\policy\in\mathcal C}\rrrule_m(\policy)$.

\begin{algorithm}[t]
  \caption{Anytime Optimization with Rulebooks}
  \label{alg:anytime-rulebook}
  \begin{algorithmic}[1]
    \State $D_0\gets\policies$ \label{alg:init-domain}
    \State initialize finite box collection $\mathcal O$ such that $D_{0} \subseteq \bigcup_{B\in\mathcal O}B$ \label{alg:init-cover}
    \State initialize nonempty finite $\mathcal C\subseteq D_0$ \label{alg:init-incumbents}
    \For{$i=1,\ldots,m$} \label{alg:for-level}
      \State $\mathcal O\gets\Call{DPrune}{\mathcal O,i}$ \label{alg:domain-prune-init}
      \State \Call{Bounds}{$\mathcal O,i$} \label{alg:compute-bounds}
      \State \Call{InitCert}{$\mathcal O,\mathcal C,i$} \label{alg:init-cert}
      \While{computational budget remains} \label{alg:while}
        \If{$\overline\alpha_i>0$} \label{alg:if-not-zero}
          \State $\mathcal O\gets\Call{CPrune}{\mathcal O,\overline\alpha_i,i}$ \label{alg:positive-prune}
          \State $\mathcal R\gets\mathcal O$ \label{alg:set-refine-positive}
        \Else
          \State $\mathcal O\gets\Call{TPrune}{\mathcal O,i}$ \label{alg:threshold-prune}
          \State $\mathcal R\gets\Call{TRefine}{\mathcal O,i}$ \label{alg:threshold-refine}
          \If{$\mathcal R=\emptyset$} \label{alg:if-no-refine}
            \State \textbf{break} \label{alg:threshold-stop}
          \EndIf
        \EndIf \label{alg:refinement-stop}
        \State $B^\star\gets\argmax_{B\in\mathcal R} w_i(B)$ \label{alg:select-box}
        \State $(B^+,B^-)\gets\Call{Bisect}{B^\star}$ \label{alg:split-box}
        \State $\mathcal O\gets\Call{Replace}{\mathcal O,B^\star,\{B^+,B^-\}}$ \label{alg:update-cover}
        \State $\mathcal O\gets\Call{DPrune}{\mathcal O,i}$ \label{alg:domain-prune-update}
        \State $\mathcal S\gets\Call{CSamples}{\{B^+, B^-\},D_{i-1}}$ \label{alg:sample-incumbents}
        \State $\mathcal C\gets\mathcal C\cup\mathcal S$ \label{alg:add-incumbents}
        \State \Call{Bounds}{$\{B^+,B^-\},i$} \label{alg:child-bounds}
        \State \Call{UpdateCert}{$\mathcal O,\mathcal C,i$} \label{alg:update-cert}
      \EndWhile
      \State $D_i\gets\Call{Propagate}{D_{i-1},\overline\alpha_i,i}$ \label{alg:set-domain}
      \State $\mathcal C\gets\mathcal C\cap D_i$ \label{alg:filter-incumbents}
    \EndFor
    \State return \Call{Best}{$\mathcal C,m$} and gaps $\varepsilon_{1:m}$ \label{alg:return}
  \end{algorithmic}
\end{algorithm}

\subsection{Correctness of Algorithm~\ref{alg:anytime-rulebook}}
To prove the correctness of Algorithm~\ref{alg:anytime-rulebook},
we begin by showing that the pruning step preserves the coverage property required by the subsequent
correctness proof.
For $a\in\mathbb R$, define
$\widehat D_i(a)=\{\policy\in D_{i-1}\mid \rrrule_i(\policy)\leq a\}$.

\begin{lemma}
  \label{lem:current-pruning-preserves-cover}
  Suppose $\widehat D_i(\overline\alpha_i)\subseteq\bigcup_{B\in\mathcal O}B$ holds
  immediately before the conditional statement (line~\ref{alg:if-not-zero}).
  Then, the same inclusion holds immediately after
  the execution of \textsc{CPrune} (line~\ref{alg:positive-prune}) when $\overline\alpha_{i} > 0$,
  and immediately after the execution of \textsc{TPrune} (line~\ref{alg:threshold-prune})
  when $\overline\alpha_{i} = 0$.
\end{lemma}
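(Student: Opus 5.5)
The plan is a direct pointwise argument. Let $\mathcal O$ be the collection immediately before line~\ref{alg:if-not-zero} and $\mathcal O'$ the pruned collection. Fix an arbitrary $\policy\in\widehat D_i(\overline\alpha_i)$, so $\policy\in D_{i-1}\subseteq\policies$ and $\rrrule_i(\policy)\le\overline\alpha_i$. By hypothesis there is a box $B\in\mathcal O$ with $\policy\in B$, hence $\policy\in B\cap\policies$ and the bounds of Lemma~\ref{lem:lipschitz-bounds} apply to $\policy$. It then suffices to show that this $B$ survives the pruning, i.e.\ $B\in\mathcal O'$. The value of $\overline\alpha_i$ does not change between line~\ref{alg:if-not-zero} and the pruning call, so the set $\widehat D_i(\overline\alpha_i)$ is the same before and after.

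\emph{Case $\overline\alpha_i>0$ (\textsc{CPrune}).} By \eqref{eq:v-bounds}, $\underline{\rrrule}_i(B)\le\rrrule_i(\policy)\le\overline\alpha_i$. So $B$ is competitive and is retained by \textsc{CPrune}, which keeps exactly the boxes with $\underline{\rrrule}_i(B)\le\overline\alpha_i$.

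\emph{Case $\overline\alpha_i=0$ (\textsc{TPrune}).} Here $\rrrule_i(\policy)\le 0$, and since $\rrrule_i\ge 0$, we get $\rrrule_i(\policy)=[q_i(\policy)-\threshold_i]_+=0$, i.e.\ $q_i(\policy)\le\threshold_i$. By \eqref{eq:q-bounds}, $\underline q_i(B)\le q_i(\policy)\le\threshold_i$, so $B$ is retained by \textsc{TPrune}. This is essentially the first inclusion of Lemma~\ref{lem:threshold-stopping}, applied to $\widehat D_i(0)=D_{i-1}\cap\policies_i$ instead of the covering hypothesis on $D_{i-1}$.

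There is no real obstacle here. The only points needing care are two. First, $\policy\in\policies$, so the Lipschitz bounds are legitimate; any implicit requirement such as $c(B)\in\policies$ is absorbed by the paper's convention that \textsc{Bounds} returns valid certificates satisfying \eqref{eq:q-bounds}--\eqref{eq:v-width}. Second, in the zero case one must convert the excess-risk condition into the raw-risk condition $q_i(\policy)\le\threshold_i$, because \textsc{TPrune} tests $\underline q_i$ rather than $\underline{\rrrule}_i$.
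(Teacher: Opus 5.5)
Your proof is correct and follows the paper's argument essentially verbatim: a pointwise check that the box containing any $\policy\in\widehat D_i(\overline\alpha_i)$ survives pruning, using \eqref{eq:v-bounds} for \textsc{CPrune} and, after reducing $\rrrule_i(\policy)\le 0$ to $q_i(\policy)\le\threshold_i$, \eqref{eq:q-bounds} for \textsc{TPrune}. Your extra remarks, that $\overline\alpha_i$ is unchanged before pruning and that $\policy\in\policies$, are harmless clarifications the paper leaves implicit.
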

\begin{proof}
  First, suppose $\overline\alpha_{i}>0$.
  Consider an arbitrary $\policy\in\widehat D_i(\overline\alpha_i)$.
  By definition,
  $\rrrule_i(\policy)\leq \overline\alpha_{i}$.
  Also, by assumption, $\widehat D_i(\overline\alpha_i)\subseteq\bigcup_{B\in\mathcal O}B$,
  so there exists $B\in\mathcal O$ with $\policy\in B$
  immediately before line~\ref{alg:positive-prune}.
  By~\eqref{eq:v-bounds},
  $\underline{\rrrule}_i(B)\leq\rrrule_i(\policy)\leq \overline\alpha_{i}$,
  so \textsc{CPrune} retains $B$.
  Hence, $\widehat D_i(\overline\alpha_i)\subseteq\bigcup_{B\in\mathcal O}B$ after
  line~\ref{alg:positive-prune}.

  Now suppose $\overline\alpha_{i}=0$.
  Then, $\widehat D_i(\overline\alpha_{i})=D_{i-1}\cap\policies_i$.
  Consider an arbitrary $\policy\in\widehat D_i(\overline\alpha_i)$.
  Since $\policy\in\policies_i$, by definition,
  $q_i(\policy)\leq\threshold_i$.
  Also, by assumption, $\widehat D_i(\overline\alpha_i)\subseteq\bigcup_{B\in\mathcal O}B$,
  so there exists $B\in\mathcal O$ with $\policy\in B$
  immediately before line~\ref{alg:threshold-prune}.
  By~\eqref{eq:q-bounds},
  $\underline q_i(B)\leq q_i(\policy)\leq\threshold_i$,
  so \textsc{TPrune} retains $B$.
  Hence, $\widehat D_i(\overline\alpha_{i})\subseteq\bigcup_{B\in\mathcal O}B$ after
  line~\ref{alg:threshold-prune}.
\end{proof}

We now establish the invariants of a fixed rule level in the next four lemmas.
For these lemmas, fix an arbitrary rule level $i \in \{1, \ldots, m\}$ and
assume that Assumption~\ref{ass:lipschitz} holds, and that,
at the beginning of level $i$, immediately before line~\ref{alg:domain-prune-init},
$D_{i-1}$ is compact,
$D_{i-1}\subseteq\bigcup_{B\in\mathcal O}B$, and
$\mathcal C$ is a nonempty finite subset of $D_{i-1}$.
Let
\begin{equation}
  M_i=\argmin_{\policy\in D_{i-1}}\rrrule_i(\policy),
  \label{eq:Mi}
\end{equation}
denote the set of minimizers of $\rrrule_i$ over $D_{i-1}$.
Since $D_{i-1}$ is compact and $\rrrule_i$ is continuous,
$M_i$ is nonempty and compact.

\begin{lemma}
  \label{lem:Mi}
  Immediately after the execution of \textsc{InitCert} (line~\ref{alg:init-cert})
  and immediately before and after every execution of \textsc{UpdateCert}
  (line~\ref{alg:update-cert}) during level $i$,
  $M_i\subseteq\widehat D_i(\overline\alpha_i)$.
\end{lemma}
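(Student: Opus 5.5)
The inclusion reduces to a single inequality between certificates: it suffices to show that at each of the listed moments $\alpha_i(D_{i-1})\leq\overline\alpha_i$. Indeed, take any $\policy\in M_i$. Then $\policy\in D_{i-1}$ and $\rrrule_i(\policy)=\alpha_i(D_{i-1})$. So once $\alpha_i(D_{i-1})\leq\overline\alpha_i$ is known, $\rrrule_i(\policy)\leq\overline\alpha_i$, which means $\policy\in\widehat D_i(\overline\alpha_i)$ by definition of $\widehat D_i$. Notably, this part of the argument does not need the covering property of $\mathcal O$.

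The upper-certificate half of Lemma~\ref{lem:gap-certificate} gives the needed inequality. Its proof uses only $\emptyset\neq\mathcal C\subseteq D_{i-1}$ together with $\overline\alpha_i=\min_{\policy\in\mathcal C}\rrrule_i(\policy)$, so the main task is to verify that these hold whenever $\overline\alpha_i$ is set.

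\textbf{At \textsc{InitCert}.} By the standing hypothesis at the start of level $i$, $\mathcal C$ is a nonempty finite subset of $D_{i-1}$. Line~\ref{alg:domain-prune-init} and line~\ref{alg:compute-bounds} do not modify $\mathcal C$. Since $\mathcal C$ is finite and nonempty, the minimum in \eqref{eq:anytime-gap} is attained.

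\textbf{Between updates.} Within the while loop, $\mathcal C$ changes only at line~\ref{alg:add-incumbents}, where it becomes $\mathcal C\cup\mathcal S$. By the specification of \textsc{CSamples}, $\mathcal S$ is a finite set of policies certified to lie in $D_{i-1}$. Hence $\mathcal C$ stays a nonempty finite subset of $D_{i-1}$, and it only grows.

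\textbf{Before each \textsc{UpdateCert}.} The value of $\overline\alpha_i$ is whatever the previous \textsc{InitCert} or \textsc{UpdateCert} call produced, since no other line modifies it. That value was computed from a valid $\mathcal C$, so the inequality carries over.

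\textbf{After each \textsc{UpdateCert}.} The value $\overline\alpha_i$ is recomputed from the current $\mathcal C$ via \eqref{eq:anytime-gap}, so the same argument applies.

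An induction over the iterations of the while loop therefore gives $\alpha_i(D_{i-1})\leq\overline\alpha_i$ at every listed moment, and the inclusion follows as above.

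The main obstacle is bookkeeping rather than mathematics. One must check that no line other than \textsc{InitCert} and \textsc{UpdateCert} alters $\overline\alpha_i$, and that none alters $\mathcal C$ except by adding certified members of $D_{i-1}$. In particular, \textsc{DPrune}, \textsc{CPrune}, \textsc{TPrune}, \textsc{Replace}, and \textsc{Bounds} act only on $\mathcal O$ and the stored box bounds. One must also note that $D_{i-1}$ is fixed throughout level $i$, so $M_i$ and $\alpha_i(D_{i-1})$ do not change during the loop.
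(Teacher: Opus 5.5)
Your proof is correct and follows the paper's argument. Both reduce the inclusion to $\alpha_i(D_{i-1})\le\overline\alpha_i$, which holds whenever $\overline\alpha_i$ is computed as a minimum over a nonempty $\mathcal C\subseteq D_{i-1}$; both then note that $\overline\alpha_i$ is unchanged between certificate computations and that line~\ref{alg:add-incumbents} only adds certified members of $D_{i-1}$. The one cosmetic difference is that you route the inequality through the upper half of Lemma~\ref{lem:gap-certificate}, whereas the paper states it directly.
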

\begin{proof}
  Immediately after line~\ref{alg:init-cert}, $\overline\alpha_i$ is computed from $\mathcal
  C\subseteq D_{i-1}$, which is nonempty by assumption.
  Hence $\alpha_i(D_{i-1})\leq\overline\alpha_i$, and therefore
  $M_i\subseteq\widehat D_i(\overline\alpha_i)$.

  During each while-loop iteration, no line before line~\ref{alg:update-cert} changes
  $\overline\alpha_i$.
  Therefore the inclusion that holds immediately after the preceding execution of
  \textsc{InitCert} or \textsc{UpdateCert} still holds immediately before line~\ref{alg:update-cert}.

  Immediately after line~\ref{alg:update-cert}, $\overline\alpha_i$ is recomputed from
  $\mathcal C\subseteq D_{i-1}$, which is nonempty because
  it is nonempty at the beginning of the level by assumption,
  and every preceding execution of line~\ref{alg:add-incumbents} adds
  $\mathcal S\subseteq D_{i-1}$ returned by \textsc{CSamples}.
  Thus $\alpha_i(D_{i-1})\leq\overline\alpha_i$, and hence
  $M_i\subseteq\widehat D_i(\overline\alpha_i)$ immediately after line~\ref{alg:update-cert}.
\end{proof}

\begin{lemma}
  \label{lem:cover-invariant}
  At the beginning of each iteration of the while loop,
  i.e., immediately before line~\ref{alg:if-not-zero},
  \begin{equation}
    \widehat D_i(\overline\alpha_i)
    \subseteq\bigcup_{B\in\mathcal O}B .
    \label{eq:loop-cover-induction}
  \end{equation}
\end{lemma}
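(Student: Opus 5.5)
Induction on the while-loop iterations, with the base case at the first entry to line~\ref{alg:if-not-zero} and the inductive step carrying the inclusion from the start of one iteration to the start of the next.

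\emph{Base case.} Immediately before line~\ref{alg:domain-prune-init} we have $D_{i-1}\subseteq\bigcup_{B\in\mathcal O}B$ by the level-$i$ hypothesis. For $i=1$, \textsc{DPrune} returns $\mathcal O$ unchanged. For $i>1$, it discards only boxes $B$ with $\underline{\rrrule}_j(B)>\overline\alpha_j$ for some $j<i$. Take any $\policy\in D_{i-1}\subseteq D_j$. Recursion~\eqref{eq:certified-domain} gives $\rrrule_j(\policy)\le\overline\alpha_j$. If $\policy\in B$, then \eqref{eq:v-bounds} gives $\underline{\rrrule}_j(B)\le\rrrule_j(\policy)\le\overline\alpha_j$. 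Hence any box containing a point of $D_{i-1}$ survives, so $D_{i-1}$ is still covered after line~\ref{alg:domain-prune-init}. Lines~\ref{alg:compute-bounds} and~\ref{alg:init-cert} do not change $\mathcal O$. Since $\widehat D_i(\overline\alpha_i)\subseteq D_{i-1}$ by definition, \eqref{eq:loop-cover-induction} holds on first entry.

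\emph{Inductive step.} Assume \eqref{eq:loop-cover-induction} holds before line~\ref{alg:if-not-zero}. By Lemma~\ref{lem:current-pruning-preserves-cover}, it still holds after \textsc{CPrune} or \textsc{TPrune}. The cover is unaffected by the choice of $\mathcal R$. If the loop breaks at line~\ref{alg:threshold-stop}, nothing remains to show. Bisection replaces $B^\star$ by $B^+\cup B^-=B^\star$, so the union of $\mathcal O$ is unchanged and the inclusion survives line~\ref{alg:update-cover}. \textsc{DPrune} at line~\ref{alg:domain-prune-update} preserves coverage of $D_{i-1}$-points by the base-case argument, and $\widehat D_i(\overline\alpha_i)\subseteq D_{i-1}$. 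Lines~\ref{alg:sample-incumbents}--\ref{alg:child-bounds} leave $\mathcal O$ and $\overline\alpha_i$ untouched. So, with the value of $\overline\alpha_i$ used during this iteration (call it $\overline\alpha_i^{\rm old}$), we have $\widehat D_i(\overline\alpha_i^{\rm old})\subseteq\bigcup_{B\in\mathcal O}B$ immediately before \textsc{UpdateCert}.

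\emph{Main obstacle: the update of $\overline\alpha_i$.} Line~\ref{alg:update-cert} recomputes $\overline\alpha_i$ as a minimum over $\mathcal C$, and $\mathcal C$ has only grown. Hence the new value satisfies $\overline\alpha_i^{\rm new}\le\overline\alpha_i^{\rm old}$. The sets $\widehat D_i(a)$ are monotone in $a$, so $\widehat D_i(\overline\alpha_i^{\rm new})\subseteq\widehat D_i(\overline\alpha_i^{\rm old})\subseteq\bigcup_{B\in\mathcal O}B$. The only care needed is to record the monotonicity of $\overline\alpha_i$ explicitly: $\mathcal C$ is never shrunk within the level, since filtering happens only at line~\ref{alg:filter-incumbents}, after the loop. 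Because $\overline\alpha_i$ is also unchanged elsewhere in the iteration, \eqref{eq:loop-cover-induction} holds at the start of the next iteration, which closes the induction.
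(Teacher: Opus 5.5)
Your proof is correct and follows essentially the same route as the paper. It has the same base case: \textsc{DPrune} retains every box containing a point of $D_{i-1}$, and $\widehat D_i(\overline\alpha_i)\subseteq D_{i-1}$. It has the same inductive step: Lemma~\ref{lem:current-pruning-preserves-cover}, union-preserving bisection, the same \textsc{DPrune} retention argument, and the observation that $\overline\alpha_i$ is nonincreasing because $\mathcal C$ only grows within the level, so $\widehat D_i(\overline\alpha_i^{\rm new})\subseteq\widehat D_i(\overline\alpha_i^{\rm old})$.
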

\begin{proof}
  We use induction over the iterations of the while loop to show that
  \eqref{eq:loop-cover-induction} holds at the beginning of every iteration.

  Before line~\ref{alg:domain-prune-init}, $D_{i-1}\subseteq\bigcup_{B\in\mathcal O}B$ by
  assumption.
  If $i=1$, \textsc{DPrune} does not remove any box from $\mathcal O$.
  Suppose $i>1$ and consider any $\policy\in D_{i-1}$.
  Since $D_{i-1}\subseteq\bigcup_{B\in\mathcal O}B$, there exists a box $B\in\mathcal O$ with
  $\policy\in B$.
  Since $\policy\in D_{i-1}$,
  $\rrrule_j(\policy)\leq\overline\alpha_j$ for all $j<i$.
  By~\eqref{eq:v-bounds},
  $\underline{\rrrule}_j(B)\leq\rrrule_j(\policy)\leq\overline\alpha_j$ for all $j<i$.
  Thus line~\ref{alg:domain-prune-init} retains $B$.
  Therefore $D_{i-1}\subseteq\bigcup_{B\in\mathcal O}B$ still holds after
  line~\ref{alg:domain-prune-init}.
  Since $\widehat D_i(\overline\alpha_i)\subseteq D_{i-1}\subseteq\bigcup_{B\in\mathcal O}B$,
  \eqref{eq:loop-cover-induction} holds before the first iteration of the while loop.

  The only lines in the while loop that can remove points from
  $\bigcup_{B\in\mathcal O}B$ are the pruning lines
  \ref{alg:positive-prune}, \ref{alg:threshold-prune}, and \ref{alg:domain-prune-update}.
  Note that while line~\ref{alg:update-cover} replaces $B^\star$ by $B^+$ and $B^-$,
  by construction $B^\star=B^+\cup B^-$, so $\bigcup_{B\in\mathcal O}B$ is unchanged.
  Additionally, the only line in the while loop that can change the value of
  $\overline\alpha_i$ is \ref{alg:update-cert}.
  We prove~\eqref{eq:loop-cover-induction} by induction over the while-loop iterations,
  considering only these relevant lines.

  Suppose \eqref{eq:loop-cover-induction} holds at the beginning of an arbitrary iteration, and let
  $a$ be the value of $\overline\alpha_i$ at that time.
  Lemma~\ref{lem:current-pruning-preserves-cover} gives
  $\widehat D_i(a)\subseteq\bigcup_{B\in\mathcal O}B$ after line~\ref{alg:positive-prune}
  when $a>0$ and after line~\ref{alg:threshold-prune} if $a=0$.
  Consider line~\ref{alg:domain-prune-update} and an
  arbitrary $\policy\in\widehat D_i(a)$.
  Since $\widehat D_i(a)\subseteq\bigcup_{B\in\mathcal O}B$ before this line,
  there exists $B\in\mathcal O$ with $\policy\in B$.
  Since $\policy\in\widehat D_i(a)$, by definition, $\policy\in D_{i-1}$, so
  $\underline{\rrrule}_j(B)\leq\rrrule_j(\policy)\leq\overline\alpha_j$ for all $j<i$.
  Thus, line~\ref{alg:domain-prune-update} retains $B$
  and $\widehat D_i(a)\subseteq\bigcup_{B\in\mathcal O}B$ immediately before
  line~\ref{alg:update-cert}.

  Finally, line~\ref{alg:update-cert} cannot increase $\overline\alpha_i$
  because $\mathcal C$ only grows in the while loop in line~\ref{alg:add-incumbents}.
  Thus, after line~\ref{alg:update-cert},
  $\widehat D_i(\overline\alpha_i) \subseteq \widehat D_i(a)$, and
  \eqref{eq:loop-cover-induction} holds at the beginning of the next iteration.
\end{proof}

\begin{lemma}
  \label{lem:certificate-update-invariants}
  Immediately before every execution of \textsc{InitCert} and \textsc{UpdateCert} during level $i$,
  \begin{equation}
    \emptyset\neq\mathcal C\subseteq D_{i-1}
    \quad\text{and}\quad
    M_i\subseteq\bigcup_{B\in\mathcal O}B .
    \label{eq:cert-update-hypotheses}
  \end{equation}
\end{lemma}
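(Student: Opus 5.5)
We split the claim into its two conjuncts, treating the InitCert execution (line~\ref{alg:init-cert}) and the UpdateCert executions (line~\ref{alg:update-cert}) separately.

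\emph{Incumbent set.} At the start of level $i$, $\mathcal C$ is a nonempty finite subset of $D_{i-1}$ by the standing hypothesis. Lines~\ref{alg:domain-prune-init}--\ref{alg:compute-bounds} do not touch $\mathcal C$, so $\emptyset\neq\mathcal C\subseteq D_{i-1}$ holds immediately before InitCert. Inside the while loop, the only modification of $\mathcal C$ is line~\ref{alg:add-incumbents}, which adds $\mathcal S\subseteq D_{i-1}$ returned by \textsc{CSamples}. A straightforward induction over iterations therefore shows that $\mathcal C$ stays nonempty (it only grows) and stays inside $D_{i-1}$ immediately before every UpdateCert.

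\emph{Coverage of $M_i$ before InitCert.} In the proof of Lemma~\ref{lem:cover-invariant}, we showed that $D_{i-1}\subseteq\bigcup_{B\in\mathcal O}B$ still holds after \textsc{DPrune} at line~\ref{alg:domain-prune-init}. Line~\ref{alg:compute-bounds} does not change $\mathcal O$. Since $M_i\subseteq D_{i-1}$, the inclusion follows.

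\emph{Coverage of $M_i$ before UpdateCert.} This is the main step. Fix an iteration and let $a$ be the value of $\overline\alpha_i$ at its beginning. By Lemma~\ref{lem:cover-invariant}, $\widehat D_i(a)\subseteq\bigcup_{B\in\mathcal O}B$ at line~\ref{alg:if-not-zero}. Tracing the body of the loop gives the following.
\begin{itemize}
\item By Lemma~\ref{lem:current-pruning-preserves-cover}, the inclusion survives \textsc{CPrune} or \textsc{TPrune}. If the loop breaks at line~\ref{alg:threshold-stop}, no UpdateCert occurs in that iteration, so that case is irrelevant.
\item \textsc{Replace} leaves the union unchanged, because $B^\star=B^+\cup B^-$.
\item The argument in the proof of Lemma~\ref{lem:cover-invariant} shows that \textsc{DPrune} at line~\ref{alg:domain-prune-update} preserves the inclusion.
\item Lines~\ref{alg:sample-incumbents}--\ref{alg:child-bounds} change neither $\mathcal O$ nor $\overline\alpha_i$.
\end{itemize}
Hence $\widehat D_i(a)\subseteq\bigcup_{B\in\mathcal O}B$ immediately before UpdateCert. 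Lemma~\ref{lem:Mi} gives $M_i\subseteq\widehat D_i(\overline\alpha_i)$ immediately before UpdateCert, and at that point $\overline\alpha_i=a$. Combining the two inclusions yields $M_i\subseteq\bigcup_{B\in\mathcal O}B$.

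The only delicate point is the bookkeeping: we must confirm that $\overline\alpha_i$ is the same value $a$ throughout the iteration up to line~\ref{alg:update-cert}, so that the covering statement and Lemma~\ref{lem:Mi} refer to the same set $\widehat D_i(a)$. This is already noted in the proof of Lemma~\ref{lem:Mi}.
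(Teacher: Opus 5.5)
Your proof is correct and matches the paper's argument: you get the incumbent conjunct from the fact that only line~\ref{alg:add-incumbents} modifies $\mathcal C$, coverage before \textsc{InitCert} from $M_i\subseteq D_{i-1}$ and the post-\textsc{DPrune} cover, and coverage before \textsc{UpdateCert} by combining $\widehat D_i(\overline\alpha_i)\subseteq\bigcup_{B\in\mathcal O}B$ with Lemma~\ref{lem:Mi}. The only difference is that you re-trace the loop body explicitly instead of citing the proof of Lemma~\ref{lem:cover-invariant}; this makes visible that $\overline\alpha_i$ stays fixed up to line~\ref{alg:update-cert}, but adds no new ingredient.
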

\begin{proof}
  We first prove $\emptyset\neq\mathcal C\subseteq D_{i-1}$ immediately before lines
  \ref{alg:init-cert} and \ref{alg:update-cert}.
  This holds before line~\ref{alg:init-cert} by assumption.
  During the while loop, the only line that changes $\mathcal C$ before line~\ref{alg:update-cert}
  is line~\ref{alg:add-incumbents}, which adds the finite set
  $\mathcal S\subseteq D_{i-1}$ returned by \textsc{CSamples}.
  Hence, $\emptyset\neq\mathcal C\subseteq D_{i-1}$ immediately before every execution of
  lines~\ref{alg:init-cert} and~\ref{alg:update-cert}.

  We next prove $M_i\subseteq\bigcup_{B\in\mathcal O}B$ immediately before line~\ref{alg:init-cert}.
  From the proof of Lemma~\ref{lem:cover-invariant},
  $D_{i-1}\subseteq\bigcup_{B\in\mathcal O}B$ holds after
  line~\ref{alg:domain-prune-init}, and hence
  $M_i\subseteq D_{i-1}\subseteq\bigcup_{B\in\mathcal O}B$ immediately before line~\ref{alg:init-cert}.

  It remains to prove $M_i\subseteq\bigcup_{B\in\mathcal O}B$ immediately before line~\ref{alg:update-cert}.
  From the proof of Lemma~\ref{lem:cover-invariant}, right before line~\ref{alg:update-cert},
  $\widehat D_i(\overline\alpha_i)\subseteq\bigcup_{B\in\mathcal O}B$.
  Lemma~\ref{lem:Mi}, applied immediately before line~\ref{alg:update-cert}, gives
  $M_i\subseteq\widehat D_i(\overline\alpha_i)\subseteq\bigcup_{B\in\mathcal O}B$
  immediately before line~\ref{alg:update-cert}.
\end{proof}

\begin{lemma}
  \label{lem:level-invariants}
  If the while loop terminates, then
  at the end of level $i$ (i.e., immediately after line~\ref{alg:filter-incumbents}),
  \begin{enumerate}[label=(\roman*)]
  \item $D_i$ is compact;
  \item $D_i\subseteq\bigcup_{B\in\mathcal O}B$;
  \item $\overline\alpha_i-\alpha_i(D_{i-1})\leq\varepsilon_i$;
  \item $\mathcal C$ is a nonempty finite subset of $D_i$.
  \end{enumerate}
\end{lemma}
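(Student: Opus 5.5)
I will prove the four claims separately, using the invariants of the fixed level $i$ established in Lemmas~\ref{lem:Mi}--\ref{lem:certificate-update-invariants}. Throughout, let $\overline\alpha_i$ be its value at loop exit. This is the last value produced by \textsc{InitCert} or \textsc{UpdateCert}, since no later line of level $i$ modifies it. Lines~\ref{alg:set-domain}--\ref{alg:filter-incumbents} only compute $D_i=\widehat D_i(\overline\alpha_i)$ and filter $\mathcal C$, so they leave $\mathcal O$ and the certificates unchanged.

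\textbf{(i) Compactness.} $D_i=\{\policy\in D_{i-1}\mid \rrrule_i(\policy)\le\overline\alpha_i\}$ is the intersection of the compact set $D_{i-1}$ with the preimage of $(-\infty,\overline\alpha_i]$ under the continuous function $\rrrule_i$. That preimage is closed, so $D_i$ is a closed subset of a compact set and hence compact.

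\textbf{(ii) Coverage.} There are two ways to exit the loop.
\begin{itemize}
\item If the budget check on line~\ref{alg:while} fails, control is at the loop head. Lemma~\ref{lem:cover-invariant} (whose inductive step also covers the state reached after the last executed iteration) then gives $D_i=\widehat D_i(\overline\alpha_i)\subseteq\bigcup_{B\in\mathcal O}B$.
\item If the loop exits through the \textbf{break} on line~\ref{alg:threshold-stop}, then $\overline\alpha_i=0$ and \textsc{TPrune} has just run. Lemma~\ref{lem:current-pruning-preserves-cover}, applied to the inclusion valid at the start of that iteration, shows the inclusion survives \textsc{TPrune}. Nothing else changes $\mathcal O$ before exit.
\end{itemize}
In both cases (ii) holds.

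\textbf{(iii) Gap bound.} Lemma~\ref{lem:certificate-update-invariants} verifies the hypotheses of Lemma~\ref{lem:gap-certificate} at every initialization and update. Hence $\underline\alpha_i\le\alpha_i(D_{i-1})$, where $\underline\alpha_i$ is the running maximum. Since $\varepsilon_i=\overline\alpha_i-\underline\alpha_i$ is computed from the final certificates by~\eqref{eq:anytime-gap}, we get $\overline\alpha_i-\alpha_i(D_{i-1})\le\overline\alpha_i-\underline\alpha_i=\varepsilon_i$. One bookkeeping check is needed: $\varepsilon_i$ must be recomputed whenever either certificate changes, so that it matches the final pair. This holds because both \textsc{InitCert} and \textsc{UpdateCert} recompute $\varepsilon_i$ together with the certificates.

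\textbf{(iv) Incumbents.} Before line~\ref{alg:filter-incumbents}, $\mathcal C$ is a finite, nonempty subset of $D_{i-1}$. It starts this way, and each step adds only the finite sets $\mathcal S\subseteq D_{i-1}$ returned by \textsc{CSamples}. After filtering, $\mathcal C\cap D_i$ is finite and contained in $D_i$. It is nonempty because $\overline\alpha_i=\min_{\policy\in\mathcal C}\rrrule_i(\policy)$ is a minimum over a finite nonempty set, so it is attained by some $\policy\in\mathcal C\subseteq D_{i-1}$ with $\rrrule_i(\policy)=\overline\alpha_i$, which means $\policy\in D_i$. This step requires $\overline\alpha_i$ to equal the minimum over the current $\mathcal C$. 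I must check this: \textsc{UpdateCert} is the last line of each iteration that touches $\mathcal C$ or $\overline\alpha_i$, and the break path adds nothing to $\mathcal C$ after the last update, so the two agree at exit.

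I expect the main obstacle to be (ii) on the \textbf{break} path, where a precise case analysis of the program point at exit is needed. The rest is routine.
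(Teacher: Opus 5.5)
Your proposal is correct and follows the paper's proof essentially step for step. Part (i) uses closedness of the sublevel set inside the compact $D_{i-1}$. Part (ii) uses the same two-case analysis of loop exit (the budget check at the loop head versus the \textbf{break} after \textsc{TPrune}) via Lemmas~\ref{lem:cover-invariant} and~\ref{lem:current-pruning-preserves-cover}. Part (iii) uses Lemmas~\ref{lem:certificate-update-invariants} and~\ref{lem:gap-certificate}, and part (iv) uses an incumbent attaining $\overline\alpha_i$. Your extra bookkeeping checks, that $\overline\alpha_i$ and $\varepsilon_i$ at exit agree with the current $\mathcal C$ and the final certificates, are left implicit in the paper, and you handle them correctly.
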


\begin{proof}
  To prove (i), we note that the only line at level $i$ that assigns a value to $D_i$ is
  line~\ref{alg:set-domain}, which simply sets
  $D_i=\widehat D_i(\overline\alpha_i)$.
  Since $\rrrule_i$ is continuous,
  $\widehat D_i(\overline\alpha_i)$ is a closed subset of $D_{i-1}$.
  Since $D_{i-1}$ is compact, $D_i=\widehat D_i(\overline\alpha_i)$ is compact.

  To prove (ii), from Lemma~\ref{lem:cover-invariant}, the
  inclusion \eqref{eq:loop-cover-induction} holds at the beginning of each iteration of the while loop.
  If the computational budget is exhausted before an iteration begins or immediately after
  line~\ref{alg:update-cert},
  the same inclusion holds immediately before line~\ref{alg:set-domain}.
  If line~\ref{alg:threshold-stop} terminates the loop, then $\overline\alpha_i=0$.
  From Lemma~\ref{lem:cover-invariant}, the
  inclusion \eqref{eq:loop-cover-induction} holds immediately before
  line~\ref{alg:if-not-zero} and
  Lemma~\ref{lem:current-pruning-preserves-cover} gives the same inclusion after
  line~\ref{alg:threshold-prune}.
  Therefore, in all cases, immediately before line~\ref{alg:set-domain},
  the inclusion \eqref{eq:loop-cover-induction} holds.
  Since line~\ref{alg:set-domain} sets $D_i=\widehat D_i(\overline\alpha_i)$, we have
  $D_i\subseteq\bigcup_{B\in\mathcal O}B$.

  To prove (iii), Lemma~\ref{lem:certificate-update-invariants} verifies the hypotheses of
  Lemma~\ref{lem:gap-certificate} at every execution of \textsc{InitCert} and \textsc{UpdateCert}.
  Therefore the final certificate values at level $i$ satisfy
  $\underline\alpha_i\leq\alpha_i(D_{i-1})$.
  Subtracting $\overline\alpha_i$ from this inequality and replacing
  $\varepsilon_i=\overline\alpha_i-\underline\alpha_i$ give
  $\overline\alpha_i-\alpha_i(D_{i-1})\leq\varepsilon_i$.

  Finally, we prove (iv).
  Immediately before line~\ref{alg:filter-incumbents},
  $\mathcal C$ is finite and satisfies $\mathcal C\subseteq D_{i-1}$,
  because by assumption, it is a nonempty finite subset of $D_{i-1}$ at the beginning of the level,
  and the only line that modifies it before line~\ref{alg:filter-incumbents} is
  line~\ref{alg:add-incumbents}, which adds the finite set
  $\mathcal S\subseteq D_{i-1}$ returned by \textsc{CSamples}.
  The set $\mathcal C$ is also nonempty because it is nonempty at the beginning of the level and the
  algorithm only adds policies before line~\ref{alg:filter-incumbents}.
  Choose $\policy^{\rm inc}\in\argmin_{\policy\in\mathcal C}\rrrule_i(\policy)$.
  By~\eqref{eq:anytime-gap},
  $\rrrule_i(\policy^{\rm inc})=\overline\alpha_i$.
  Since $\policy^{\rm inc}\in D_{i-1}$, line~\ref{alg:set-domain} implies
  $\policy^{\rm inc}\in D_i$.
  Therefore $\mathcal C\cap D_i$ is nonempty.
  Line~\ref{alg:filter-incumbents} replaces $\mathcal C$ by $\mathcal C\cap D_i$, so after that line
  $\mathcal C$ is a nonempty finite subset of $D_i$.
\end{proof}

We now use the invariants established in Lemma~\ref{lem:level-invariants}
to prove the correctness of Algorithm~\ref{alg:anytime-rulebook}.

\begin{theorem}
  \label{thm:algorithm-correctness}
  Suppose Assumptions~\ref{ass:compact} and \ref{ass:lipschitz} hold, and each rule level terminates.
  Then, Algorithm~\ref{alg:anytime-rulebook} returns a policy $\widehat\policy$ that
  satisfies
  $\rrrule_i(\widehat\policy)-\alpha_i(D_{i-1})\leq\varepsilon_i$,
  for all $i \in \{1, \ldots, m\}$,
  i.e., $\varepsilon_i$ is a valid finite-budget optimality certificate.
\end{theorem}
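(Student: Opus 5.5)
The plan is an induction on the rule level $i$, using Lemma~\ref{lem:level-invariants} as the inductive step. The goal is to show that the hypotheses assumed at the start of each level hold for every $i \in \{1,\ldots,m\}$. For the base case $i=1$, lines~\ref{alg:init-domain}--\ref{alg:init-incumbents} set $D_0=\policies$. This set is compact by Assumption~\ref{ass:compact}. The initial box collection covers $D_0$ by construction, and $\mathcal C$ is a nonempty finite subset of $D_0$. Hence the standing hypotheses used in Lemmas~\ref{lem:Mi}--\ref{lem:level-invariants} hold immediately before line~\ref{alg:domain-prune-init} at level $1$.

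For the inductive step, suppose these hypotheses hold at the start of level $i$ and the while loop terminates, as the theorem assumes. Lemma~\ref{lem:level-invariants}(i), (ii) and (iv) then say that, after line~\ref{alg:filter-incumbents}, $D_i$ is compact, $D_i\subseteq\bigcup_{B\in\mathcal O}B$, and $\mathcal C$ is a nonempty finite subset of $D_i$. These are exactly the hypotheses needed at the start of level $i+1$. It follows that for every level, part (iii) gives $\overline\alpha_i-\alpha_i(D_{i-1})\le\varepsilon_i$, where $\overline\alpha_i$ and $\varepsilon_i$ are the final values at that level. I also record that $D_m\subseteq D_{m-1}\subseteq\cdots\subseteq D_0$, which holds by the definition \eqref{eq:certified-domain}, since line~\ref{alg:set-domain} sets $D_i=\widehat D_i(\overline\alpha_i)$. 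Furthermore, the final $\mathcal C$ is nonempty and contained in $D_m$.

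It remains to relate the returned policy $\widehat\policy=\textsc{Best}(\mathcal C,m)$ to each level. Since $\widehat\policy\in\mathcal C\subseteq D_m\subseteq D_i$ for every $i$, membership in $D_i=\{\policy\in D_{i-1}\mid \rrrule_i(\policy)\le\overline\alpha_i\}$ gives $\rrrule_i(\widehat\policy)\le\overline\alpha_i$. Combining this with part (iii) yields $\rrrule_i(\widehat\policy)-\alpha_i(D_{i-1})\le\overline\alpha_i-\alpha_i(D_{i-1})\le\varepsilon_i$ for all $i$, which is the claimed bound.

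The step that needs the most care is the bookkeeping about which value of $\overline\alpha_i$ defines $D_i$. The $\overline\alpha_i$ stored and used later, in \textsc{DPrune} and in \textsc{CSamples} membership tests, must be the same final value used in line~\ref{alg:set-domain} and in (iii). Also, incumbents filtered out at later levels must not break the inclusion chain. I would handle this by noting that $\overline\alpha_i$ is not modified after level $i$ ends. As a result, $D_i$ is a fixed set for all subsequent levels, and every later incumbent set is, by (iv) applied at later levels, a subset of $D_{j}\subseteq D_i$.
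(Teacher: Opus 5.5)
Your proposal is correct and follows essentially the same route as the paper. It runs induction on the rule level using parts (i), (ii), (iv) of Lemma~\ref{lem:level-invariants} to propagate the hypotheses, part (iii) for $\overline\alpha_i-\alpha_i(D_{i-1})\le\varepsilon_i$, and the nesting $D_m\subseteq D_i$ to get $\rrrule_i(\widehat\policy)\le\overline\alpha_i$; your extra remark that $\overline\alpha_i$ is frozen once level $i$ ends is a harmless clarification the paper leaves implicit.
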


\begin{proof}
  We first show that the hypotheses of Lemma~\ref{lem:level-invariants} hold at the beginning of
  every rule level.  The proof is by induction on the level $i$.
  For $i=1$, line~\ref{alg:init-domain} sets $D_0=\policies$, which is compact by
  Assumption~\ref{ass:compact}.
  Line~\ref{alg:init-cover} initializes
  $\mathcal O$ so that $D_0\subseteq\bigcup_{B\in\mathcal O}B$, and
  line~\ref{alg:init-incumbents} initializes $\mathcal C$ as a nonempty finite subset of $D_0$.
  Hence Lemma~\ref{lem:level-invariants} applies at level $1$.

  Now suppose the hypotheses of Lemma~\ref{lem:level-invariants} hold at the beginning of
  an arbitrary level $i$.
  Since level $i$ terminates by assumption, Lemma~\ref{lem:level-invariants} implies, immediately after
  line~\ref{alg:filter-incumbents}, that $D_i$ is compact,
  $D_i\subseteq\bigcup_{B\in\mathcal O}B$, and $\mathcal C$ is a nonempty finite subset of $D_i$.
  Thus, the hypotheses of Lemma~\ref{lem:level-invariants} hold at the beginning of level $i+1$.
  By induction, they hold at the beginning of every rule level.

  Applying Lemma~\ref{lem:level-invariants}(iii) at each level gives
  \begin{equation}
    \overline\alpha_i-\alpha_i(D_{i-1})\leq\varepsilon_i,
    \qquad i=1,\ldots,m.
    \label{eq:algorithm-gap-correctness}
  \end{equation}
  Lemma~\ref{lem:level-invariants}(iv), applied at level $m$, gives
  $\emptyset\neq\mathcal C\subseteq D_m$ after line~\ref{alg:filter-incumbents}.
  Therefore line~\ref{alg:return} is well-defined and returns some
  $\widehat\policy\in\mathcal C\subseteq D_m$.
  Additionally, by definition of $D_{i}$ in \eqref{eq:certified-domain},
  $D_{i} \subseteq D_{i-1}$ for all rule level $i$.
  Hence $D_m\subseteq D_i$ for every $i\leq m$.
  Since $\widehat\policy\in D_m$, we have $\widehat\policy\in D_i$ for every $i$.
  Using the definition of $D_i$ again gives
  $\rrrule_i(\widehat\policy)\leq\overline\alpha_i$ for every $i$.
  Combining this with~\eqref{eq:algorithm-gap-correctness} yields
  $\rrrule_i(\widehat\policy)-\alpha_i(D_{i-1}) \leq \overline\alpha_i-\alpha_i(D_{i-1})
  \leq\varepsilon_i$ for all $i \in \{1,\ldots,m\}$.
\end{proof}

\subsection{Convergence}
The previous subsection proves that Algorithm~\ref{alg:anytime-rulebook} returns a policy with a
certified optimality gap $\varepsilon_{i}$ at every rule level $i$.
This subsection studies the asymptotic behavior of the algorithm as the computational budget increases.
Throughout this subsection, fix an arbitrary rule level $i\in\{1,\ldots,m\}$ and assume the
hypotheses of Lemma~\ref{lem:level-invariants} hold.

We index the reachable snapshots of the while loop by the number
$N\in\naturals$ of completed full iterations of the while loop at level $i$.
Snapshot $N$ is taken in the $(N+1)$-st iteration of the while loop,
immediately after the construction of $\mathcal R$, namely
after line~\ref{alg:set-refine-positive} when $\overline\alpha_i>0$,
and after line~\ref{alg:threshold-refine} when $\overline\alpha_i=0$.
The incumbent set and certificate values at snapshot $N$
are denoted by $\mathcal C(N)$, $\underline\alpha_i(N)$, $\overline\alpha_i(N)$, and
$\varepsilon_i(N)$.
These certificates are those computed by \textsc{InitCert} (line~\ref{alg:init-cert}) when $N=0$
and by \textsc{UpdateCert} (line~\ref{alg:update-cert})
at the end of the $N$-th iteration when $N\ge1$ before the subsequent pruning step.
Let $\mathcal O^{-}(N)$ denote the corresponding value of $\mathcal O$ at this certificate-update
line.
The subsequent pruning step uses $\overline\alpha_i(N)$ to construct the box collections
$\mathcal O(N)$ and $\mathcal R(N)$.
Since the pruning step does not modify $\mathcal C(N)$ or recompute the certificates,
the pre-pruning quantities $\mathcal C(N)$, $\underline\alpha_i(N)$, $\overline\alpha_i(N)$, and
$\varepsilon_i(N)$ and the post-pruning collections $\mathcal O(N)$ and $\mathcal R(N)$ all belong
to the same snapshot $N$.

The set of boxes selected for possible refinement is
\begin{equation}
  \mathcal R(N)=
  \begin{cases}
    \mathcal O(N), & \overline\alpha_i(N)>0,\\
    \{B\in\mathcal O(N)\mid \overline q_i(B)>\threshold_i\},
                   & \overline\alpha_i(N)=0.
  \end{cases}
  \label{eq:refinable-boxes}
\end{equation}
If line~\ref{alg:threshold-stop} terminates level $i$,
this final snapshot has $\mathcal R(N)=\emptyset$.
Define the refinement width at snapshot $N$ by
\begin{equation}
  \delta_i(N)=
  \begin{cases}
    \max_{B\in\mathcal R(N)}w_i(B), & \mathcal R(N)\neq\emptyset,\\
    0, & \mathcal R(N)=\emptyset.
  \end{cases}
  \label{eq:refinement-width}
\end{equation}

The first two results show that longest-side refinement drives this refinement width to zero.

\begin{lemma}
  \label{lem:monotonic}
  The sequence $\{\delta_i(N)\}_{N\geq0}$ is monotonically nonincreasing.
\end{lemma}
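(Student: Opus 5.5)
The plan is to compare two consecutive snapshots $N$ and $N+1$ directly and show that every box in $\mathcal R(N+1)$ has width at most $\delta_i(N)$. That gives $\delta_i(N+1)\le\delta_i(N)$. If $\mathcal R(N+1)=\emptyset$, then $\delta_i(N+1)=0\le\delta_i(N)$ trivially. If $\mathcal R(N)=\emptyset$, line~\ref{alg:threshold-stop} ends the level, so snapshot $N+1$ does not exist. We may therefore assume $\mathcal R(N)\neq\emptyset$, so that $B^\star=B^\star(N)$ is selected with $w_i(B^\star)=\delta_i(N)$.

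\textbf{Step 1 (widths of new boxes).} Two geometric facts drive the argument. First, $w_i(B)=L_i\sqrt d\,l(B)$ depends only on the geometry of $B$. Second, bisecting $B^\star$ along a longest side gives children $B^\pm$ with $l(B^\pm)\le l(B^\star)$. Hence $w_i(B^\pm)\le w_i(B^\star)=\delta_i(N)$. Between snapshot $N$ and snapshot $N+1$ the collection changes only through three kinds of lines:
\begin{itemize}
\item the replacement in line~\ref{alg:update-cover};
\item the pruning calls \textsc{DPrune}, \textsc{CPrune} and \textsc{TPrune}, which only remove boxes;
\item the certificate update, which does not touch $\mathcal O$.
\end{itemize}
Consequently
\[
\mathcal O(N+1)\subseteq\bigl(\mathcal O(N)\setminus\{B^\star\}\bigr)\cup\{B^+,B^-\}.
\]
The stored bounds $\underline q_i(B)$ and $\overline q_i(B)$ of any surviving old box are unchanged.

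\textbf{Step 2 (case split on the certificate).} Recall that $\overline\alpha_i$ is nonincreasing, because $\mathcal C$ only grows, and is nonnegative.

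If $\overline\alpha_i(N)>0$, then $\mathcal R(N)=\mathcal O(N)$, so every old box in $\mathcal O(N+1)$ has width at most $\delta_i(N)$. The children also have width at most $\delta_i(N)$ by Step 1. Since $\mathcal R(N+1)\subseteq\mathcal O(N+1)$ in either branch of \eqref{eq:refinable-boxes}, we get $\delta_i(N+1)\le\delta_i(N)$.

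If $\overline\alpha_i(N)=0$, then $\overline\alpha_i(N+1)=0$, and $\mathcal R(N+1)$ consists of the boxes of $\mathcal O(N+1)$ with $\overline q_i(B)>\threshold_i$. Such a box is either a child $B^\pm$, which is handled by Step 1, or an old box in $\mathcal O(N)$ with $\overline q_i(B)>\threshold_i$. An old box of the second kind belongs to $\mathcal R(N)$ and so has width at most $\delta_i(N)$.

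\textbf{Main obstacle.} The delicate point is the mismatch between the two definitions of $\mathcal R$, together with the requirement that old boxes keep their bound values. The argument must use the monotonicity of $\overline\alpha_i$ to exclude a transition from the case $\overline\alpha_i=0$ back to the case $\overline\alpha_i>0$. It must also note that in the zero case the ambiguity test $\overline q_i(B)>\threshold_i$ for an unchanged box gives the same answer at both snapshots. Once these two points are recorded, the remaining inequalities are immediate.
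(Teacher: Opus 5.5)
Your proof is correct and follows essentially the same route as the paper. The children of the selected box have width at most $\delta_i(N)$, and every surviving old box in $\mathcal R(N+1)$ already lay in $\mathcal R(N)$. Your explicit case split on $\overline\alpha_i(N)$, which uses the monotonicity of $\overline\alpha_i$ and the unchanged stored bounds, supplies the justification for the inclusion $\mathcal R(j+1)\subseteq(\mathcal R(j)\setminus\{B^\star\})\cup\{B^+,B^-\}$ that the paper asserts more tersely.
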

\begin{proof}
  If $L_i=0$, then \eqref{eq:LBUB} gives $w_i(B)=0$ for every box $B$,
  and the result is immediate.

  Suppose $L_i>0$.
  Since $w_i(B)=L_i\sqrt d\,l(B)$ and $L_{i}$ and $d$ are constant,
  $w_{i}(B)$ is proportional to the longest side length $l(B)$.
  Consider an arbitrary $j<N$.
  Since the next bisection is performed, $\mathcal R(j)\neq\emptyset$.
  Line~\ref{alg:select-box} selects
  $B^\star\in\argmax_{B\in\mathcal R(j)}w_i(B)$, so $w_i(B^\star)=\delta_i(j)$.
  Let $B^+$ and $B^-$ be the children created by line~\ref{alg:split-box}.
  Longest-side bisection gives $l(B^+),l(B^-)\leq l(B^\star)$. Hence,
  $w_i(B^+),w_i(B^-)\leq w_i(B^\star)=\delta_i(j)$.

  We want to show that $w_i(B)\leq\delta_i(j)$ for every $B\in\mathcal R(j+1)$.
  By construction,
  $\mathcal R(j+1) \subseteq (\mathcal R(j) \setminus \{B^{*}\}) \cup \{B^{+}, B^{-}\}$
  because lines~\ref{alg:domain-prune-update},
  \ref{alg:positive-prune}--\ref{alg:set-refine-positive},
  and \ref{alg:threshold-prune}--\ref{alg:threshold-refine} only remove boxes or assign
  $\mathcal R$ from the boxes that remain after pruning.
  Since $w_i(B^+),w_i(B^-)\leq\delta_i(j)$ and any $B \in \mathcal R(j) \setminus \{B^{*}\}$
  is unchanged from snapshot $j$ to $j+1$,
  we can conclude that $w_i(B)\leq\delta_i(j)$ for every $B\in\mathcal R(j+1)$.
  Therefore, if $\mathcal R(j+1)\not=\emptyset$, then
  $\delta_i(j+1)=\max_{B\in\mathcal R(j+1)}w_i(B)\leq\delta_i(j)$.
  Otherwise, $\delta_i(j+1)=0\leq\delta_i(j)$.
\end{proof}

\begin{theorem}
  \label{thm:delta-zero}
  $\delta_i(N)\to0$.
\end{theorem}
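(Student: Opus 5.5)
Since $\delta_i(N)$ is nonincreasing by Lemma~\ref{lem:monotonic} and bounded below by $0$, it has a limit $\delta^\ast\geq 0$. Suppose, for contradiction, that $\delta^\ast>0$. (If the level terminates at a finite snapshot with $\mathcal R(N)=\emptyset$, then $\delta_i(N)=0$, and monotonicity gives the claim trivially. We may likewise assume $L_i>0$, since otherwise every $w_i(B)=0$.) So the loop runs forever and every snapshot has $\delta_i(N)\geq\delta^\ast$. The plan is a counting argument: each iteration bisects a box of width at least $\delta^\ast$, i.e., a box whose longest side satisfies $l(B^\star)\geq \ell^\ast:=\delta^\ast/(L_i\sqrt d)$, and only finitely many such bisections can ever occur.

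To make this precise, I will organize all boxes ever created at level $i$ (and earlier levels) into a forest. The roots are the finitely many boxes of the initial collection from line~\ref{alg:init-cover}, and the two children of a node are its bisection halves $B^+,B^-$. Every box in any $\mathcal O$ is a node of this forest, since pruning and \textsc{Replace} only remove boxes or insert bisection children. The key geometric fact is that longest-side bisection shrinks the longest side geometrically along any root-to-node path. After $k$ consecutive bisections of a box, each side has been halved at least $\lfloor k/d\rfloor$ times, up to a bounded number of extra steps needed to make the sides comparable. Concretely, I would show that after $d\lceil\log_2(l_{\max}/\ell)\rceil+c$ halvings (with $c$ depending on the initial aspect ratios) the longest side is below $\ell$. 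A cleaner way to argue this: a side of length $s$ is cut only while $s$ is the longest side, and a cut halves it, so the number of cuts applied to side $j$ while the longest side is still at least $\ell^\ast$ is at most $\lceil\log_2(l_j^{\rm root}/\ell^\ast)\rceil+1$. Summing over the $d$ coordinates bounds the depth $K$ of any node with $l(B)\geq\ell^\ast$ that still gets bisected.

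The selected boxes $B^\star$ are distinct nodes of the forest, because a box is removed from $\mathcal O$ once bisected, and all of them have $l(B^\star)\geq\ell^\ast$. By the depth bound, they lie within depth $K$ of a finite forest with branching factor $2$. There are at most $|\mathcal O_{\rm init}|\,(2^{K+1}-1)$ such nodes, which is finite. An infinite run would require infinitely many distinct selected boxes, a contradiction. Hence $\delta^\ast=0$, so $\delta_i(N)\to 0$.

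The main obstacle is stating the depth bound rigorously. Longest-side bisection ties and unequal initial side lengths make the halving count per coordinate slightly delicate. I will handle this with the per-coordinate cut count above rather than a uniform $\lfloor k/d\rfloor$ claim, and I need to check that boxes inherited from previous levels fit into the same finite forest, since the initial collection at level $i$ is itself a finite set of forest nodes. Pruning by \textsc{CPrune}, \textsc{TPrune}, and \textsc{DPrune} is harmless here, because it only removes nodes.
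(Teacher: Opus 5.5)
Your proposal is correct and follows essentially the paper's argument. Both proofs take the monotone limit and assume it is positive for contradiction. Both then use a per-coordinate count: along any bisection chain ending in a box whose longest side is at least $\ell^\ast$, each coordinate can be cut only boundedly often, so only finitely many such boxes are ever generated from finitely many roots. Both finish by noting that each selected box is removed once bisected, so infinitely many distinct large boxes would be needed.

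The differences are cosmetic. The paper works with $\delta_\infty/2$ rather than the limit itself, and it roots its count at $\mathcal O(0)$ rather than at the line-2 collection. The per-coordinate cut count you settle on, instead of the $\lfloor k/d\rfloor$ heuristic you rightly set aside, is exactly how the paper makes the depth bound rigorous.
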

\begin{proof}
  If $L_i=0$, then $\delta_i(N)=0$ for all $N$.
  Suppose $L_i>0$.
  By Lemma~\ref{lem:monotonic}, $\{\delta_i(N)\}_{N\geq0}$ is monotonically nonincreasing and
  bounded below by zero.
  Hence, by the monotone convergence theorem, there exists
  $\delta_\infty\geq0$ such that
  $\delta_i(N)\to\delta_\infty$.
  Suppose, for contradiction, that $\delta_\infty>0$, and set
  $\varepsilon=\delta_\infty/2$ and $\eta=\varepsilon/(L_i\sqrt d)$.

  Let $\mathcal F_\eta=\{B \in \bigcup_{N \in \naturals} \mathcal O(N) \mid l(B)\geq\eta\}$
  be the set of generated boxes whose longest side length is at least $\eta$.
  We first show that $\mathcal F_\eta$ is finite.
  Fix $B^0\in\mathcal O(0)$ and choose $q\in\naturals$ such that $l(B^0)/2^q<\eta$.
  Consider an arbitrary sequence $B^0,B^1,\ldots,B^H$ in which $B^{j+1}$ is one of the two boxes obtained
  from $B^j$ by line~\ref{alg:split-box}.
  It is enough to bound $H$ such that $l(B^H)\geq\eta$.

  Let $c_j\in\{1,\ldots,d\}$ be the coordinate bisected when passing from $B^j$ to $B^{j+1}$.
  Since bisection does not increase any side length,
  $l(B^0)\geq l(B^1)\geq\cdots\geq l(B^H)$.
  Thus, if $l(B^H)\geq\eta$, then for every $j\in\{0,\ldots,H-1\}$,
  \begin{equation}
    l_{c_j}(B^j)=l(B^j)\geq\eta>l(B^0)/2^q,
    \label{eq:thm:delta-zero-inequality}
  \end{equation}
  where the equality follows from line~\ref{alg:split-box}, which always bisects
  a longest side of the box.

  Fix a coordinate $c\in\{1,\ldots,d\}$, and let
  $j_1<\cdots<j_h$ be all indices $j\in\{0,\ldots,H-1\}$ such that $c_j=c$.
  Then,
  \begin{equation}
    l_{c}(B^{j_{s}}) = l_c(B^0)/2^{s-1}, \forall s \in \{1, \ldots, h\}
    \label{eq:thm:delta-zero-equality}
  \end{equation}
  because only previous splits with $c_j=c$ change this
  side length.
  Additionally, setting $j = j_{s}$ in \eqref{eq:thm:delta-zero-inequality} gives
  $l_{c}(B^{j_{s}}) > l(B^{0})/2^{q}$.
  Combining this with \eqref{eq:thm:delta-zero-equality} gives
  \[
    \frac{l_c(B^0)}{2^{s-1}} > \frac{l(B^0)}{2^q},
    \qquad \forall s \in \{1,\ldots,h\}.
  \]
  Since $l_c(B^0)\leq l(B^0)$, this is possible only if $s\leq q$.
  Hence $h\leq q$ for every coordinate $c$.
  Summing over coordinates gives
  $H=\sum_{c=1}^d |\{j\in\{0,\ldots,H-1\}\mid c_j=c\}|\leq dq$.
  Therefore, for this fixed $B^0$, every generated box $B$ with $l(B)\geq\eta$ is obtained after at
  most $dq$ bisections from $B^0$.
  There are at most $\sum_{k=0}^{dq}2^k$ such boxes.
  Since $\mathcal O(0)$ is finite, the total number of generated boxes $B$ with $l(B)\geq\eta$ is
  finite.

  Since $\delta_i(N)\to\delta_\infty$ and $\varepsilon=\delta_\infty/2 > 0$,
  based on the definition of convergence, there exists $N_0$ such that
  \begin{equation}
    \delta_i(N)\geq\varepsilon>0,
    \qquad \forall N\geq N_0.
    \label{eq:delta-tail-lower-bound}
  \end{equation}
  Since $\delta_{i}(N)>0$, it follows from \eqref{eq:refinement-width} that
  $\mathcal R(N)\neq\emptyset$ for all $N \geq N_{0}$.
  Hence line~\ref{alg:select-box} selects a box
  $B^\star_N\in\mathcal R(N)$ with $w_i(B^\star_N)=\delta_i(N)$.
  Using $w_i(B)=L_i\sqrt d\,l(B)$ and \eqref{eq:delta-tail-lower-bound},
  \[
    l(B^\star_N)=\frac{\delta_i(N)}{L_i\sqrt d}
    \geq \frac{\varepsilon}{L_i\sqrt d}=\eta.
  \]
  Thus
  \begin{equation}
    B^\star_N\in\mathcal F_\eta,
    \qquad \forall N\geq N_0.
    \label{eq:selected-large-boxes}
  \end{equation}
  Because there are infinitely many indices $N\geq N_0$, \eqref{eq:selected-large-boxes} requires
  infinitely many selections from $\mathcal F_\eta$.
  But whenever a generated box $B$ is selected at line~\ref{alg:select-box},
  lines~\ref{alg:split-box}--\ref{alg:update-cover} remove $B$ from $\mathcal O$ and replace it by
  its two children.
  Hence the same generated box cannot be selected at any later snapshot.
  Therefore \eqref{eq:selected-large-boxes} requires infinitely many distinct boxes in
  $\mathcal F_\eta$, contradicting the fact that $\mathcal F_\eta$ is finite.
  This contradiction shows that $\delta_\infty=0$.
\end{proof}

We now analyze the asymptotic behavior of the certified optimality gap $\varepsilon_i(N)$
and the policies returned by the algorithm.
We first prove the convergence of $\varepsilon_i(N)$.

\begin{lemma}
  \label{lem:monotone-running-gap}
  The sequence $\{\varepsilon_i(N)\}_{N\geq0}$ is monotonically nonincreasing and converges to
  a limit $\varepsilon_{i,\infty}\geq0$.
\end{lemma}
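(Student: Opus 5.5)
The approach is to show separately that $\underline\alpha_i(N)$ is nondecreasing and $\overline\alpha_i(N)$ is nonincreasing in $N$. Then $\varepsilon_i(N)=\overline\alpha_i(N)-\underline\alpha_i(N)$ is nonincreasing. For the lower bound $\varepsilon_i(N)\ge 0$, I will invoke Lemma~\ref{lem:gap-certificate}, whose hypotheses hold at every \textsc{InitCert} and \textsc{UpdateCert} by Lemma~\ref{lem:certificate-update-invariants}. It gives $\underline\alpha_i(N)\le\alpha_i(D_{i-1})\le\overline\alpha_i(N)$, hence $\varepsilon_i(N)\ge0$. Convergence to some $\varepsilon_{i,\infty}\ge0$ then follows from the monotone convergence theorem for nonincreasing sequences bounded below, exactly as in the proof of Theorem~\ref{thm:delta-zero}.

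For the monotonicity of $\underline\alpha_i$, I will use that $\underline\alpha_i(N)$ for $N\ge1$ is produced by \textsc{UpdateCert} via \eqref{eq:cert-update}. That rule gives $\underline\alpha_i(N)=\max\{\underline\alpha_i(N-1),\min_{B\in\mathcal O^-(N)}\underline{\rrrule}_i(B)\}\ge\underline\alpha_i(N-1)$. Here I need that no line between consecutive certificate computations modifies $\underline\alpha_i$. This holds because the pruning, bisection, sampling and \textsc{Bounds} lines touch only $\mathcal O$, $\mathcal R$, $\mathcal C$ and stored box bounds.

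For $\overline\alpha_i$, the point is that $\mathcal C(N-1)\subseteq\mathcal C(N)$ during level $i$. The only line modifying $\mathcal C$ inside the while loop is line~\ref{alg:add-incumbents}, which takes a union. The filtering in line~\ref{alg:filter-incumbents} occurs only after the loop. A minimum over a larger finite set is no larger, so $\overline\alpha_i(N)=\min_{\policy\in\mathcal C(N)}\rrrule_i(\policy)\le\overline\alpha_i(N-1)$, with the $N=0$ value coming from \textsc{InitCert} via the same formula \eqref{eq:anytime-gap}.

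I expect the main obstacle to be bookkeeping rather than mathematics. One issue is that the snapshot indexing must align with the certificate computations, so that each $\varepsilon_i(N)$ is exactly the value recomputed by \textsc{UpdateCert} from the current $\underline\alpha_i$ and $\overline\alpha_i$. The other is the case where the level terminates after finitely many snapshots, which I will handle separately. In that case the sequence is finite (or can be taken eventually constant), and the claim is trivial with $\varepsilon_{i,\infty}$ the final value. I would state this convention explicitly at the start of the proof.
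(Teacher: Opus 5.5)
Your proposal is correct and follows the paper's proof essentially step for step. It derives monotonicity of $\overline\alpha_i$ from the growth of $\mathcal C$ and monotonicity of $\underline\alpha_i$ from the running-max update \eqref{eq:cert-update}. It gets nonnegativity from Lemma~\ref{lem:gap-certificate} combined with Lemma~\ref{lem:certificate-update-invariants}, and convergence from the monotone convergence theorem. Your extra bookkeeping is sound and only makes explicit what the paper leaves implicit: no other line modifies $\underline\alpha_i$, the filtering of $\mathcal C$ happens only after the loop, and you fix a convention for finite termination.
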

\begin{proof}
  During a fixed rule level, $\mathcal C(N)$ only grows, so
  $\overline\alpha_i(N)=\min_{\policy\in\mathcal C(N)}\rrrule_i(\policy)$ is monotonically
  nonincreasing.
  By the update rule in~\eqref{eq:cert-update}, $\underline\alpha_i(N)$ is monotonically
  nondecreasing.
  Hence $\varepsilon_i(N)=\overline\alpha_i(N)-\underline\alpha_i(N)$ is monotonically
  nonincreasing.
  Lemma~\ref{lem:gap-certificate} and Lemma~\ref{lem:certificate-update-invariants} give
  $\underline\alpha_i(N)\leq\alpha_i(D_{i-1})\leq\overline\alpha_i(N)$, so
  $\varepsilon_i(N)\geq0$.
  By the monotone convergence theorem,
  $\{\varepsilon_i(N)\}_{N\geq0}$ converges.
\end{proof}

The preceding lemma does not characterize the limiting value $\varepsilon_{i,\infty}$.
To show that $\varepsilon_i(N)$ converges to zero,
we additionally assume that line~\ref{alg:init-cover} initializes $\mathcal O$ as an exact
finite union of closed axis-aligned boxes representing $\policies$.
Additionally, we assume that the incumbent set contains a
certified policy whose value approaches the optimum over $D_{i-1}$
at the rate of the refinement width.

\begin{assumption}
  \label{ass:exact}
  At line~\ref{alg:init-cover}, $\mathcal O$ is initialized as a finite set of closed
  axis-aligned boxes satisfying
  $\policies = \bigcup_{B\in\mathcal O}B$.
\end{assumption}

\begin{assumption}
  \label{ass:complete-box-processing}
  There exists $\kappa_i\geq0$ such that,
  for every reachable snapshot $N$, there is a policy
  $\policy_N\in\mathcal C(N)\cap D_{i-1}$ satisfying
  \begin{equation}
    \rrrule_i(\policy_N)
    \leq
    \alpha_i(D_{i-1})+\kappa_i\delta_i(N).
    \label{eq:near-optimal-incumbent}
  \end{equation}
\end{assumption}

Assumption~\ref{ass:exact} is satisfied whenever $\policies$ can be represented as a
finite union of closed axis-aligned boxes.
Assumption~\ref{ass:complete-box-processing} requires
the incumbent set to contain a policy whose excess risk converges to the optimum at the same
rate as the box resolution.
It is satisfied, e.g., if \textsc{CSamples} generates a sufficiently dense set of
certified policies in $D_{i-1}$, or
if it generates one certified policy from each box in the current box collection.
Since $M_i\subseteq\bigcup_{B\in\mathcal O(N)}B$, at least one box in the current box collection
contains a minimizer. At snapshots with $\overline\alpha_i(N)>0$, \eqref{eq:refinable-boxes}
gives $\mathcal R(N)=\mathcal O(N)$, so every box in $\mathcal O(N)$ has width at most
$\delta_i(N)$. Lipschitz continuity then implies that a certified policy sampled from such a box
has excess risk within
$O(\delta_i(N))$ of $\alpha_i(D_{i-1})$.

\begin{lemma}
  \label{lem:box-containment}
  Under Assumption~\ref{ass:exact},
  every box generated by Algorithm~\ref{alg:anytime-rulebook} satisfies
  $B\subseteq\policies$ and $c(B) \in \policies$.
\end{lemma}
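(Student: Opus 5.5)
The plan is an induction over the generation history of boxes. Every box the algorithm ever holds in $\mathcal O$ is either one of the initial boxes from line~\ref{alg:init-cover} or a child $B^+$ or $B^-$ produced by \textsc{Bisect} (line~\ref{alg:split-box}) from a box previously in $\mathcal O$. The pruning routines \textsc{DPrune}, \textsc{CPrune} and \textsc{TPrune} only delete boxes and never create new ones. The incumbent filtering and \textsc{Propagate} do not touch $\mathcal O$. Across rule levels $\mathcal O$ is simply carried over. So it suffices to show two things: that the property $B\subseteq\policies$ holds for the initial boxes, and that it is inherited by both children of any bisected box.

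\emph{Base case.} By Assumption~\ref{ass:exact}, $\policies=\bigcup_{B\in\mathcal O}B$ at line~\ref{alg:init-cover}, so each initial box $B$ satisfies $B\subseteq\policies$.

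\emph{Inductive step.} Let $B^\star=\prod_{k}[a_k,b_k]\subseteq\policies$ be bisected along coordinate $j$ at $m_j=(a_j+b_j)/2$. The children $B^-$ and $B^+$ defined in Section~\ref{sec:alg} replace the factor $[a_j,b_j]$ by $[a_j,m_j]$ or $[m_j,b_j]$. Both of these intervals are subsets of $[a_j,b_j]$, hence $B^\pm\subseteq B^\star\subseteq\policies$. By induction on the number of bisections performed so far (over all levels), every generated box is contained in $\policies$.

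It remains to show $c(B)\in\policies$. A closed axis-aligned box $B=\prod_k[a_k,b_k]$ with $a_k\le b_k$ contains its center $c(B)=\bigl((a_k+b_k)/2\bigr)_k$, since each coordinate lies in $[a_k,b_k]$. Therefore $c(B)\in B\subseteq\policies$.

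The only point that needs care is checking that no step of Algorithm~\ref{alg:anytime-rulebook} introduces a box other than by bisection; for instance, the box-certification in \textsc{CSamples} only inspects boxes and does not add them to $\mathcal O$. This bookkeeping is the main, though mild, obstacle. A second minor point is that the initial boxes in Assumption~\ref{ass:exact} are closed and nondegenerate in the sense $a_k\le b_k$, so the center is well defined. One also notes that $\mathcal O$ is never reinitialized between levels, so the induction runs over the whole execution.
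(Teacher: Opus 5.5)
Your proposal is correct and follows the paper's proof: the initial boxes lie in $\policies$ by Assumption~\ref{ass:exact}, bisection only produces subsets of the parent box, and the center of a closed axis-aligned box lies in the box. Your extra bookkeeping (pruning never creates boxes, and $\mathcal O$ persists across levels) and your coordinatewise center argument, used in place of convexity, are only more explicit versions of the same steps.
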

\begin{proof}
  By Assumption~\ref{ass:exact}, the initial box collection satisfies
  $\policies = \bigcup_{B\in\mathcal O}B$.
  Thus, every initial box $B\in\mathcal O$ satisfies $B\subseteq\policies$.
  Algorithm~\ref{alg:anytime-rulebook} refines boxes only by axis-aligned bisection,
  which replaces a box by subsets of that box.
  Thus, every generated box remains contained in
  its initial ancestor and therefore in $\policies$.
  Hence, $B\subseteq\policies$ for every generated box $B$.
  Because axis-aligned boxes are convex, their centers satisfies
  $c(B)\in B\subseteq\policies$.
\end{proof}

By Lemma~\ref{lem:box-containment} and Lemma~\ref{lem:lipschitz-bounds},
the center-based construction~\eqref{eq:LBUB} is a valid implementation of
\textsc{Bounds}.
We next show that centers of boxes in $\mathcal O(N)$ become arbitrarily close to the certified admissible domain
as the box size decreases.
For any nonempty compact set $S\subseteq\policies$ and any $\policy\in\policies$, define
$\operatorname{dist}(\policy,S)=\min_{\policy'\in S}\|\policy-\policy'\|_2$.

\begin{lemma}
  \label{lem:domain-consistent-boxes}
  Suppose Assumption~\ref{ass:exact} holds.
  Let $\{N_k\}_{k\geq1}\subseteq\naturals$ be a sequence of reachable snapshot indices with
  $N_k\to\infty$.
  For each $k$, choose one box $B_k\in\mathcal O(N_k)$.
  If $l(B_k)\to0$, then $\operatorname{dist}(c(B_k),D_{i-1})\to0$.
\end{lemma}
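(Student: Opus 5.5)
The plan is to reduce the claim to a near-feasibility bound at the box centers and then use a compactness argument. First, I unroll the recursion~\eqref{eq:certified-domain}. Since $D_0=\policies$, this gives
\[
  D_{i-1}=\{\policy\in\policies \mid \rrrule_j(\policy)\le\overline\alpha_j,\ \forall j<i\},
\]
where each $\overline\alpha_j$, $j<i$, is the final (now fixed) upper certificate of level $j$. Because $\policies$ is compact and each $\rrrule_j$ is continuous, $D_{i-1}$ is closed in $\policies$ and hence compact. It is nonempty by the level invariants, so $\operatorname{dist}(\cdot,D_{i-1})$ is well defined.

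\emph{Case $i=1$.} Here $D_0=\policies$. Lemma~\ref{lem:box-containment} gives $c(B_k)\in\policies$, so the distance is identically zero.

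\emph{Case $i>1$.} The first step is to show that every box $B\in\mathcal O(N)$ satisfies $\underline{\rrrule}_j(B)\le\overline\alpha_j$ for all $j<i$.
\begin{itemize}
  \item For $N=0$, the collection $\mathcal O^-(0)$ is the output of \textsc{DPrune} at line~\ref{alg:domain-prune-init}.
  \item For $N\ge1$, the collection $\mathcal O^-(N)$ is the output of \textsc{DPrune} at line~\ref{alg:domain-prune-update} in the $N$-th iteration, since no line between it and \textsc{UpdateCert} modifies $\mathcal O$.
  \item The subsequent \textsc{CPrune} or \textsc{TPrune} only removes boxes, so $\mathcal O(N)\subseteq\mathcal O^-(N)$.
\end{itemize}
Combining this with Lemma~\ref{lem:box-containment}, which gives $c(B)\in\policies$, and with the center construction~\eqref{eq:LBUB}, I obtain for each $j<i$
\[
  \rrrule_j(c(B_k))\le [q_j(c(B_k))-\threshold_j]_+ = [\underline q_j(B_k)+L_j\radius(B_k)-\threshold_j]_+ \le \underline{\rrrule}_j(B_k)+L_j\radius(B_k)\le \overline\alpha_j+L_j\radius(B_k).
\]
The middle step uses that $x\mapsto[x]_+$ is monotone and $1$-Lipschitz. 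Since $\radius(B_k)=\tfrac{\sqrt d}{2}l(B_k)\to0$, the centers are asymptotically feasible for all earlier levels.

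The final step converts near-feasibility into small distance by contradiction. Suppose there exist $\epsilon>0$ and a subsequence along which $\operatorname{dist}(c(B_k),D_{i-1})\ge\epsilon$. The centers lie in the compact set $\policies$, so I pass to a further subsequence with $c(B_k)\to\bar\policy\in\policies$. Continuity of $\rrrule_j$ and the bound above give $\rrrule_j(\bar\policy)\le\overline\alpha_j$ for all $j<i$, hence $\bar\policy\in D_{i-1}$. Then
\[
  \operatorname{dist}(c(B_k),D_{i-1})\le\|c(B_k)-\bar\policy\|_2\to0,
\]
which is a contradiction.

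I expect the main obstacle to be the bookkeeping step: verifying from the snapshot convention that every box in $\mathcal O(N)$ has indeed passed \textsc{DPrune} with respect to the final values $\overline\alpha_j$ of the earlier levels. This requires that those values are frozen once level $j$ ends, and that pruning between the certificate update and the snapshot never adds boxes. The analytic part of the argument is routine.
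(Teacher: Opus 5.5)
Your proposal is correct and follows essentially the same route as the paper: boxes in $\mathcal O(N_k)\subseteq\mathcal O^{-}(N_k)$ have passed \textsc{DPrune}, so $\underline{\rrrule}_j(B_k)\le\overline\alpha_j$ for all $j<i$, which makes the centers asymptotically feasible for the earlier levels, and a compactness (Bolzano--Weierstrass) contradiction finishes the argument. The differences are cosmetic. You bound $\rrrule_j(c(B_k))$ directly through the center formula, with slack $L_j\radius(B_k)$, whereas the paper goes through $\overline{\rrrule}_j(B_k)\le\underline{\rrrule}_j(B_k)+w_j(B_k)$ using only the generic box-bound inequalities. You also merge the paper's two steps, namely that limit points of the centers lie in $D_{i-1}$ and then the contradiction, into a single argument.
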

\begin{proof}
  By Lemma~\ref{lem:box-containment},
  $B_k\subseteq\policies$ and $c(B_k)\in\policies$ for every $k$.
  If $i=1$, then $D_0=\policies$, so
  $\operatorname{dist}(c(B_k),D_0)=0$ for every $k$.

  Suppose $i>1$.
  Consider a sequence $\{B_{k}\}$ with $l(B_{k})\to 0$.
  We first show that every convergent subsequence of $\{c(B_k)\}$ has its limit in $D_{i-1}$.
  Consider an arbitrary subsequence of $\{c(B_k)\}$, denote its indices by
  $k_{1} < k_{2} < \ldots$, and let its limit be $\policy^0$.
  In other words, $c(B_{k_\ell})\to\policy^0$ as $l \to \infty$.
  Since $c(B_{k_\ell})\in\policies$ for every $\ell$ and $\policies$ is compact,
  $\policy^0\in\policies$.

  Now fix an arbitrary $j<i$.
  Since
  $B_{k_\ell}\in\mathcal O(N_{k_\ell})\subseteq\mathcal O^{-}(N_{k_\ell})$,
  the box $B_{k_\ell}$ has already been processed by \textsc{DPrune}
  (line~\ref{alg:domain-prune-init} when $N_{k_\ell}=0$ and
  line~\ref{alg:domain-prune-update} when $N_{k_\ell}\ge1$).
  Hence, $\underline{\rrrule}_j(B_{k_\ell})\le\overline\alpha_j$.
  Combining this with \eqref{eq:v-bounds} and \eqref{eq:v-width} gives
  \begin{align*}
    \rrrule_j(c(B_{k_\ell}))
    &\leq
    \overline{\rrrule}_{j}(B_{k_{\ell}}) \\
    &\leq
    \underline{\rrrule}_j(B_{k_\ell})+w_{j}(B_{k_\ell}) \\
    &\leq
    \overline\alpha_j+w_{j}(B_{k_\ell}).
  \end{align*}
  Since $l(B_{k_\ell})\to0$,
  $w_{j}(B_{k_{\ell}}) = L_j\sqrt d\,l(B_{k_\ell}) \to 0$.
  Moreover,
  since $c(B_{k_\ell})\to\policy^0$ and $\rrrule_j$ is continuous,
  $\rrrule_j(c(B_{k_\ell}))\to\rrrule_j(\policy^0)$.
  Taking limits in the inequality above therefore yields
  $\rrrule_j(\policy^0)\leq\overline\alpha_j$.
  Since $j<i$ was arbitrary, we can conclude that
  $\rrrule_j(\policy^0)\leq\overline\alpha_j$ for all $j<i$.
  Therefore
  \[
    \policy^0\in
    \{\policy\in\policies\mid
    \rrrule_j(\policy)\leq\overline\alpha_j,\ \forall j \in \{1,\ldots,i-1\}\}
    =D_{i-1}.
  \]

  Suppose, for contradiction, that
  $\operatorname{dist}(c(B_k),D_{i-1})\not\to0$.
  Then there exist $\epsilon>0$ and a subsequence, which we relabel as
  $\{B_k\}$, such that
  $\operatorname{dist}(c(B_k),D_{i-1})\geq\epsilon$ for every $k$.
  Since every $c(B_{k}) \in \policies$ for every $k$ and $\policies$ is compact,
  the Bolzano--Weierstrass theorem implies that $\{c(B_{k})\}$ has a convergent subsequence
  $\{c(B_{k_\ell})\}$.
  By the preceding paragraph, $c(B_{k_\ell}) \to \policy^0$ for some $\policy^0\in D_{i-1}$.
  Thus, $\operatorname{dist}(c(B_{k_\ell}), D_{i-1}) \leq \|c(B_{k_\ell}) - \policy^0\|_2 \to 0$.
  But this contradicts $\operatorname{dist}(c(B_{k_\ell}),D_{i-1})\geq\epsilon$ for every $\ell$.
\end{proof}

\begin{theorem}
  \label{thm:algorithmic-gap-convergence}
  Under Assumptions~\ref{ass:exact} and \ref{ass:complete-box-processing},
  $\varepsilon_i(N)\to0$.
\end{theorem}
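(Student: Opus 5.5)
We split on whether level $i$ runs finitely or indefinitely. In the infinite case we squeeze $\overline\alpha_i(N)$ and $\underline\alpha_i(N)$ toward $\alpha_i(D_{i-1})$ separately. The upper certificate is handled by Assumption~\ref{ass:complete-box-processing} and Theorem~\ref{thm:delta-zero}. The lower certificate comes from the shrinking box widths together with Lemma~\ref{lem:domain-consistent-boxes}. Throughout we write $\alpha=\alpha_i(D_{i-1})$. Since $\underline\alpha_i$ is a running maximum of quantities $\min_B\underline{\rrrule}_i(B)\ge 0$, we always have $\underline\alpha_i(N)\ge 0$.

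\emph{Upper side.} Since $\policy_N\in\mathcal C(N)$, \eqref{eq:near-optimal-incumbent} gives
\[
\overline\alpha_i(N)\le\rrrule_i(\policy_N)\le\alpha+\kappa_i\delta_i(N).
\]
Combined with Lemma~\ref{lem:gap-certificate}, this yields $0\le\overline\alpha_i(N)-\alpha\le\kappa_i\delta_i(N)$.

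\emph{Finite termination.} If the level stops through line~\ref{alg:threshold-stop}, the last snapshot has $\mathcal R(N)=\emptyset$ and hence $\delta_i(N)=0$. Since that break is only reached when $\overline\alpha_i(N)=0$, we get $\alpha=0$ and $\varepsilon_i(N)=0-\underline\alpha_i(N)\le 0$. By Lemma~\ref{lem:monotone-running-gap} the gap is nonnegative, so $\varepsilon_i(N)=0$.

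\emph{Infinite run, $\overline\alpha_i$ eventually zero.} Here $\delta_i(N)\to0$ by Theorem~\ref{thm:delta-zero}. Since $\overline\alpha_i(N)$ is nonincreasing, either it becomes $0$ at some snapshot, or it stays positive forever. In the first case the same argument as above gives $\varepsilon_i(N)=0$ from then on.

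\emph{Infinite run, $\overline\alpha_i$ always positive.} This is the main case. By \eqref{eq:refinable-boxes}, $\mathcal R(N)=\mathcal O(N)$, so every box of $\mathcal O(N)$ has $w_i(B)\le\delta_i(N)$. The collection $\mathcal O^-(N+1)$ consists of surviving boxes of $\mathcal O(N)$ and the two children of $B^\star_N$, after \textsc{DPrune}. Hence every box there also has width at most $\delta_i(N)$.

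For each $N$, pick $B_N\in\mathcal O^-(N+1)$ attaining $\min_{B}\underline{\rrrule}_i(B)$. By Lemma~\ref{lem:box-containment}, $c(B_N)\in\policies$. Using \eqref{eq:v-bounds} and \eqref{eq:v-width},
\[
\underline\alpha_i(N+1)\;\ge\;\underline{\rrrule}_i(B_N)\;\ge\;\rrrule_i(c(B_N))-\delta_i(N).
\]

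The key step, and the one I expect to be the main obstacle, is to show $\liminf_N\rrrule_i(c(B_N))\ge\alpha$. Since $l(B_N)\le\delta_i(N)/(L_i\sqrt d)\to0$ (the case $L_i=0$ being trivial), I want Lemma~\ref{lem:domain-consistent-boxes} to give $\operatorname{dist}(c(B_N),D_{i-1})\to0$.

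That lemma is stated for boxes in $\mathcal O(N_k)$, whereas $B_N\in\mathcal O^-(N+1)$. Its proof, however, only uses that the box has passed \textsc{DPrune}, which holds for boxes of $\mathcal O^-(N+1)$. So I would invoke the same argument and note this explicitly.

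Then choose $\policy'_N\in D_{i-1}$ with $\|c(B_N)-\policy'_N\|_2\to0$. The function $\rrrule_i$ is Lipschitz with constant $L_i$, because $[\cdot]_+$ is $1$-Lipschitz. Therefore
\[
\rrrule_i(c(B_N))\;\ge\;\rrrule_i(\policy'_N)-L_i\|c(B_N)-\policy'_N\|_2\;\ge\;\alpha-o(1).
\]
Hence $\underline\alpha_i(N+1)\ge\alpha-o(1)$. Combined with $\underline\alpha_i\le\alpha$ and the upper side,
\[
\varepsilon_i(N)=\bigl(\overline\alpha_i(N)-\alpha\bigr)+\bigl(\alpha-\underline\alpha_i(N)\bigr)\to0.
\]
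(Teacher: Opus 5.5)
Your proof is correct, and its skeleton matches the paper's: the case split on whether $\overline\alpha_i$ reaches zero, the upper side from Assumption~\ref{ass:complete-box-processing} and Theorem~\ref{thm:delta-zero}, and the lower side from shrinking box widths together with Lemma~\ref{lem:domain-consistent-boxes} and Lipschitz continuity of $\rrrule_i$.

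The one genuine difference is which box collection you use for the lower bound.

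\emph{The paper's route.} It bounds $m_N=\min_{B\in\mathcal O(N)}\underline{\rrrule}_i(B)$ over the \emph{post}-pruning collection at snapshot $N$, where $w_i(B)\le\delta_i(N)$ holds immediately. It then needs the separate step \eqref{eq:pre-post-min-equal}: \textsc{CPrune} cannot have removed the minimizing box of the \emph{pre}-pruning collection $\mathcal O^{-}(N)$, which is the one \textsc{UpdateCert} actually uses. That step relies on $M_i\subseteq\bigcup_{B\in\mathcal O(N)}B$ (Lemmas~\ref{lem:Mi}, \ref{lem:cover-invariant}, \ref{lem:current-pruning-preserves-cover}) and on the chain $m_N\le\alpha_i(D_{i-1})\le\overline\alpha_i(N)$.

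\emph{Your route.} You shift the index by one and work with $\mathcal O^{-}(N+1)$. This collection is obtained from $\mathcal O(N)$ only by bisection and \textsc{DPrune}, so it inherits the width bound $\delta_i(N)$. That removes the pre/post-pruning comparison and the coverage argument entirely.

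The cost is that you apply Lemma~\ref{lem:domain-consistent-boxes} to boxes of $\mathcal O^{-}(\cdot)$ instead of $\mathcal O(\cdot)$. As you note, this is legitimate: the lemma's proof only uses that the box has passed \textsc{DPrune}, and it already passes through $\mathcal O(N)\subseteq\mathcal O^{-}(N)$. The one-step index shift is harmless for a limit statement.

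Your lower bound $\underline{\rrrule}_i(B)\ge\rrrule_i(c(B))-w_i(B)$, obtained from \eqref{eq:v-bounds}--\eqref{eq:v-width}, is also slightly more implementation-agnostic than the paper's use of the explicit center formula. Your explicit treatment of termination at line~\ref{alg:threshold-stop} is fine but not needed, since it is covered by the ``eventually zero'' case.

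Overall, your version is a bit shorter. The paper's version keeps every quantity at the same snapshot index.
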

\begin{proof}
  Let $a_i^\star=\alpha_i(D_{i-1})$.
  Lemma~\ref{lem:gap-certificate} and Lemma~\ref{lem:certificate-update-invariants} give,
  for every reachable snapshot $N$,
  \begin{equation}
    \underline\alpha_i(N)\leq a_i^\star\leq\overline\alpha_i(N).
    \label{eq:bounds-alpha}
  \end{equation}

  We first prove the result when $L_i=0$.
  In this case, $q_i$ and $\rrrule_i$ are constant on $\policies$.
  Since $\emptyset\neq\mathcal C(N)\subseteq D_{i-1}$, we have
  $\overline\alpha_i(N)=a_i^\star$.
  Also, $\underline{\rrrule}_i(B)=a_i^\star$ for every $B\in\mathcal O^{-}(N)$, so
  \eqref{eq:anytime-gap}--\eqref{eq:cert-update} give
  $\underline\alpha_i(N)=a_i^\star$.
  Hence $\varepsilon_i(N)=0$ for every $N$.

  Assume $L_i>0$.
  Since $\overline\alpha_i(N)$ is nonnegative and nonincreasing, either
  $\overline\alpha_i(N)=0$ eventually or $\overline\alpha_i(N)>0$ for every $N$.
  We next prove the result when $\overline\alpha_i(N)=0$ eventually.
  By \eqref{eq:bounds-alpha}, $a_i^\star=0$ eventually.
  Since $\underline{\rrrule}_i(B)\geq0$ and \eqref{eq:anytime-gap}--\eqref{eq:cert-update} preserve
  $\underline\alpha_i(N)\geq0$, \eqref{eq:bounds-alpha} gives
  $\underline\alpha_i(N)=0$ eventually.
  Hence $\varepsilon_i(N)=0$ eventually.

  It remains to prove the result when $L_i>0$ and $\overline\alpha_i(N)>0$ for every $N$.
  Then \eqref{eq:refinable-boxes} gives $\mathcal R(N)=\mathcal O(N)$.
  Assumption~\ref{ass:complete-box-processing} gives
  $\policy_N\in\mathcal C(N)\cap D_{i-1}$ such that
  $\rrrule_i(\policy_N)\leq a_i^\star+\kappa_i\delta_i(N)$.
  Since $\policy_N\in\mathcal C(N)$,
  $\overline\alpha_i(N)\leq\rrrule_i(\policy_N)$.
  Therefore
  \begin{equation}
    0\leq \overline\alpha_i(N)-a_i^\star
    \leq \kappa_i\delta_i(N).
    \label{eq:upper-converges-alpha}
  \end{equation}

  We now prove that $\underline\alpha_i(N)\to a_i^\star$.
  Since $\mathcal R(N)=\mathcal O(N)$, \eqref{eq:refinement-width} gives
  $w_i(B)\leq\delta_i(N)$ for every $B\in\mathcal O(N)$.
  By Theorem~\ref{thm:delta-zero}, $\delta_i(N)\to0$,
  and since $w_i(B)=L_i\sqrt d\,l(B)$, we have
  $\max_{B\in\mathcal O(N)}l(B)\to0$.
  Together with Lemma~\ref{lem:domain-consistent-boxes}, this implies
  $\max_{B\in\mathcal O(N)} \operatorname{dist}(c(B),D_{i-1})\to0$.
  Thus,
  $\eta_N=
  \max_{B\in\mathcal O(N)}
  \bigl(\operatorname{dist}(c(B),D_{i-1})+\radius(B)\bigr)
  \to 0$.

  Fix $B\in\mathcal O(N)$ and choose
  $\policy_B\in D_{i-1}$ such that
  $\|c(B)-\policy_B\|_2=\operatorname{dist}(c(B),D_{i-1})$.
  Then,
  \begin{align*}
    \underline{\rrrule}_i(B)
    &=\max\{q_i(c(B))-L_i\radius(B)-\threshold_i,0\} \\
    &\geq \rrrule_i(c(B))-L_i\radius(B) \\
    &\geq \rrrule_i(\policy_B)-L_i\|c(B)-\policy_B\|_2-L_i\radius(B) \\
    &\geq a_i^\star-L_i\eta_N .
  \end{align*}
  Therefore
  \begin{equation}
    m_N=\min_{B\in\mathcal O(N)}\underline{\rrrule}_i(B)
    \geq a_i^\star-L_i\eta_N .
    \label{eq:post-pruning-lower-bound}
  \end{equation}

  We now relate $m_N$ to $\underline\alpha_i(N)$.
  By the snapshot convention, $\mathcal O^{-}(N)$ is the box collection used in
  \textsc{InitCert} when $N=0$ and in \textsc{UpdateCert} when $N\geq1$.
  Thus
  \begin{equation}
    \underline\alpha_i(N)
    \geq
    \min_{B\in\mathcal O^{-}(N)}\underline{\rrrule}_i(B).
    \label{eq:alpha-lower-preprune}
  \end{equation}
  Since $\overline\alpha_i(N)>0$, line~\ref{alg:positive-prune} produces
  $\mathcal O(N)$ from $\mathcal O^{-}(N)$ by removing exactly the boxes satisfying
  $\underline{\rrrule}_i(B)>\overline\alpha_i(N)$.
  Lemma~\ref{lem:Mi} gives
  $M_i\subseteq\widehat D_i(\overline\alpha_i(N))$.
  Lemma~\ref{lem:cover-invariant} and Lemma~\ref{lem:current-pruning-preserves-cover} give
  $\widehat D_i(\overline\alpha_i(N))\subseteq\bigcup_{B\in\mathcal O(N)}B$.
  Hence $M_i\subseteq\bigcup_{B\in\mathcal O(N)}B$.
  Therefore there exist $\policy^m\in M_i$ and $B^m\in\mathcal O(N)$ such that
  $\policy^m\in B^m$.
  By \eqref{eq:v-bounds} and \eqref{eq:bounds-alpha},
  $m_N\leq\underline{\rrrule}_i(B^m)
  \leq \rrrule_i(\policy^m)
  =a_i^\star
  \leq\overline\alpha_i(N)$.
  Therefore every box removed from $\mathcal O^{-}(N)$ has lower bound strictly larger than $m_N$,
  and
  \begin{equation}
    \min_{B\in\mathcal O^{-}(N)}\underline{\rrrule}_i(B)
    =
    \min_{B\in\mathcal O(N)}\underline{\rrrule}_i(B)
    =m_N .
    \label{eq:pre-post-min-equal}
  \end{equation}
  Combining \eqref{eq:post-pruning-lower-bound}, \eqref{eq:alpha-lower-preprune}, and
  \eqref{eq:pre-post-min-equal} gives
  \begin{equation}
    \underline\alpha_i(N)\geq a_i^\star-L_i\eta_N .
    \label{eq:lower-converges-alpha}
  \end{equation}

  Finally, \eqref{eq:bounds-alpha}, \eqref{eq:upper-converges-alpha}, and
  \eqref{eq:lower-converges-alpha} imply
  \[
    0\leq\varepsilon_i(N)
    =\overline\alpha_i(N)-\underline\alpha_i(N)
    \leq \kappa_i\delta_i(N)+L_i\eta_N.
  \]
  Since $\delta_i(N)\to0$ and $\eta_N\to0$, $\varepsilon_i(N)\to0$.
\end{proof}

\begin{theorem}
  \label{thm:returned-incumbent-convergence}
  For every reachable snapshot $N$, let
  $\widehat\policy_N\in\argmin_{\policy\in\mathcal C(N)}\rrrule_i(\policy)$.
  Under Assumptions~\ref{ass:exact} and \ref{ass:complete-box-processing}, if
  $\{N_k\}_{k\geq1}$ is a sequence of reachable snapshot indices with $N_k\to\infty$ and
  $\widehat\policy_{N_k}\to\widehat\policy$, then $\widehat\policy\in M_i$.
\end{theorem}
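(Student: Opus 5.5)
We need to show two things about the limit $\widehat\policy$: that it lies in $D_{i-1}$, and that it attains $\rrrule_i(\widehat\policy)=a_i^\star$, where $a_i^\star=\alpha_i(D_{i-1})$. Together these give $\widehat\policy\in M_i$ by the definition~\eqref{eq:Mi}.

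\emph{Membership.} By Lemma~\ref{lem:certificate-update-invariants}, $\mathcal C(N)\subseteq D_{i-1}$ for every reachable snapshot. Hence $\widehat\policy_{N_k}\in D_{i-1}$ for all $k$. The hypotheses of Lemma~\ref{lem:level-invariants} make $D_{i-1}$ compact, in particular closed. Therefore the limit satisfies $\widehat\policy\in D_{i-1}$.

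\emph{Optimality.} By definition of $\widehat\policy_N$ and~\eqref{eq:anytime-gap}, $\rrrule_i(\widehat\policy_{N})=\overline\alpha_i(N)$. I would show $\overline\alpha_i(N)\to a_i^\star$ in two cases, following the case split in the proof of Theorem~\ref{thm:algorithmic-gap-convergence}.
\begin{enumerate}[label=(\alph*)]
\item If $\overline\alpha_i(N)=0$ eventually, then~\eqref{eq:bounds-alpha} forces $a_i^\star=0$, and the claim is immediate.
\item Otherwise, Assumption~\ref{ass:complete-box-processing} supplies $\policy_N\in\mathcal C(N)$ with $\rrrule_i(\policy_N)\leq a_i^\star+\kappa_i\delta_i(N)$. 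Combined with~\eqref{eq:bounds-alpha}, this gives $0\leq\overline\alpha_i(N)-a_i^\star\leq\kappa_i\delta_i(N)$. Theorem~\ref{thm:delta-zero} gives $\delta_i(N)\to0$, so $\overline\alpha_i(N)\to a_i^\star$.
\end{enumerate}
In case (b) the bound~\eqref{eq:near-optimal-incumbent} holds at every snapshot, so this case needs no separate treatment. The case $L_i=0$ is also covered: there $\delta_i\equiv0$, or $\rrrule_i$ is constant on $\policies$.

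Alternatively, one can bypass the cases using monotonicity. $\overline\alpha_i(N)$ is nonincreasing and bounded below by $a_i^\star$, so it converges to some limit. That limit is at most $a_i^\star+\kappa_i\delta_i(N)$ for every $N$, and $\delta_i(N)\to0$, so it equals $a_i^\star$.

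\emph{Conclusion.} The excess risk $\rrrule_i$ is continuous. Passing to the limit along $N_k$ gives $\rrrule_i(\widehat\policy)=\lim_k\overline\alpha_i(N_k)=a_i^\star$. Together with $\widehat\policy\in D_{i-1}$, this shows $\widehat\policy\in M_i$.

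The main obstacle is the optimality step: confirming that Assumption~\ref{ass:complete-box-processing} and Theorem~\ref{thm:delta-zero} apply along the snapshot sequence without any extra hypothesis. Membership needs only closedness of $D_{i-1}$ and continuity of $\rrrule_i$.
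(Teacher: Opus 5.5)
Your proof is correct and shares the paper's skeleton: closedness of $D_{i-1}$ gives membership, and $\rrrule_i(\widehat\policy_N)=\overline\alpha_i(N)$ plus continuity gives optimality. You obtain $\overline\alpha_i(N)\to\alpha_i(D_{i-1})$ differently, though.

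The paper bounds $\rrrule_i(\widehat\policy_N)-\alpha_i(D_{i-1})$ by the full certified gap $\varepsilon_i(N)=\overline\alpha_i(N)-\underline\alpha_i(N)$ and then invokes Theorem~\ref{thm:algorithmic-gap-convergence}. You use only the upper-certificate squeeze $0\le\overline\alpha_i(N)-\alpha_i(D_{i-1})\le\kappa_i\delta_i(N)$, which is the same inequality as \eqref{eq:upper-converges-alpha} inside that theorem's proof, together with Theorem~\ref{thm:delta-zero}. This squeeze holds at every snapshot, including when $L_i=0$. So, as you partly note yourself, the case split and the monotonicity detour are unnecessary.

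Your route is more economical in hypotheses. It never needs the lower certificate $\underline\alpha_i(N)$ to converge, so Lemma~\ref{lem:domain-consistent-boxes} drops out. So does the lower-bound half of Theorem~\ref{thm:algorithmic-gap-convergence}, which is where Assumption~\ref{ass:exact} enters. The conclusion then rests only on Assumption~\ref{ass:complete-box-processing}, $\delta_i(N)\to0$, $\mathcal C(N)\subseteq D_{i-1}$, and closedness of $D_{i-1}$.

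The paper's route is a one-line reuse of the preceding theorem. It also ties the result to the certificate $\varepsilon_i$ that the algorithm actually reports.
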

\begin{proof}
  By the definition of $\widehat\policy_N$ and $\overline\alpha_i(N)$,
  $\rrrule_i(\widehat\policy_N)=\overline\alpha_i(N)$ for every snapshot $N$.
  By Lemma~\ref{lem:gap-certificate} and
  Lemma~\ref{lem:certificate-update-invariants},
  $0\leq
  \rrrule_i(\widehat\policy_N)-\alpha_i(D_{i-1})
  \leq
  \overline\alpha_i(N)-\underline\alpha_i(N)
  =
  \varepsilon_i(N)$.
  By Theorem~\ref{thm:algorithmic-gap-convergence},
  $\varepsilon_i(N)\to0$.
  Therefore, along the given sequence $N_k\to\infty$,
  $\rrrule_i(\widehat\policy_{N_k})\to\alpha_i(D_{i-1})$.

  It remains to show that the limiting policy $\widehat\policy$
  attains this optimal value.
  Since $\widehat\policy_{N_k}\in\mathcal C(N_k)\subseteq D_{i-1}$ for every $k$ and
  $D_{i-1}$ is compact, the limit satisfies $\widehat\policy\in D_{i-1}$.
  By continuity of $\rrrule_i$,
  $\rrrule_i(\widehat\policy)
  =\lim_{k\to\infty}\rrrule_i(\widehat\policy_{N_k})
  =\alpha_i(D_{i-1})$.
  Since $\widehat\policy \in D_{i-1}$ and achieves the minimum value of
  $\rrrule_i$ over $D_{i-1}$, we can conclude that
  $\widehat\policy\in M_i$.
\end{proof}


\section{Experimental Results}
\label{sec:experiment}
We evaluate Algorithm~\ref{alg:anytime-rulebook} in two complementary settings.
The synthetic benchmark provides a known ground-truth solution for quantitative evaluation,
whereas the highway merging simulation demonstrates the algorithm's practical applicability to
black-box risk-aware control.

\subsection{Synthetic Benchmark}
We first evaluate Algorithm~\ref{alg:anytime-rulebook} on a synthetic single-rule scenario
($m=1$) for which the exact solution is available in closed form,
and compare it against 2 baseline optimizers.
For simplicity, we omit the subscript~1 throughout this subsection and write
$q$, $\rrrule$, $\alpha$, $\threshold$, and $L$ instead of
$q_1$, $\rrrule_1$, $\alpha_1$, $\threshold_1$, and $L_1$.
Because the exact solution is known, this experiment enables direct evaluation of the algorithm's
convergence, certificate quality, and anytime performance.

\noindent
\textbf{Setup.}
The policy space is the box $\policies = [-1,1]^{2}$ and the
risk-evaluation function is the shifted Rosenbrock-type function
\begin{equation}
  q(\policy) = q(x, y)
  = \tfrac{1}{2}\bigl(\tfrac{1}{2} - x\bigr)^{2} + (y - x^{2})^{2}.
  \label{eq:rosenbrock-q}
\end{equation}
This function has a unique minimizer at
$\policy^\star=(0.5,0.25)$, where $q(\policy^\star)=0$.
When $\threshold = 0$, $\alpha(\policies) = 0$,
the rule is satisfied only at $\policy^{\star}$, and
$\policies^{\star}_{1} = \policies_{1} = \{\policy^{\star}\}$
provide an exact solution against which all optimizers can be compared.
Any $\threshold > 0$ enlarges the feasible set to the sublevel set
$\{\policy : q(\policy) \leq \threshold\}$ around $\policy^{\star}$, in which every policy
attains the minimum excess risk $\alpha(\policies) = 0$.

Since $|\partial q/\partial x| = |x - \tfrac{1}{2} - 4x(y-x^{2})| \leq |x - \frac{1}{2}| +
4|x|(|y|+x^{2}) \leq 9.5$ and
$|\partial q/\partial y| = 2|y-x^{2}| \leq 2(|y| + x^{2}) \leq 4$ for all $x,y \in [-1,1]$,
$\|\nabla q(x,y)\|_2 \leq \sqrt{9.5^{2}+4^{2}} \approx 10.31$,
so any $L \geq 10.31$ is a valid Lipschitz constant.
We use the conservative value $L=11$ throughout.

Because $\policies$ is a single closed axis-aligned box,
Assumption~\ref{ass:exact} is satisfied by
initializing the box collection as $\mathcal{O} = \{\policies\}$.
By Lemma~\ref{lem:box-containment} and Lemma~\ref{lem:lipschitz-bounds},
we implement \textsc{Bounds} using the center-based construction in~\eqref{eq:LBUB}
and implement \textsc{CSamples} by returning the centers of newly generated child boxes,
which are automatically certified to be in $\policies$.
The initial incumbent set is
$\mathcal C={c(\policies)}={(0,0)}$.

\noindent
\textbf{Optimizers.}
We compare Algorithm~\ref{alg:anytime-rulebook} against 2 baselines.
The first is a best-first Lipschitz branch-and-bound (B\&B) \cite{Horst:1996:Global}.
It uses the same the box representation as Algorithm~\ref{alg:anytime-rulebook}
but differs in box selection.
Instead of the widest-refinable-box rule on line~\ref{alg:select-box}, it applies the
best-first selection
$B^{\star} \gets \argmin_{B \in \mathcal{O}} \underline\rrrule(B)$.
The selected box is bisected and replaced by its children, while boxes satisfying
$\underline{\rrrule}(B)>\overline\alpha$ are discarded.
This pruning criterion is identical to \textsc{CPrune} on line~\ref{alg:positive-prune} of
Algorithm~\ref{alg:anytime-rulebook}.
We consider two variants of this baseline that differ only in the computation of $\underline\rrrule(B)$.
The \emph{classical} variant treats the excess-risk function $\rrrule$ as an
$L$-Lipschitz function and use
$\underline\rrrule(B) = \rrrule(c(B)) - L\radius(B)$.
The \emph{thresholded} variant instead uses the certified lower bound
$\underline{\rrrule}(B)$ defined in~\eqref{eq:LBUB},
which exploits the threshold structure of $\rrrule$.
Both variants maintain the same incumbent set $\mathcal C$ and the upper certificate
$\overline\alpha$ as in Algorithm~\ref{alg:anytime-rulebook}.

The second baseline is uniform grid search, where
the policy space is discretized into an axis-aligned $n \times n$ regular grid with $n$ equally
spaced values per axis, giving $n^{2}$ grid points.
At iteration $\ell \in \{1, \ldots, n^{2}\}$, the algorithm evaluates $\rrrule$ at
$\policy_{\ell}
= \bigl(-1 + ((\ell{-}1) \bmod n)\,\Delta,\;
-1 + \lfloor (\ell{-}1)/n \rfloor\,\Delta\bigr)$,
where $\Delta = \tfrac{2}{n-1}$.
The incumbent is
$\widehat\policy_{N} = \argmin_{\ell \leq N} \rrrule(\policy_{\ell})$.

We refer to Algorithm~\ref{alg:anytime-rulebook} and both best-first B\&B variants collectively as
the \emph{box-based methods}.
For each optimizer, we record an anytime trace indexed by the iteration counter $N$.
Each snapshot contains the best incumbent $\widehat\policy_N$ and its excess risk
$\rrrule(\widehat\policy_N)$.
For the box-based methods, we additionally record the certified lower and upper certificates
$\underline\alpha(N)$ and $\overline\alpha(N)$, the certified optimality gap
$\varepsilon(N)=\overline\alpha(N)-\underline\alpha(N)$, the current box collection
$\mathcal O(N)$ together with the per-box bounds
$\underline q(B)$ and $\overline q(B)$, and the total Lebesgue measure of the cover,
$|\mathcal O(N)|=\sum_{B\in\mathcal O(N)}\prod_{j=1}^d l_j(B)$.
Since the uniform-grid baseline does not produce certified lower bounds, we use
$\varepsilon(N) = \rrrule(\widehat\policy_{N}) - \alpha(\policies)$,
in place of the certified optimality gap.

The computational budget is specified by the iteration limit $N_{\max}$.
For the box-based methods, one iteration performs one box bisection and two evaluations of $q$ at
the child-box centers, for a total of $2N_{\max}$ black-box evaluations.
For the uniform-grid baseline, one iteration evaluates $q$ at a single grid point.
Unless stated otherwise, we set $N_{\max}=900$ for all methods,
corresponding to a $30\times30$ grid.

\begin{figure}[t]
  \centering
  \includegraphics[width=0.48\columnwidth,trim={2mm 0 0 0},clip]{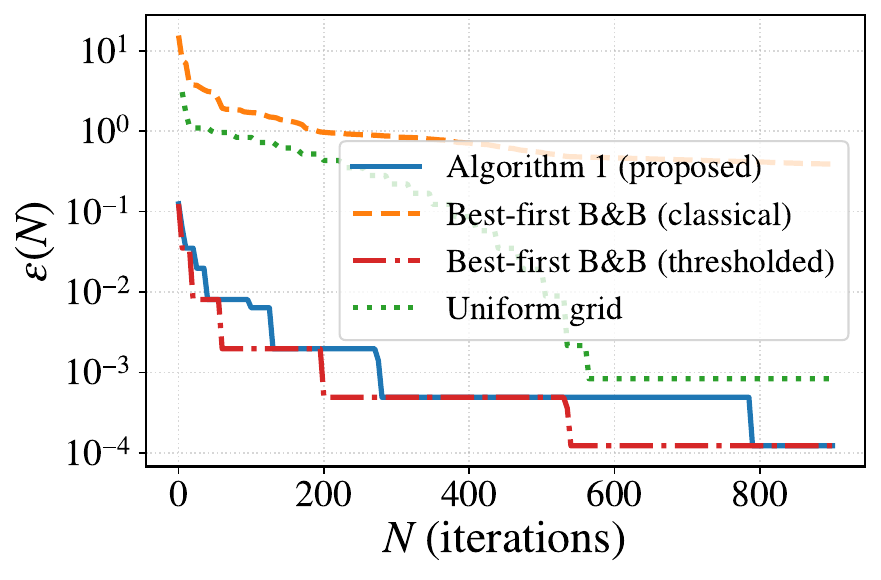}
  \includegraphics[width=0.48\columnwidth,trim={2mm 0 0 0},clip]{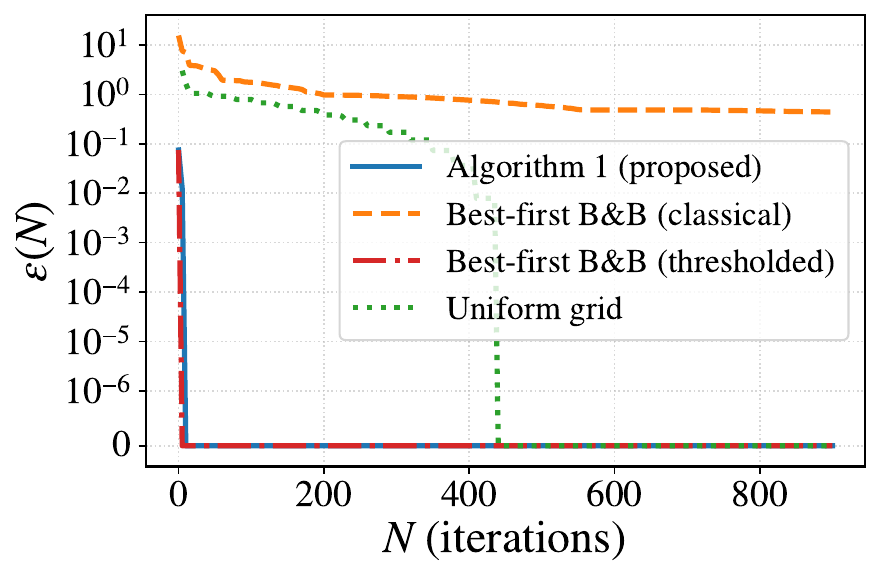}\\
  \includegraphics[width=0.48\columnwidth,trim={2mm 0 0 0},clip]{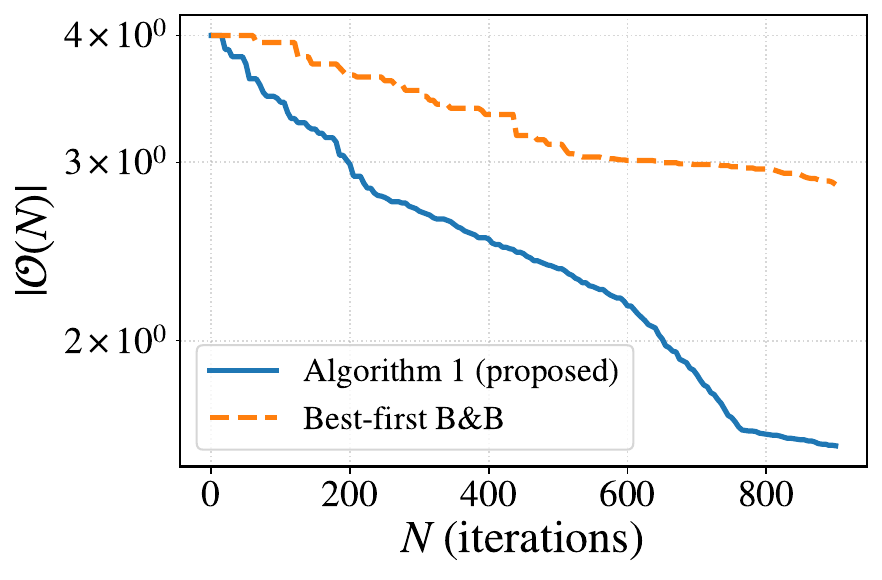}
  \includegraphics[width=0.48\columnwidth,trim={2mm 0 0 0},clip]{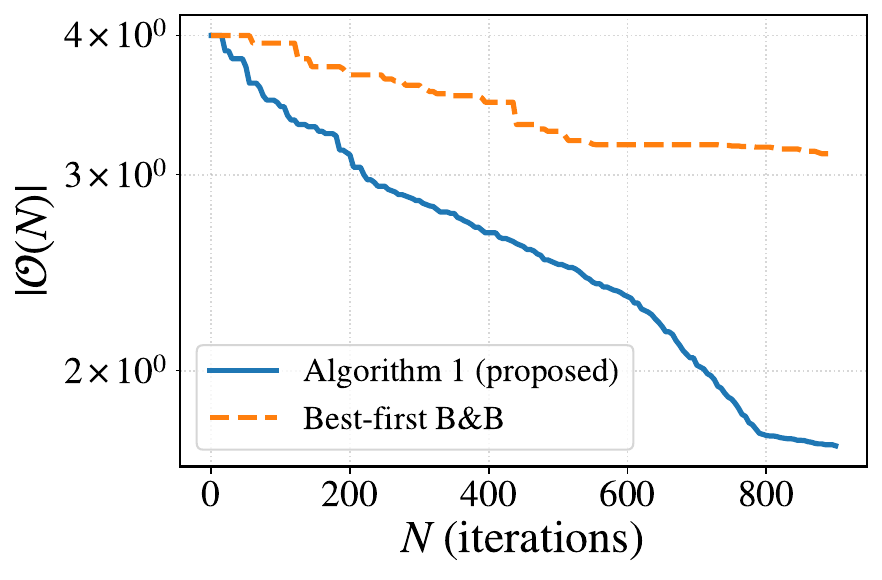}
  \caption{The shifted Rosenbrock-type single-rule scenario.
    The left and right columns correspond to $\threshold = 0$ and $\threshold = 0.05$.
    The top row shows the anytime gap $\varepsilon(N)$.
    The bottom row shows the cover measure $|\mathcal{O}(N)|$.
    The two best-first variants generate the same $\mathcal{O}$,
    so only one curve is shown.}
  \label{fig:risk-aware-gamma0}
\end{figure}

\noindent
\textbf{Results.}
Fig.~\ref{fig:risk-aware-gamma0} compares the 4 optimizers for
$\threshold = 0$ (satisfied only at $\policy^{\star}$)
and $\threshold = 0.05$ (satisfied on a nontrivial sublevel set).
The gap $\varepsilon$ of classical best-first B\&B remains large in both cases
because it repeatedly refines only the most promising box,
leaving many boxes with overly conservative lower bounds $\underline\rrrule(B)$.
Replacing this lower bound by the thresholded construction in~\eqref{eq:LBUB}
removes this conservatism without changing the box collection $\mathcal{O}$ or incumbent set $\mathcal{C}$
since the two variants perform the same refinements and generate the same incumbents.
Algorithm~\ref{alg:anytime-rulebook} achieves the same certified optimality gap as the thresholded baseline in
both experiments,
while the uniform-grid baseline reaches comparable incumbent quality but does not provide a
certified optimality gap.

The benefit of Algorithm~\ref{alg:anytime-rulebook} is most apparent in the cover measure
$|\mathcal O(N)|$, which it contracts substantially faster than both best-first B\&B variants.
This difference reflects the different objectives of the refinement strategies.
Best-first B\&B is designed to locate a global minimizer efficiently,
whereas Algorithm~\ref{alg:anytime-rulebook} is designed to certify
and eliminate regions that cannot contain lexicographically optimal policies.
The widest-refinable-box rule spreads refinement across the competitive region
so that regions far from $\policy^\star$ quickly become uncompetitive and can be pruned.
Best-first B\&B instead focuses refinement on the currently most promising box,
so pruning relies primarily on improvements in the incumbent value $\overline\alpha$.
When $\threshold = 0.05$, Algorithm~\ref{alg:anytime-rulebook} eventually switches from
\textsc{CPrune} to \textsc{TPrune} after an incumbent $\policy$ with $q(\policy) \leq \threshold$ is found,
but the algorithm continues to refine the widest ambiguous boxes and therefore maintains faster
cover contraction than best-first B\&B.

These experiments isolate the contributions of the proposed method.
Comparing the classical and thresholded best-first B\&B isolates the benefit of
exploiting the threshold structure,
while comparing the thresholded best-first B\&B with Algorithm~\ref{alg:anytime-rulebook} isolates
the benefit of the widest-refinable-box rule.
The former improves the certified optimality gap $\varepsilon$,
whereas the latter produces substantially tighter box collection $\mathcal O$.

\subsection{Highway Merging Simulation}
We next consider a highway merging scenario where an ego vehicle merges into an adjacent
lane between a front vehicle and a rear vehicle as shown in
Fig.~\ref{fig:bezier-parameterization}.
The complexity of the uncertain vehicle interactions makes the resulting risk-evaluation
functions impractical to characterize analytically,
so they are treated as black boxes.

\noindent
\textbf{Scenario.}
The ego vehicle starts in its current lane and must merge into a gap between front and rear
vehicles in the adjacent target lane over a fixed time horizon $[0,T]$.
We use $X$ and $Y$ to denote longitudinal and lateral coordinates in a straight-road frame aligned
with the $X$-axis.
The current- and target-lane centerlines are $Y=0$ and $Y=w_{\rm lane}$, respectively, and the
merging corridor is
$\mathcal L=[X_0,X_T]\times[-w_{\rm lane}/2,3w_{\rm lane}/2]$.

Let $\policy$ denote an ego policy, which determines the ego trajectory,
and let $\omega$ denote a realization of the surrounding vehicles' behavior,
which affects how they respond to the ego trajectory.
We define
collision avoidance rules $\rbrule_{1,k}$,
a rear-vehicle braking rule  $\rbrule_2$,
three-second headway rules $\rbrule_{3,k}$, and
comfort rules $\rbrule_{4,\rm acc}$ and $\rbrule_{4, \rm curv}$
for $k \in \{\rm front, \rm rear\}$ as
\begin{equation}
  \begin{aligned}
    \rbrule_{1,k}^\policy(\omega)
    &= \max_{t\in[0,T]} [d_{\rm col}-d_k^\policy(t,\omega)]_+,\\
    \rbrule_2^\policy(\omega)
    &= \max_{t\in[0,T]} [-a_{\rm rear}^\policy(t,\omega)]_+,\\
    \rbrule_{3,k}^\policy(\omega)
    &= \max_{t\in[0,T]}\chi^\policy(t)[T_{\rm gap}-h_{k}^\policy(t,\omega)]_+,\\
    \rbrule_{4,\rm acc}^\policy(\omega)
    &= \max_{t\in[0,T]} \left[\max\left\{
        \frac{a_{\rm ego}^\policy(t)}{a_{\rm comf}},
        \frac{-a_{\rm ego}^\policy(t)}{-b_{\rm comf}}\right\}-1
    \right]_+ \\
    \rbrule_{4,\rm curv}^\policy(\omega)
    &=
    \max_{t\in[0,T]}\left[\frac{|\kappa_{\rm ego}^\policy(t)|}{\kappa_{\rm comf}}-1\right]_+ .
  \end{aligned}
  \label{eq:ex-lane-change-rules}
\end{equation}
Here, $d_{k}^\policy(t,\omega)$ denote the signed clearance
from the ego vehicle to vehicle $k$.
Thus, $\rbrule_{1,k}$ measures violations of the required collision clearance $d_{\rm col}$.
The rear-vehicle acceleration $a_{\rm rear}^\policy(t,\omega)$ defines the braking
magnitude $[-a_{\rm rear}^\policy(t,\omega)]_+$, which is penalized by $\rbrule_2$.
The time headway $h_{k}^\policy$ quantifies the longitudinal gaps to vehicle $k$.
Thus, $\rbrule_{3}$ measures the violations of the desired minimum time headway $T_{\rm gap}$,
while the ego vehicle occupies the target lane, as determined by
the target-lane overlap factor $\chi^\policy(t)\in[0,1]$.
The comfort rules $\rbrule_{4,\rm acc}$ and $\rbrule_{4, \rm curv}$
depend only on the ego trajectory through its longitudinal acceleration
$a_{\rm ego}^\policy(t)$ and curvature $\kappa_{\rm ego}^\policy(t)$, respectively.
The constants $d_{\rm col}$, $T_{\rm gap}$, $a_{\rm comf}$, $b_{\rm comf}$, and
$\kappa_{\rm comf}$ specify the collision
clearance, desired time headway, comfortable acceleration,
braking, and curvature.
The numerical values of these constants are listed in
Table~\ref{tab:lane-change-parameters}.

The preorder relation is defined by
$r_{1,\rm front}, r_{1,\rm rear} > r_{2} > r_{3,\rm front}, r_{3,\rm rear} > r_{4,\rm acc}, r_{4,\rm curv}$,
where $\rbrule_{1,\rm front}$ and $\rbrule_{1,\rm rear}$ have equal rank,
as do $\rbrule_{3,\rm front}$ and $\rbrule_{3,\rm rear}$.
The two comfort rules $\rbrule_{4,\rm acc}$ and $\rbrule_{4,\rm curv}$ also have equal rank.
We refine this rulebook by aggregating rules with equal rank to obtain a total-order rulebook
$\rulebook_T$ with $\ruleset_T={\rbrule_1,\rbrule_2,\rbrule_3,\rbrule_4}$ and
$r_{1} > r_{2}> r_{3}> r_{4}$.
For $i\in\{1,3\}$, the front and rear rules are aggregated using the maximum:
$r_{i}^{\policy}(\omega) =
\max_{k\in\{\mathrm{front},\mathrm{rear}\}}\rbrule_{i,k}^\policy(\omega)$.
The two comfort rules are aggregated using a weighted sum:
$r_{4}^{\policy}(\omega) = \rbrule_{4,\rm acc}^{\policy}(\omega) + \lambda_{\rm curv}\rbrule_{4,\rm curv}^\policy(\omega)$,
where the dimensionless parameter $\lambda_{\rm curv}\geq0$ controls the relative weight
of curvature discomfort.

Under these definitions, insufficient surrounding-vehicle response may lead to violations  of
$\rbrule_1$ and $\rbrule_3$, whereas an overly aggressive rear-vehicle response may violate $\rbrule_2$.
These effects are captured through the corresponding risk-evaluation functions.
For the rules that involve uncertain interactions, we use CVaR as the risk measure:
$q_{i}(\policy) = \CVaR_{\beta_{i}}(\rbrule_{i}^{\policy}(\omega))$ for $i \in \{1, 2, 3\}$,
where $\beta_i\in(0,1)$ is the tail probability.
The comfort rule $\rbrule_4$ depends only on the ego trajectory, so
$\rbrule_4^\policy(\omega)=\rbrule_4^\policy$ for all $\omega$ and
$q_4(\policy)=\rbrule_4^\policy$.

\noindent
\textbf{Flatness-based policy parameterization.}
We exploit differential flatness to reduce the search dimension
and parameterize the flat output by a B\'{e}zier curve in
the Bernstein basis.
The kinematic bicycle model underlying this representation is
defined for a vehicle state $(X,Y,\psi,v)$, where $(X,Y)$ is the position of the rear-axle center, with steering
input $\delta$ and longitudinal acceleration input $a$:
\[
  \dot X = v\cos\psi, \hspace{2mm}
  \dot Y = v\sin\psi, \hspace{2mm}
  \dot\psi = \frac{v}{\ell_{\rm wb}}\tan\delta,
  \dot v = a .
\]
The flat output is the rear-axle position $z(t)=(X(t),Y(t))$.
Let $P_j=(x_j,y_j)\in\mathbb R^2$ denote the Bézier control points.
The degree-$n$ Bézier parameterization of the rear-axle trajectory is
$z_\theta(t)=\sum_{j=0}^{n}P_j(\theta)\beta_{j,n}(t/T)$,
where $\beta_{j,n}(\tau)=\binom{n}{j}\tau^j(1-\tau)^{n-j}$
is the Bernstein basis of degree $n$.
We choose $n=7$ so that, after fixing endpoint positions and velocities, four interior control
points remain as free search variables.
The endpoint position and velocity constraints are imposed by fixing
\begin{align*}
  P_0 &= (X_0,Y_0), &
  P_1 &= (X_0,Y_0) + \frac{T}{7}v_0 e(\psi_0),\\
  P_7 &= (X_T,Y_T), &
  P_6 &= (X_T,Y_T) - \frac{T}{7}v_T e(\psi_T),
\end{align*}
where $e(\psi)=(\cos\psi,\sin\psi)$.
This gives $\dot z_\theta(0)=v_0e(\psi_0)$ and $\dot z_\theta(T)=v_Te(\psi_T)$.
The free control points form the decision vector
$\theta=(P_{2}, \ldots, P_{5})\in\mathbb R^8$.

Let $\Delta X=X_T-X_0>0$, $Y_{\min}=-w_{\rm lane}/2$, and
$Y_{\max}=3w_{\rm lane}/2$.
For each free control point $j\in\{2,3,4,5\}$, define
\[
  x_j^- = X_0 + \frac{j-1/2}{7}\Delta X,
  \qquad
  x_j^+ = X_0 + \frac{j+1/2}{7}\Delta X .
\]
The box of free B\'{e}zier control-point coordinates is
\[
  \Theta_{\rm box}
  = \prod_{j=2}^5\bigl([x_j^-,x_j^+]\times[Y_{\min},Y_{\max}]\bigr).
\]
Thus, as illustrated in Fig.~\ref{fig:bezier-parameterization},
each free control point can move laterally across the full corridor width while remaining
within a longitudinal slab centered at its nominal progress location.
These slabs give each component of $\theta$ a clear geometric interpretation and encourage forward
progress without discretizing the full state-input trajectory.

\begin{figure}[t]
  \centering
  \begin{tikzpicture}[scale=0.78,>=Latex]
    \fill[gray!12] (0,-0.45) rectangle (10.4,1.45);
    \draw[gray!65] (0,-0.45) -- (10.4,-0.45);
    \draw[gray!65,dashed] (0,0.50) -- (10.4,0.50);
    \draw[gray!65] (0,1.45) -- (10.4,1.45);
    \node[gray!70,font=\scriptsize,align=center] at (9.00,0.05)
    {merging\\corridor $\mathcal L$};

    \foreach \x/\lab in {3.22/$P_2$,4.56/$P_3$,5.89/$P_4$,7.23/$P_5$} {
      \draw[orange!45,fill=orange!8] (\x-0.67,-0.28) rectangle (\x+0.67,1.28);
      \node[orange!70!black,font=\scriptsize] at (\x,1.57) {\lab};
    }

    \draw[fill=red!18,draw=red!60!black,rounded corners=0.5pt]
    (1.55,0.78) rectangle (2.25,1.10);
    \node[red!60!black,font=\scriptsize] at (1.90,0.60) {rear car};
    \draw[fill=green!18,draw=green!45!black,rounded corners=0.5pt]
    (8.95,0.78) rectangle (9.65,1.10);
    \node[green!45!black,font=\scriptsize] at (9.30,0.60) {front car};

    \coordinate (P0) at (0.55,0.05);
    \coordinate (P1) at (2.22,0.08);
    \coordinate (P2) at (3.22,0.17);
    \coordinate (P3) at (4.56,0.37);
    \coordinate (P4) at (5.89,0.71);
    \coordinate (P5) at (7.23,0.92);
    \coordinate (P6) at (8.23,0.95);
    \coordinate (P7) at (9.90,0.95);

    \draw[blue,thick] plot[smooth,tension=0.75]
    coordinates {(0.55,0.05) (2.25,0.08) (3.45,0.18) (4.85,0.44)
      (6.20,0.78) (7.55,0.92) (9.90,0.95)};

    \foreach \p in {P0,P1,P6,P7} {
      \fill[black] (\p) circle (1.5pt);
    }
    \foreach \p in {P2,P3,P4,P5} {
      \fill[orange!85!black] (\p) circle (1.7pt);
    }
    \node[above,font=\scriptsize] at (P0) {$P_0$};
    \node[below,font=\scriptsize] at (P1) {$P_1$};
    \node[above,font=\scriptsize] at (P6) {$P_6$};
    \node[above,font=\scriptsize] at (P7) {$P_7$};
    \node[blue,font=\scriptsize] at (5.35,0.35) {$z_\theta(t)$};
    \draw[<->,thin,gray!75] (2.25,1.82) -- (8.95,1.82);
    \node[gray!70,font=\scriptsize,fill=white,inner sep=1pt] at (5.60,2.02) {merge gap};
    \node[orange!70!black,font=\scriptsize,align=center] at (5.05,-0.72)
      {free parameters $\theta=(P_2,P_3,P_4,P_5)$};
  \end{tikzpicture}
  \caption{B\'{e}zier parameterization of the flat output for highway merging.
    The ego vehicle merges into the gap between the rear and front vehicles.
    The fixed control points $P_{0}$, $P_{1}$, $P_{6}$, $P_{7}$
    enforce the boundary positions and velocities.
    The free control points $P_2,\ldots,P_5$ are constrained to lie within the indicated
    longitudinal slabs and form the decision vector $\theta$.}
  \label{fig:bezier-parameterization}
\end{figure}

Assume that $\|\dot z_\theta\|_2>0$ over $[0,T]$.
Given any $\theta$, the vehicle state and inputs are recovered from $z_\theta$ by
\begin{align*}
  v_\theta &= \|\dot z_\theta\|_2, &
  \psi_\theta &= \operatorname{atan2}(\dot Y_\theta,\dot X_\theta),\\
  a_\theta &= \frac{\dot z_\theta^\top \ddot z_\theta}{\|\dot z_\theta\|_2}, &
  \delta_\theta &= \arctan(\ell_{\rm wb}\kappa_\theta),
\end{align*}
where the signed curvature of the rear-axle trajectory is
\begin{equation}
  \kappa_\theta =
  \frac{\dot X_\theta\ddot Y_\theta-\dot Y_\theta\ddot X_\theta}
       {\|\dot z_\theta\|_2^3} .
  \label{eq:bezier-curvature}
\end{equation}
The policy space for this experiment is
\[
  \begin{aligned}
    \policies=\{\theta\in\Theta_{\rm box}\mid
    & z_\theta(t)\in\mathcal L,
      \ v_\theta(t)\le v_{\max},\\
    & -b_{\max}\le a_\theta(t)\le a_{\max},\\
    & |\kappa_\theta(t)|\le\kappa_{\max},
      \ |\delta_\theta(t)|\le\delta_{\max},\\
    & \forall t\in[0,T]\}.
  \end{aligned}
\]
The constants $v_{\max}$, $a_{\max}$, $\kappa_{\max}$, and $\delta_{\max}$ are physical
limits, whereas $a_{\rm comf}\le a_{\max}$ and
$\kappa_{\rm comf}\le\kappa_{\max}$ define softer comfort targets in the rulebook.
The specific parameters used in the experiment are provided in
Table~\ref{tab:lane-change-parameters}.

\noindent
\textbf{Simulation-based oracle.}
In this experiment, the black-box oracle evaluates $q_i(\policy)$ using Monte
Carlo simulation of surrounding-vehicle interactions%
\footnote{In practice, this oracle is usually replaced by a neural network surrogate.}.
For each candidate policy $\policy\in\Theta_{\rm box}$ represented by
the decision vector $\theta$,
the oracle simply recovers the ego states and inputs from the
flatness-based parameterization:
$z^\policy=z_\theta$, $v^\policy=v_\theta$,
$\psi^\policy=\psi_\theta$, $a^\policy=a_\theta$,
$\kappa^\policy=\kappa_\theta$, and $\delta^\policy=\delta_\theta$.
It then checks the feasibility constraints defining $\policies$ and evaluates
the rules on the grid $t_\ell=\ell\Delta t$, $\ell=0,\ldots,K$, with
$K=T/\Delta t$.
At grid time $t_\ell$, the ego footprint center is
\begin{align*}
  c^\policy(t_\ell)
  &= (X_c^\policy(t_\ell),Y_c^\policy(t_\ell))\\
  &= z^\policy(t_\ell)
     +\ell_c(\cos\psi^\policy(t_\ell),\sin\psi^\policy(t_\ell)),
\end{align*}
where $\ell_c$ is the distance from the rear axle to the footprint center of the ego vehicle.
The ego quantities used in the rules \eqref{eq:ex-lane-change-rules} are
$a_{\rm ego}^\policy(t_\ell)=a^\policy(t_\ell)$,
$\kappa_{\rm ego}^\policy(t_\ell)=\kappa^\policy(t_\ell)$, and
$\chi^\policy(t_\ell) = \min\left\{1,
  \frac{[Y_c^\policy(t_\ell)-w_{\rm lane}/2+w_{\rm veh}/2]_+}{w_{\rm veh}}
\right\}$.

The oracle models each vehicle as a rectangle of length $\ell_{\rm veh}$
and width $w_{\rm veh}$.
For each Monte Carlo sample, a behavior realization $\omega$ is drawn
for each surrounding vehicle $k\in\{\mathrm{front},\mathrm{rear}\}$, specifying
its initial-state perturbations $\epsilon_{k}^S$, $\epsilon_{k}^v$,
response mode $M_{k} \in \{\rm{prompt}, \rm{delayed}, \rm{weak}, \rm{nonresponsive}\}$,
response delay $\tau_k$, standstill clearance $\Delta_{0,k}$,
desired time headway $T_k$, response gains $K_{g,k},K_{v,k}$,
maximum acceleration and braking magnitude $a_{k,\max},b_{k,\max}$,
and disturbance sequence $\xi_{k,\ell}$.
The initial longitudinal position and speed are sampled as
$S_{k,0}^\policy(\omega)
= \bar S_{k,0}+\epsilon_{k}^S$ and
$v_{k,0}^\policy(\omega)
= [\bar v_{k,0}+\epsilon_{k}^v]_+$,
where $\bar S_{k,0}$ and speed $\bar v_{k,0}$ are the nominal values listed in
Table~\ref{tab:lane-change-parameters}.
The response mode captures driver responsiveness:
$\rm{prompt}$ drivers have short reaction delay and nominal gains,
$\rm{delayed}$ drivers have longer delay,
$\rm{weak}$ drivers have reduced gains, and
$\rm{nonresponsive}$ drivers have zero gains.
Under $\omega$, vehicle $k$
evolves longitudinally in the target lane and, at time $t_{l}$,
is centered at $(S_{k,\ell}^\policy(\omega),w_{\rm lane})$ with zero heading,
speed $v_{k,\ell}^\policy(\omega)$, and acceleration $a_{k,\ell}^\policy(\omega)$.
At time $t_{l}$, vehicle $k$ perceives the ego state at the delayed time
$t_\ell^k=[t_\ell-\tau_k]_+$.
The perceived longitudinal gaps are
\begin{align*}
  \widehat g_{{\rm front},\ell}^\policy(\omega)
  &= S_{{\rm front},\ell}^\policy(\omega)-X_c^\policy(t_\ell^{\rm front})
     -\ell_{\rm veh},\\
  \widehat g_{{\rm rear},\ell}^\policy(\omega)
  &= X_c^\policy(t_\ell^{\rm rear})-S_{{\rm rear},\ell}^\policy(\omega)
     -\ell_{\rm veh}.
\end{align*}
Let $\Delta_{k,\ell}^{{\rm des},\policy}(\omega)
= \Delta_{0,k}+T_kv_{k,\ell}^\policy(\omega)$
denote the desired clearance of vehicle $k$,
where $\Delta_{0,k}$ is the standstill clearance and $T_kv_{k,\ell}^\policy$ is the
speed-dependent clearance induced by the desired time headway $T_{k}$.
The delayed ego speed perceived by vehicle $k$ is
$u_{k,\ell}^\policy=v^\policy(t_\ell^k)$.
Define the response terms
\begin{align*}
  \Gamma_{{\rm rear},\ell}^\policy(\omega)
  &= K_{g,{\rm rear}}
    (\widehat g_{{\rm rear},\ell}^\policy(\omega)
     -\Delta_{{\rm rear},\ell}^{{\rm des},\policy}(\omega))\\
  &\quad + K_{v,{\rm rear}}
    (u_{{\rm rear},\ell}^\policy-v_{{\rm rear},\ell}^\policy(\omega)),\\
  \Gamma_{{\rm front},\ell}^\policy(\omega)
  &= K_{g,{\rm front}}
    [\Delta_{{\rm front},\ell}^{{\rm des},\policy}(\omega)
     -\widehat g_{{\rm front},\ell}^\policy(\omega)]_+\\
  &\quad + K_{v,{\rm front}}
    [u_{{\rm front},\ell}^\policy-v_{{\rm front},\ell}^\policy(\omega)]_+ .
\end{align*}
The gains $K_{g,k}$ and $K_{v,k}$ characterize driver responsiveness to the desired clearance and to
the speed of the ego vehicle, respectively.
To enforce the acceleration and braking limits,
define $\operatorname{clip}_k(x)=\min\{a_{k,\max},\max\{-b_{k,\max},x\}\}$.
The acceleration of vehicle $k$ is given by
$a_{k,\ell}^\policy(\omega)
= \operatorname{clip}_k\bigl(
\xi_{k,\ell}
+\chi_{k,\ell}^\policy\Gamma_{k,\ell}^\policy(\omega)\bigr)$, where
the disturbance $\xi_{k,\ell}$ represents unmodeled driver variation
and the delayed target-lane overlap factor
$\chi_{k,\ell}^\policy=\chi^\policy(t_\ell^k)$ activates the response as the ego enters the target
lane.
The longitudinal states are propagated using the standard kinematic model:
for $\ell=0,\ldots,K-1$,
\begin{align*}
  S_{k,\ell+1}^\policy(\omega)
  &= S_{k,\ell}^\policy(\omega)+v_{k,\ell}^\policy(\omega)\Delta t
    +\tfrac12 a_{k,\ell}^\policy(\omega)\Delta t^2,\\
  v_{k,\ell+1}^\policy(\omega)
  &= \max\{0,v_{k,\ell}^\policy(\omega)+a_{k,\ell}^\policy(\omega)\Delta t\},
\end{align*}

We now define the quantities used in the rule definitions \eqref{eq:ex-lane-change-rules}.
Let $\mathcal{R}_{\rm{ego},\ell}^{\policy}$ denote a rectangle of length $\ell_{\rm veh}$ and width
$w_{\rm veh}$, centered at $c^\policy(t_\ell)$ with heading $\psi^\policy(t_\ell)$,
and let $\mathcal{R}_{k,\ell}^{\policy}(\omega)$ denote an axis-aligned rectangle,
centered at $(S_{k,\ell}^\policy(\omega),w_{\rm lane})$.
The signed clearances $d_k^\policy(t_\ell,\omega)$ in $\rbrule_{1}$
is the signed distance between
$\mathcal{R}_{\rm{ego},\ell}^{\policy}$ and $\mathcal{R}_{k,\ell}^{\policy}(\omega)$,
positive when they are disjoint and negative when they overlap.
The headways used in $\rbrule_3$ are defined as
\[
  h_{\rm front}^\policy(t_\ell,\omega)
  = \frac{g_{\rm front}^\policy(t_\ell,\omega)}
  {v^\policy(t_\ell)}, \hspace{3mm}
  h_{\rm rear}^\policy(t_\ell,\omega)
  = \frac{g_{\rm rear}^\policy(t_\ell,\omega)}
  {v_{{\rm rear},\ell}^\policy(\omega)},
\]
with the convention that the headway is simply set to $T_{\rm gap}$
when the corresponding speed is zero.
Here,
\begin{align*}
  g_{\rm front}^\policy(t_\ell,\omega)
  &= S_{{\rm front},\ell}^\policy(\omega)-X_c^\policy(t_\ell)-\ell_{\rm veh},\\
  g_{\rm rear}^\policy(t_\ell,\omega)
  &= X_c^\policy(t_\ell)-S_{{\rm rear},\ell}^\policy(\omega)-\ell_{\rm veh}.
\end{align*}
are the realized longitudinal gaps.
A risky ego merge can therefore produce a collision or headway violation when a sampled driver has
large delay, weak gains, unfavorable disturbances, or no response.

Before optimization begins, we draw a fixed scenario set
$\Omega_{\rm MC}=\{\omega_1,\ldots,\omega_{M_{\rm MC}}\}$ and reuse it for every policy evaluation.
This makes $q_{i}$ a deterministic black-box function.
For a fixed policy $\policy$, the oracle rolls out each scenario $\omega_{j} \in \Omega_{\rm MC}$,
computes $\rbrule_i^\policy(\omega_j)$ for $i\in\{1,2,3\}$ based on \eqref{eq:ex-lane-change-rules},
and returns the empirical CVaR at tail probability $\beta_{i}$:
\[
  q_i(\policy)=\min_{\eta\in\mathbb R}\left\{
    \eta + \frac{1}{\beta_{i} M_{\rm MC}}\sum_{j=1}^{M_{\rm MC}}
    [\rbrule_i^\policy(\omega_j)-\eta]_+\right\}.
\]
The comfort rule is deterministic, so $q_4(\policy)=\rbrule_4^\policy$ is evaluated directly from
$z^\policy$.
Since the computation of $q_{i}$ is external to Algorithm~\ref{alg:anytime-rulebook},
we use the number of $q_{i}$ evaluations as the primary black-box budget.
This separates oracle usage from the cost of the optimization procedure.
For computation time, we exclude the time spent evaluating $q_{i}$
and report only the time spent executing Algorithm~\ref{alg:anytime-rulebook}.


\noindent
\textbf{Lipschitz constants.}
Following \cite{Wood:1996:LipschitzEstimation,Malherbe:2017:LIPO},
we employ a sampling-based method to estimate $L_{i}$ for each $i \in \{1, \ldots, 4\}$.
Note that each $\rbrule_{i}^{\policy}(\omega_{j})$ is Lipschitz in $\policy$ over $\Theta_{\rm box}$
because the flatness-recovery and rule computations are Lipschitz on
$\Theta_{\rm box}$, provided the relevant denominators are bounded away from zero.
We therefore draw a set of i.i.d. policy pairs
$\{(\policy_{a}^{(k)},\policy_{b}^{(k)})\}_{k=1}^{N_{\rm pair}}$.
For each $\rbrule_{i} \in \ruleset$ and $\omega_{j} \in \Omega_{\rm MC}$,
we compute the maximum pairwise slope
$\widehat L_{ij}
= \max_{k}
\frac{\bigl|\rbrule_{i}^{\policy_{a}^{(k)}}(\omega_{j})-\rbrule_{i}^{\policy_{b}^{(k)}}(\omega_{j})\bigr|}
{\|\policy_{a}^{(k)}-\policy_{b}^{(k)}\|_{2}}$,
which provides a lower bound on the corresponding Lipschitz constant $L_{ij}$.
Since empirical CVaR cannot change by more than the largest change in any
$\rbrule_{i}^{\policy}(\omega_{j})$,
we can conclude that $q_{i}$ is Lipschitz with constant $L_{i} \leq \max_{j} L_{ij}$.
In practice, we use the inflated estimate
$L_{i} = \alpha \max_{j} \widehat L_{ij}$,
for some $\alpha > 1$.



\begin{table*}[t]
  \caption{Numerical values used in the highway merging simulation.}
  \label{tab:lane-change-parameters}
  \centering
  \scriptsize
  \setlength{\tabcolsep}{4pt}
  \begin{tabular}{p{0.12\textwidth}p{0.22\textwidth}p{0.6\textwidth}}
    \hline
    Group & Quantity & Value \\
    \hline
    Scenario and vehicle
          & Horizon, grid and lane width
                     & $T=5{\rm s}$, $\Delta t=0.1{\rm s}$, $w_{\rm lane}=3.6{\rm m}$ \\
          & Ego boundary conditions
                     & $(X_0,Y_0)=(0,0)$, $(X_T,Y_T)=(65,w_{\rm lane})$, $\psi_0=\psi_T=0$, $v_0=v_T=15{\rm m/s}$ \\
          & Vehicle dimensions
                     & $\ell_{\rm veh}=4.5{\rm m}$, $w_{\rm veh}=2.0{\rm m}$, $\ell_{\rm wb}=2.8{\rm m}$, $\ell_c=1.25{\rm m}$ \\
    \hline
    Physical limits
          & Speed, acceleration, curvature, steering
                     & $v_{\max}=25{\rm m/s}$, $a_{\max}=6{\rm m/s^2}$, $b_{\max}=8{\rm m/s^{2}}$ $\kappa_{\max}=0.25{\rm m^{-1}}$, $\delta_{\max}=0.5{\rm rad}$ \\
    \hline
    Rulebook constants
          & Collision, headway, and comfort
                     & $d_{\rm col}=0$, $T_{\rm gap}=3{\rm s}$,
                       $a_{\rm comf}=1.5{\rm m/s^2}$, $b_{\rm comf}=3.0{\rm m/s^2}$,
                       $\kappa_{\rm comf}=0.15{\rm m^{-1}}$, $\lambda_{\rm curv}=1$ \\
          & Rule thresholds
                     & $\threshold_1=0.015{\rm m}$, $\threshold_2=2.75{\rm m/s^2}$, $\threshold_3=2.0{\rm
                       s}$, $\threshold_4=0$ \\
    \hline
    Surrounding vehicles
          & Nominal initial positions and speeds
                     & $\bar S_{{\rm front},0}=45{\rm m}$, $\bar S_{{\rm rear},0}=-35{\rm m}$, $\bar
                       v_{{\rm front},0}=\bar v_{{\rm rear},0}=15{\rm m/s}$, which is $\sim 5.3{\rm
                       s}$ time gap \\
          & Initial perturbations
                     & $\epsilon_k^S\sim\mathcal N(0,1^2){\rm m}$, $\epsilon_k^v\sim\mathcal
                       N(0,0.5^2){\rm m/s}$ \\
          & Physical limits
                     & $b_{\rm rear, \max}=6{\rm m/s^2}$,
                       $b_{\rm front, \max}=3{\rm m/s^2}$,
                       $a_{\rm rear, \max}=2{\rm m/s^2}$,
                       $a_{\rm front, \max}=1{\rm m/s^2}$\\
          & Disturbances
                     & sample $\tilde\xi_{k,\ell}\sim\mathcal N(0,0.5^2){\rm m/s^2}$
                       and clip to $[-1.5,1.0]{\rm m/s^2}$\\
    \hline
    Rear Response
          & Desired clearance
                     & $\Delta_{0,\rm rear}=5{\rm m}$,
                       $T_{\rm rear}\sim\operatorname{Unif}[1.0,1.6]{\rm s}$ \\
          & Mode probabilities
                     &$\Pr(M_{\rm rear}=\mathrm{prompt})=0.45$,
                       $\Pr(M_{\rm rear}=\mathrm{delayed})=0.25$,\\
          &&$\Pr(M_k=\mathrm{weak})=0.20$,
             $\Pr(M_k=\mathrm{nonresponsive})=0.10$\\
          & Prompt
                     & $\tau_{\rm rear}\sim\operatorname{Unif}[0.2,0.5]{\rm s}$,
                       $K_{g,{\rm rear}}\sim\operatorname{Unif}[0.05,0.15]{\rm s^{-2}}$,
                       $K_{v,{\rm rear}}\sim\operatorname{Unif}[0.2,0.4]{\rm s^{-1}}$ \\
          & Delayed
                     &$\tau_{\rm rear}\sim\operatorname{Unif}[0.8,1.5]{\rm s}$,
                       $K_{g,{\rm rear}}\sim\operatorname{Unif}[0.05,0.15]{\rm s^{-2}}$,
                       $K_{v,{\rm rear}}\sim\operatorname{Unif}[0.2,0.4]{\rm s^{-1}}$ \\

          & Weak
                     & $\tau_{\rm rear}\sim\operatorname{Unif}[0.2,0.5]{\rm s}$
                       $K_{g,{\rm rear}}\sim\operatorname{Unif}[0.0125,0.0375]{\rm s^{-2}}$,
                       $K_{v,{\rm rear}}\sim\operatorname{Unif}[0.05,0.1]{\rm s^{-1}}$ \\
          & Nonresponsive
                     & $\tau_{\rm rear}\sim\operatorname{Unif}[0.2,0.5]{\rm s}$,
                       $K_{g,{\rm rear}}=K_{v,{\rm rear}}=0$\\
    \hline
    Front Response
          & Desired clearance
                     & $\Delta_{0,\rm front}=5{\rm m}$,
                       $T_{\rm front}\sim\operatorname{Unif}[1.0,1.6]{\rm s}$ \\
          & Mode probabilities
                     &$\Pr(M_{\rm front}=\mathrm{prompt})=0.30$,
                       $\Pr(M_{\rm front}=\mathrm{delayed})=0.20$,\\
          &&$\Pr(M_{\rm front}=\mathrm{weak})=0.20$,
             $\Pr(M_{\rm front}=\mathrm{nonresponsive})=0.30$\\
          & Prompt
                     & $\tau_{\rm front}\sim\operatorname{Unif}[0.2,0.5]{\rm s}$,
                       $K_{g,{\rm front}}\sim\operatorname{Unif}[0.0125,0.0375]{\rm s^{-2}}$,
                       $K_{v,{\rm front}}\sim\operatorname{Unif}[0.05,0.1]{\rm s^{-1}}$ \\
          & Delayed
                     &$\tau_{\rm front}\sim\operatorname{Unif}[0.8,1.5]{\rm s}$,
                       $K_{g,{\rm front}}\sim\operatorname{Unif}[0.0125,0.0375]{\rm s^{-2}}$,
                       $K_{v,{\rm front}}\sim\operatorname{Unif}[0.05,0.1]{\rm s^{-1}}$ \\

          & Weak
                     & $\tau_{\rm front}\sim\operatorname{Unif}[0.2,0.5]{\rm s}$
                       $K_{g,{\rm front}}\sim\operatorname{Unif}[0.003125,0.009375]{\rm s^{-2}}$,\\
          &&$K_{v,{\rm front}}\sim\operatorname{Unif}[0.0125,0.025]{\rm s^{-1}}$ \\
          & Nonresponsive
                     & $\tau_{\rm front}\sim\operatorname{Unif}[0.2,0.5]{\rm s}$,
                       $K_{g,{\rm front}}=K_{v,{\rm front}}=0$\\
    \hline
    Monte Carlo oracle
    & Empirical CVaR
                     &$\beta_{1} = 0.05$, $\beta_{2}=0.1$, $\beta_{3}=0.1$, $M_{\rm MC}=200$ \\
    \hline
  \end{tabular}
\end{table*}

\noindent
\textbf{Implementation of Algorithm~\ref{alg:anytime-rulebook}.}
We initialize $\mathcal{O} = \{\Theta_{\rm box}\}$
and $\mathcal{C} = \{\policy_{0}\}$, where
$\policy_{0} \in \policies$ is chosen as a smooth trajectory
following standard techniques for B\'{e}zier curves~\cite{Farin:2001:Curves},
without considering the surrounding vehicles.
The initial excess risk is $\rrrule(\policy_{0}) = (0, 0.28, 0, 0.6)$.
The algorithm is assigned an overall computational budget of $50{\rm ms}$ across all rule levels,
corresponding to a $20{\rm Hz}$ planning rate for real-time operation.
The cumulative computation time includes all operations of Algorithm~\ref{alg:anytime-rulebook},
excluding the time spent evaluating $q_{i}$ through the black-box oracle.
The refinement loop terminates when either the overall time budget is reached or,
for rule levels before the final level, $N_{\rm iter}=50$ iterations are completed.
Once the time budget is reached, the algorithm completes the remaining operations for the current
level and proceeds through subsequent levels, where the refinement loop terminates immediately
without further refinement.
These non-refinement operations add only a small computational overhead.
We use the center-based construction in~\eqref{eq:LBUB} to compute the bounds in
\textsc{Bounds}.
The \textsc{CSamples} helper uses the center of each given box as a candidate and
returns those certified to lie in $D_{i-1}$.

\noindent
\textbf{Results.}
Algorithm~\ref{alg:anytime-rulebook} takes an average of $50.3{\rm ms}$
and issues an average of $370$ oracle calls in total,
with a variation of $\pm 5\%$ across runs.
The returned policy is identical across runs as the final policy is reached within the first
$22{\rm ms}$.

The collision-avoidance level $r_{1}$ starts with
$\overline\alpha_{1} = 0$, as initialized by \textsc{InitCert},
and terminates its refinement loop
in the first iteration at approximately $0.01{\rm ms}$
because \textsc{TRefine} returns an empty refinable set $\mathcal R=\emptyset$.
No refinement is performed, so $\mathcal C = \{\policy_{0}\}$
remains unchanged at the end of the level.
The rear-braking level $r_{2}$ then starts with
$\overline\alpha_2=0.28{\rm m/s^{2}}$
and therefore proceeds in the \textsc{CPrune} branch.
As shown in Figure~\ref{fig:lane-change-convergence-gamma3-beta3},
$\overline\alpha_{2}$ decreases to
$0.06$, and $0{\rm m/s^{2}}$
at $0.46$ and $0.66{\rm ms}$, respectively
(iterations $11$ and $15$).
At the first occurrence of $\overline\alpha_{2}=0$,
the corresponding $q_2$ value is $0.07{\rm m/s^{2}}$ below $\threshold_2$
and the refinement loop switches to the \textsc{TPrune} branch.
Although some subsequent candidates constructed during refinement have smaller values of $q_2$,
$\overline\alpha_2$ remains zero due to the threshold structure.
Throughout the level, the lower certificate remains $\underline\alpha_{2} = 0$ and
the incumbent set grows to $|\mathcal{C}| = 37$.
Level $r_{2}$ terminates at approximately $3.8{\rm ms}$ upon reaching
the $N_{\rm iter}$ iteration limit.
At the beginning of the headway level $r_{3}$,
\textsc{InitCert} finds a policy $\policy \in \mathcal C$ with $q_{3}(\policy) \leq \threshold_{3}$,
and thus initializes $\overline\alpha_{3} = 0$.
The refinement loop then executes the \textsc{TPrune} branch at each iteration
and terminates at approximately $15.6{\rm ms}$ with $|\mathcal{C}| = 10$ upon reaching
the $N_{\rm iter}$ iteration limit.
Finally, the comfort level $r_{4}$ is initialized
with $\overline\alpha_{4} = 0.46$.
The level then runs for the remaining $\sim34{\rm ms}$ with
$\overline\alpha_{4}$ decreases to $0.45$, $0.44$, and $0.40$ at
$21.2$, $21.5$, and $21.9{\rm ms}$, respectively
(iterations $16$, $17$, and $18$).
The final incumbent $\policy^{\star}=\textsc{Best}{(\mathcal C,4)}$
has $\rrrule(\policy^{\star}) = (0, 0, 0, 0.40)$, i.e.,
rules $r_{1}$, $r_{2}$ and $r_{3}$ are all satisfied and $r_{4}$ is violated by $0.40$.
Figs.~\ref{fig:lane-change-convergence-gamma3-beta3} shows the
upper certificates $\overline\alpha_i$ and the size $|\mathcal{C}|$
of the incumbent set as computation time proceeds.

\begin{figure}[t]
  \centering
  \includegraphics[width=\columnwidth]{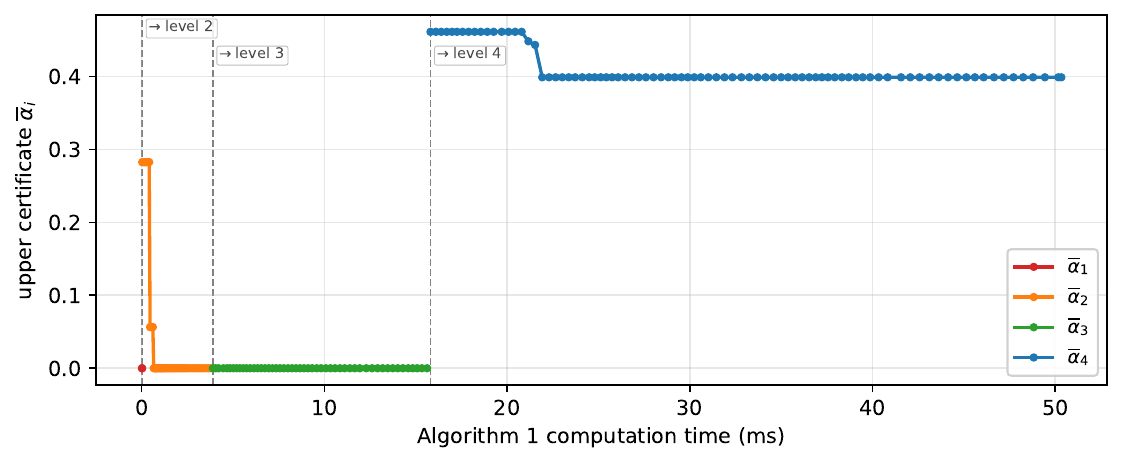}
  \includegraphics[width=\columnwidth]{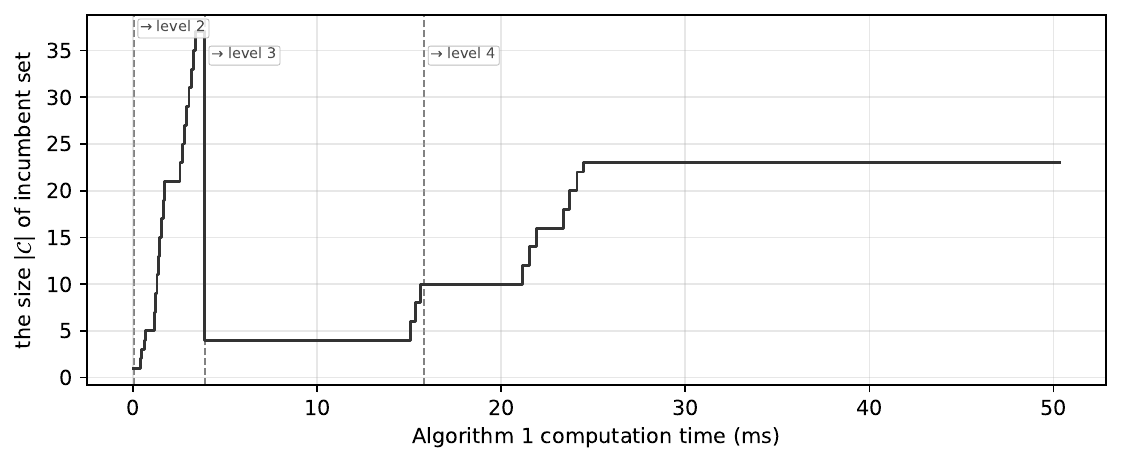}
  \caption{
    Upper certificate $\overline\alpha_i$ for $i=1,\dots,4$ (top) and
    the size $|\mathcal{C}|$ of the incumbent set (bottom)
    vs. Algorithm~\ref{alg:anytime-rulebook}'s computation time.
    $\overline\alpha_{i}$ is recorded at each snapshot while
    level $r_{i}$ is processed.
    $|\mathcal{C}|$ increases during the refinement of each rule level
    and decreases sharply at the end of the level due to the filtering
    $\mathcal{C} \cap \mathcal{D}_{i}$ on line \ref{alg:filter-incumbents}.
  }
  \label{fig:lane-change-convergence-gamma3-beta3}
\end{figure}

\section{Conclusions and Future Work}
This paper presented an anytime algorithm for risk-aware optimal control with rulebooks when the
risk-evaluation functions are treated as black boxes.
The formulation compares policies lexicographically using thresholded excess risks,
allowing each rule to have its own risk measure and threshold within the rulebook priority order.
The algorithm combines Lipschitz branch-and-bound with filtering.
The theoretical results show that the certificates returned by the algorithm provide valid bounds
for any finite computational budget.
Under additional assumptions, the certified optimality gap converges to zero
and every limit point of the returned incumbents is optimal.
The synthetic benchmark illustrates how exploiting the threshold structure and
widest-refinable-box selection improve certificate quality and cover contraction,
while the highway-merging simulation demonstrates
the method in a realistic application.

Several directions remain open for future work.
First, the convergence analysis relies on conservative global Lipschitz constants and sampling
assumptions.
Developing weaker convergence conditions and adaptive local Lipschitz estimates could reduce
conservatism.
Approaches that avoid requiring a known Lipschitz constant altogether provide another promising
direction \cite{Jones:1993:Lipschitzian}.
Second, the current certificates assume deterministic black-box evaluations of $q_i$.
When $q_i$ is estimated from finite Monte Carlo samples or neural surrogates,
statistical uncertainty and surrogate approximation error needs to be incorporated into the box
bounds.
Finally, learning different elements of rulebooks (e.g., rules, thresholds, priorities)
from expert demonstrations or observed safety outcomes will
enable the framework to adapt to different environments and operating conditions.

\bibliographystyle{ieeetr}
\bibliography{ref}

@inproceedings{Censi:2019:Liability,
  author={Andrea Censi and Konstantin Slutsky and Tichakorn Wongpiromsarn and  Dmitry Yershov and Scott Pendleton and James Fu and Emilio Frazzoli},
  booktitle={2019 International Conference on Robotics and Automation (ICRA)},
  title={Liability, Ethics, and Culture-Aware Behavior Specification using Rulebooks},
  year={2019},
  volume={},
  number={},
  pages={8536-8542}
}

@ARTICLE{Wongpiromsarn:2026:Formal,
  author={Wongpiromsarn, Tichakorn and Slutsky, Konstantin and Frazzoli, Emilio},
  journal={IEEE Transactions on Robotics},
  title={Formal Specification and Control Synthesis of Autonomous Robots Using Rulebooks},
  year={2026},
  volume={42},
  number={},
  pages={1330-1350},
  doi={10.1109/TRO.2026.3663976}}

@book{Bertsekas:2019:Reinforcement,
  title     = {Reinforcement Learning and Optimal Control},
  author    = {Bertsekas, Dimitri P.},
  year      = {2019},
  publisher = {Athena Scientific},
  isbn      = {978-1-886529-39-7}
}

@book{Sutton:2018:Reinforcement,
  author    = {Richard S. Sutton and Andrew G. Barto},
  title     = {Reinforcement Learning: An Introduction},
  edition   = {2nd},
  publisher = {MIT Press},
  year      = {2018},
}

@article{Artzner:1999:Coherent,
author = {Artzner, Philippe and Delbaen, Freddy and Eber, Jean-Marc and Heath, David},
title = {Coherent Measures of Risk},
journal = {Mathematical Finance},
volume = {9},
number = {3},
pages = {203-228},
doi = {https://doi.org/10.1111/1467-9965.00068},
year = {1999}
}

@article{Rockafellar:2000:Optimization,
  author={Rockafellar, R. Tyrrell and Uryasev, Stanislav},
  title={Optimization of Conditional Value-at-Risk},
  journal={The Journal of Risk},
  volume={2},
  number={3},
  pages={21--41},
  year={2000},
  doi={10.21314/JOR.2000.038}
}

@InCollection{Pflug:2000:Some,
author="Pflug, Georg Ch.",
editor="Uryasev, Stanislav P.",
title="Some Remarks on the Value-at-Risk and the Conditional Value-at-Risk",
bookTitle="Probabilistic Constrained Optimization: Methodology and Applications",
year="2000",
publisher="Springer US",
address="Boston, MA",
pages="272--281",
isbn="978-1-4757-3150-7",
doi="10.1007/978-1-4757-3150-7_15",
url="https://doi.org/10.1007/978-1-4757-3150-7_15"
}

@InProceedings{Majumdar:2020:How,
author="Majumdar, Anirudha
and Pavone, Marco",
editor="Amato, Nancy M.
and Hager, Greg
and Thomas, Shawna
and Torres-Torriti, Miguel",
title="How Should a Robot Assess Risk? Towards an Axiomatic Theory of Risk in Robotics",
booktitle="Robotics Research",
year="2020",
publisher="Springer International Publishing",
address="Cham",
pages="75--84",
isbn="978-3-030-28619-4"
}

@article{Lindemann:2023:Risk,
author = {Lindemann, Lars and Jiang, Lejun and Matni, Nikolai and Pappas, George J.},
title = {Risk of Stochastic Systems for Temporal Logic Specifications},
year = {2023},
issue_date = {May 2023},
publisher = {Association for Computing Machinery},
address = {New York, NY, USA},
volume = {22},
number = {3},
issn = {1539-9087},
url = {https://doi.org/10.1145/3580490},
doi = {10.1145/3580490},
journal = {ACM Trans. Embed. Comput. Syst.},
month = apr,
articleno = {54},
numpages = {31}
}

@ARTICLE{Wongpiromsarn:2026:Risk,
  author={Wongpiromsarn, Tichakorn},
  journal={IEEE Control Systems Letters},
  title={Risk-Aware Rulebooks for Multi-Objective Trajectory Evaluation Under Uncertainty},
  year={2026},
  volume={10},
  number={},
  pages={397-402},
  doi={10.1109/LCSYS.2026.3696351}}

@inproceedings{Turchetta:2019:Safe,
 author = {Turchetta, Matteo and Berkenkamp, Felix and Krause, Andreas},
 booktitle = {Advances in Neural Information Processing Systems},
 editor = {H. Wallach and H. Larochelle and A. Beygelzimer and F. d\textquotesingle Alch\'{e}-Buc and E. Fox and R. Garnett},
 pages = {},
 publisher = {Curran Associates, Inc.},
 title = {Safe Exploration for Interactive Machine Learning},
 url = {https://proceedings.neurips.cc/paper_files/paper/2019/file/4f398cb9d6bc79ae567298335b51ba8a-Paper.pdf},
 volume = {32},
 year = {2019}
}

@inproceedings{Chow:2015:RiskSensitive,
 author = {Chow, Yinlam and Tamar, Aviv and Mannor, Shie and Pavone, Marco},
 booktitle = {Advances in Neural Information Processing Systems},
 editor = {C. Cortes and N. Lawrence and D. Lee and M. Sugiyama and R. Garnett},
 pages = {},
 publisher = {Curran Associates, Inc.},
 title = {Risk-Sensitive and Robust Decision-Making: a {CVaR} Optimization Approach},
 url = {https://proceedings.neurips.cc/paper_files/paper/2015/file/64223ccf70bbb65a3a4aceac37e21016-Paper.pdf},
 volume = {28},
 year = {2015}
}

@article{Chow:2018:RiskConstrained,
  author  = {Yinlam Chow and Mohammad Ghavamzadeh and Lucas Janson and Marco Pavone},
  title   = {Risk-Constrained Reinforcement Learning with Percentile Risk Criteria},
  journal = {Journal of Machine Learning Research},
  year    = {2018},
  volume  = {18},
  number  = {167},
  pages   = {1-51},
  url     = {http://jmlr.org/papers/v18/15-636.html}
}

@inproceedings{Ahmadi:2021:RiskAverse,
author={Ahmadi, Mohamadreza and Dixit, Anushri and Burdick, Joel W. and Ames, Aaron D.},
booktitle={2021 60th IEEE Conference on Decision and Control (CDC)},
title={Risk-Averse Stochastic Shortest Path Planning},
year={2021},
volume={},
number={},
pages={5199-5204},
doi={10.1109/CDC45484.2021.9683527}
}

@inproceedings{Rigter:2022:Planning,
  title={Planning for Risk-Aversion and Expected Value in {MDPs}},
  author={Rigter, Marc and Duckworth, Paul and Lacerda, Bruno and Hawes, Nick},
  booktitle={32nd International Conference on Automated Planning and Scheduling (ICAPS)},
  volume={32},
  pages={307--315},
  year={2022},
  doi={10.1609/icaps.v32i1.19814}
}

@inproceedings{Wray:2015:Multi-Objective,
author = {Wray, Kyle Hollins and Zilberstein, Shlomo and Mouaddib, Abdel-Illah},
title = {Multi-objective {MDPs} with conditional lexicographic reward preferences},
year = {2015},
isbn = {0262511290},
publisher = {AAAI Press},
booktitle = {Proceedings of the Twenty-Ninth AAAI Conference on Artificial Intelligence},
pages = {3418-3424},
numpages = {7},
location = {Austin, Texas},
series = {AAAI'15},
doi = {10.1609/aaai.v29i1.9647}
}

@article{Rasekhipour:2018:Lexicographic,
title = {Autonomous driving motion planning with obstacles prioritization using lexicographic optimization},
journal = {Control Engineering Practice},
volume = {77},
pages = {235-246},
year = {2018},
issn = {0967-0661},
doi = {https://doi.org/10.1016/j.conengprac.2018.04.014},
url = {https://www.sciencedirect.com/science/article/pii/S0967066118300959},
author = {Yadollah Rasekhipour and Iman Fadakar and Amir Khajepour}
}

@article{Hakobyan:2019:Risk,
  author={Hakobyan, Astghik and Kim, Gyeong Chan and Yang, Insoon},
  journal={IEEE Robotics and Automation Letters},
  title={Risk-Aware Motion Planning and Control Using {CVaR}-Constrained Optimization},
  year={2019},
  volume={4},
  number={4},
  pages={3924-3931},
  doi={10.1109/LRA.2019.2929980}
}

@inproceedings{Nyberg:2021:Risk,
  author={Nyberg, Truls and Pek, Christian and Dal Col, Laura and Norén, Christoffer and Tumova, Jana},
  booktitle={2021 IEEE Intelligent Vehicles Symposium (IV)},
  title={Risk-aware Motion Planning for Autonomous Vehicles with Safety Specifications},
  year={2021},
  volume={},
  number={},
  pages={1016-1023},
  doi={10.1109/IV48863.2021.9575928}}

@ARTICLE{Yang:2026:Risk,
  author={Yang, Xuru and Zhao, Yuqiao and Hu, Yunze and Yang, Zongru and Zhu, Pingping and Sun, Ying and Liu, Chang},
  journal={IEEE Transactions on Automation Science and Engineering},
  title={Risk-Aware and Scalable Hierarchical Motion Planning for Large-Scale Robotic Swarms via {CVaR}-Constrained {MPC}},
  year={2026},
  volume={23},
  number={},
  pages={1683-1700},
  doi={10.1109/TASE.2025.3647635}
  }

@book{Royden:2023:Real,
author = {Halsey L. Royden and Patrick Fitzpatrick},
address = {Hoboken, NJ},
booktitle = {Real analysis},
edition = {5th},
publisher = {Pearson},
title = {Real analysis},
year = {2023},
}

@article{Isermann:1982:Linear,
  title={Linear lexicographic optimization},
  author={Heinz Isermann},
  journal={Operations-Research-Spektrum},
  year={1982},
  volume={4},
  pages={223-228},
}

@book{Miettinen:1999:Nonlinear,
  author    = {Kaisa Miettinen},
  title     = {Nonlinear Multiobjective Optimization},
  series     = {International Series in Operations Research \& Management Science},
  volume      = {12},
  publisher   = {Springer},
  address     = {New York, NY},
  year        = {1999},
  isbn        = {9780792382782},
  doi         = {10.1007/978-1-4615-5563-6}
}

@book{Ehrgott:2005:Multicriteria,
  author    = {Matthias Ehrgott},
  title     = {Multicriteria Optimization},
  edition    = {2},
  publisher  = {Springer},
  address    = {Berlin, Heidelberg},
  year       = {2005},
  isbn       = {9783540213987},
  doi        = {10.1007/3-540-27659-9}
}

@article{Piyavskii:1972:Algorithm,
  author  = {S. A. Piyavskii},
  title   = {An algorithm for finding the absolute extremum of a function},
  journal = {USSR Computational Mathematics and Mathematical Physics},
  volume  = {12},
  number = {4},
  pages = {57--67},
  year = {1972},
  doi = {https://doi.org/10.1016/0041-5553(72)90115-2}
}

@article{Shubert:1972:Sequential,
author = {Shubert, Bruno O.},
title = {A Sequential Method Seeking the Global Maximum of a Function},
journal = {SIAM Journal on Numerical Analysis},
volume = {9},
number = {3},
pages = {379-388},
year = {1972},
doi = {10.1137/0709036}
}

@book{Horst:1996:Global,
    title={Global Optimization: Deterministic Approaches},
    author={Horst, R. and Tuy, H.},
    year={1996},
    edition={3},
    publisher={Springer}
}

@article{Wood:1996:LipschitzEstimation,
  author  = {Wood, G. R. and Zhang, B. P.},
  title   = {Estimation of the {L}ipschitz constant of a function},
  journal = {Journal of Global Optimization},
  volume  = {8},
  pages   = {91-103},
  year    = {1996},
  doi     = {10.1007/BF00229304}
}

@inproceedings{Malherbe:2017:LIPO,
author = {Malherbe, Cedric and Vayatis, Nicolas},
title = {Global optimization of {L}ipschitz functions},
year = {2017},
publisher = {JMLR.org},
booktitle = {Proceedings of the 34th International Conference on Machine Learning - Volume 70},
pages = {2314-2323},
numpages = {10},
location = {Sydney, NSW, Australia},
series = {ICML'17}
}

@book{Farin:2001:Curves,
author = {Farin, Gerald},
title = {Curves and surfaces for CAGD: a practical guide},
year = {2001},
publisher = {Morgan Kaufmann Publishers Inc.},
address = {San Francisco, CA, USA},
edition = {5th}
}

@article{Jones:1993:Lipschitzian,
  author  = {Jones, D. R. and Perttunen, C. D. and Stuckman, B. E.},
  title   = {Lipschitzian optimization without the {L}ipschitz constant},
  journal = {Journal of Optimization Theory and Applications},
  volume  = {79},
  pages   = {157-181},
  year    = {1993},
  doi     = {10.1007/BF00941892}
}

@InProceedings{Sui:2015:Safe,
  title = 	 {Safe Exploration for Optimization with {G}aussian Processes},
  author = 	 {Sui, Yanan and Gotovos, Alkis and Burdick, Joel and Krause, Andreas},
  booktitle = 	 {Proceedings of the 32nd International Conference on Machine Learning},
  pages = 	 {997--1005},
  year = 	 {2015},
  editor = 	 {Bach, Francis and Blei, David},
  volume = 	 {37},
  series = 	 {Proceedings of Machine Learning Research},
  address = 	 {Lille, France},
  month = 	 {07--09 Jul},
  publisher =    {PMLR},
  url = 	 {https://proceedings.mlr.press/v37/sui15.html}
}

\end{document}